\newtheorem{theorem}{Theorem}
\newtheorem{lemma}[theorem]{Lemma}
\newtheorem{corollary}{Corollary}
\newtheorem{proposition}{Proposition}
\newcommand {\be}{\begin{equation}}
\newcommand {\ee}{\end{equation}}
\title{Search Prominence with Costly Product Returns
}
\author{
Sanxi Li\thanks{Corresponding author. \ School of Economics, Renmin University
of China, Beijing, 100872, China. sanxi@ruc.edu.cn}
\and Jun Yu\thanks{Corresponding author. \ School of Economics, Shanghai University
of Finance and Economics; Key Laboratory of Mathematical Economics (SUFE),
Ministry of Education, Shanghai 200433, China}
\and Mingsheng Zhang\thanks{School of Economics, Renmin University of China, Beijing, 100872, China. mingsheng.zhang@ruc.edu.cn}}
\begin{document}

\maketitle

\begin{abstract}

    Search prominence may have a detrimental impact on a firm's profits in the presence of costly product returns. 
    We analyze the impact of search prominence on firm profitability in a duopoly search model, considering the presence of costly product returns. 
    Consumer match values are assumed to be independently and identically distributed across the two products.
    Our results show that the non-prominent firm benefits from facing consumers with relatively low match values for the prominent firm's products, thus avoiding costly returns. 
    When return costs are sufficiently high, the prominent firm may earn lower profits than its non-prominent competitor. 
    This outcome holds under both price exogeneity and price competition. 
    Furthermore, the profitability advantage of prominence diminishes as return costs increase.
    Platforms that maximize ad revenue should consider retaining positive return cost for consumers rather than fully passing it on to firms. 
    For e-commerce platforms, it is crucial to align product return policies with broader management objectives to optimize firm profitability.
    
    \textbf{Keywords: }Consumer Search; Search Cost; Prominence; Product Return;
    Return Cost
    
    \textbf{JEL Classification Number}: D8, L1
    
\end{abstract}
    
\thispagestyle{empty}
    
\clearpage

\setcounter{page}{1}

\section{Introduction}\label{sec:intro}

Advertising and product return policies are two of the critical management strategies for e-commerce platforms \citep{petersen2009product}.
Advertising is typically managed by adjusting the order in which products are displayed on the platform \citep{varian2007position}.
When a product is shown in a prominent position, consumers are more likely to search for it first, leading to increased demand and higher profits for the firm \citep{armstrong2009prominence}. 
As a result, firms have strong incentives to invest in securing these prominent display positions, and platforms benefit from these investments through advertising mechanisms such as auctions, contests, and other strategies. 
More specifically, platforms use concrete methods to create search prominence based on firms' advertising investments. 
On a search page, products are usually listed in a particular order, and consumers tend to scroll through these products sequentially. 
Therefore, the platform can place the product with the highest advertising investment at the top of the list. 
Additionally, the platform can enhance search prominence for top-investing firms through prioritized recommendations and prominent displays. 
According to Statista, digital ad revenues surged to over \$667 billion in 2022, making it one of the platforms' most significant revenue streams.\endnote{Further details can be found at \detokenize{https://datareportal.com/reports/digital-2023-global-overview-report}.}


Product return policies are another crucial management strategy for e-commerce platforms \citep{wood2001remote,ofek2011bricks}. 
In many countries, accepting returns and offering refunds has become a standard practice. 
For instance, in China, the \textit{Consumer Rights Protection Law} and the \textit{Provisional Measures for the Seven-Day No-Reason Return of Goods Purchased Online} grant consumers the right to return goods within seven days of purchase, provided the items are in acceptable condition and meet integrity standards. 
In the European Union, consumers are allowed up to 14 days to return goods unconditionally after online purchases. 
In the United States, while there is no unified law mandating unconditional returns, major e-commerce platforms such as Amazon offer free return services for qualifying products. 
Globally, the e-commerce industry handles approximately \$550 billion in returns each year, incurring nearly \$50 billion in associated costs.\endnote{Further details can be found at \detokenize{https://rla.org/media/article/view?id=1275}.}


Some real-world practices of e-commerce platforms suggest a correlation between advertising strategies and product return policies. 
Take Taobao and Tmall as examples. 
These platforms previously required firms to purchase product return shipping insurance during promotional events. 
This approach shifted the cost of returns from consumers to firms, as the insurance, paid for by firms, would compensate consumers if a return occurred \citep{zhang2022signaling}. 
The goal was to attract consumers by reducing their return burden. 
However, in June 2024, several firms declined to participate in the 618 Shopping Festival organized by these platforms. 
\endnote{The 618 Shopping Festival is an annual mid-year sales event in China, initiated by JD.com to celebrate its anniversary on June 18th. It has since evolved into a major retail phenomenon, with e-commerce platforms such as Taobao and Tmall participating and offering significant discounts across a broad range of products. The festival typically spans from June 1st to June 18th, during which consumers can enjoy various promotions, deals, and discounts, making it the second-largest shopping event in China, following the Singles' Day sale.}
During the 618 promotion event, firms participating in the promotion can receive more traffic distributed by the platform, while merchants not participating will see a significant decrease in traffic. 
However, several firms have stated that if they participate in the platform's promotional activities, they will experience a surge in return rates while gaining traffic, making it difficult to achieve the expected economic benefits, thus refusing to participate in the promotional activities.
\endnote{More details about the firms' concern can be found at \detokenize{https://m.jrj.com.cn/madapter/finance/2024/06/07172040950775.shtml}.}
In response, Taobao and Tmall have introduced a ``Return Treasure'' service to help reduce return costs for firms. 
Additionally, they decoupled shipping insurance from promotions, no longer requiring firms to cover return shipping costs for consumers.
\endnote{More details can be found at \detokenize{https://new.qq.com/rain/a/20240729A068FL00}\\ and \detokenize{https://www.guancha.cn/economy/2024_09_25_749677.shtml?s=zwyxgtjbt}.}

This paper bridges the gap between an e-commerce platform's advertising strategy and its product return policy. 
Using the traditional sequential search framework of \cite{armstrong2009prominence}, we demonstrate that if a platform seeks to increase firms' advertising incentives and maximize advertising revenue, it should reduce the firms' product return costs. 
In some cases, the platform may even benefit from allocating positive return cost to consumers to enhance firms' willingness to invest in advertising.


From an intuitive standpoint, the primary reason for returns and refunds is that consumers cannot fully assess a product's value before purchasing and using it. 
If consumers are dissatisfied with their purchase, returns are likely to occur. 
In traditional consumer search models, such as \cite{armstrong2016search}, consumers can accurately determine a product's value after searching for it, without needing to make a purchase. 
However, in the context of online shopping, consumers often cannot discern the true value of a product during their search, making post-purchase returns more common. 
While platforms provide product information, consumers must still complete the purchase to fully understand the product's value, as they cannot physically try or experience it beforehand. 
Thus, when analyzing search behavior in online shopping, it is crucial to account for potential returns rather than assuming that the search process alone allows consumers to fully gauge the value of a product.


We consider a market with two firms offering similar products at zero cost and a continuum of consumers searching for and purchasing these products. 
A consumer's match value for the two products is assumed to be independently and identically distributed. 
While searching reveals the product price, it does not reveal the match value, which is only realized after the product is purchased \citep{wood2001remote}. 
Consumers incur no search cost for the first product but must pay a fixed cost to search for the second product. 
Since match values are only known post-purchase, product returns occur after the purchase decision. 
In the main results of our model, we assume that firms bear the full cost of product returns, with consumers receiving a full refund upon returning products. 
We also examine the effect of shifting some of the return cost to consumers and analyze the platform's incentive to allocate positive return cost to consumers.

Building on this setup, we examine the profits of the two firms under both exogenously set prices and price competition equilibrium, considering consumer search and return decisions. 
Our findings reveal that the prominent firm may actually earn lower profits. 
While the non-prominent firm faces reduced demand due to search costs, it can partially offset return costs by leveraging information revealed through the prominent firm's experience, potentially gaining a competitive advantage. 
The key economic intuition is that consumers who choose to search for a second product are typically less satisfied with the first product's match value, allowing the non-prominent firm to avoid the return costs from consumers who initially favored the prominent firm's product. 
When return costs are high, the prominent firm's losses from returns can outweigh the benefits of higher demand, resulting in a ``Disadvantage of Prominence'', where the prominent firm earns lower profits. 
Furthermore, we find that as return costs increase, the extra profits earned by the prominent firm diminish, reducing its incentive to pay for prominence, which is termed the ``Weak Disadvantage of Prominence''.

Beyond the main theoretical findings, our model offers practical managerial implications. 
First, for an advertising-driven platform that can allocate product return costs between consumers and merchants, it is preferable to leave a positive portion of the return costs with consumers. 
In other words, the platform should avoid fully shifting return costs onto firms by requiring them to purchase return shipping insurance or similar policies, which are commonly employed by major e-commerce platforms. 
Second, an e-commerce platform should align its product return policy with its broader management goals. 
Specifically, a platform focused on maximizing advertising revenue or industry profits should adopt a more lenient return policy toward merchants, aiming to reduce their return costs. 
Conversely, a platform that prioritizes consumer surplus or trading volume should implement a stricter return policy as a tool for controlling prices.


The extension section broadens our main results by considering correlated match values and settings with observable prices. 
Through a stylized example, we show that correlated match values further amplify the ``Disadvantage of Prominence'' via a new direct mechanism. 
The intuition is that the correlation in match values provides more information to consumers, which reduces the likelihood of returns for the non-prominent firm. 
In the case of observable prices, where consumers can view product prices before initiating the search process, although it becomes more difficult to prove that the prominent firm's profits are lower than those of the non-prominent firm, the ``Weak Disadvantage of Prominence'' remains apparent. 
Furthermore, this setting uncovers interesting pricing patterns driven by price transparency.

\textbf{Related Literature.}
The relevant literature spans two key areas: advertising and product returns.
Advertising has been extensively explored in the management and marketing literature \citep{vakratsas1999advertising,chen2009theory,anderson2013advertising,anderson2013shouting,joo2014television,xu2014price,chandra2018does,shen2018behavior,song2024digital}.
One key approach to modeling advertising is treating it as a determinant of product display order, which is typically decided via auctions for ad positions, as shown by \citep{varian2007position,edelman2007internet}.
\cite{chen2011paid} introduce the concept of “Paid Placement,” where advertisers bid for prominent display positions in keyword searches. Their model embeds auction-based ad positions into a consumer search framework, demonstrating that firms with more relevant products bid higher, thereby signaling relevance through their placement.
\cite{athey2011position} examine similar auctions, focusing on incomplete information and auction design.
\cite{armstrong2011paying} analyze various methods firms use to gain prominence and their impact on market outcomes.
More recently, \cite{armstrong2017ordered} incorporated non-price advertising into a sequential consumer search model, which is closest to the current work.

Traditionally, consumer search models argue that firms with search prominence generally earn higher profits, motivating them to bid for the top position. 
\cite{arbatskaya2007ordered} shows that firms with prominence can charge higher prices and earn more, regardless of price visibility.
\cite{armstrong2009prominence} demonstrate that when there are infinitely many firms, prominence doesn't affect market equilibrium but does increase profits for the prominent firm. When the number of firms is finite, prominent firms tend to set lower prices, increasing industry profits but reducing consumer surplus and welfare. 
\cite{armstrong2011paying} argue that firms earn higher profits with prominence, leading them to pay intermediaries or lower initial prices to secure search prominence.
\cite{rhodes2011can} find that even with low search costs, prominent firms still outperform others.
\cite{zhou2011ordered} concludes that prices tend to rise with the search order, favoring prominent firms.
\cite{jing2016lowering} suggests that consumer learning investments influence search prominence, with higher investments leading to greater prominence and profits.
\cite{armstrong2016search} demonstrate that firms with prominence earn higher profits and have an incentive to discourage consumers from searching competitors.

However, some research suggests that search prominence may not always lead to higher profits.
\cite{fishman2018search} shows that firms with prominence may be disadvantaged if consumers cannot freely recall previously visited firms. 
Unlike their work, we assume that returning to previously visited firms is costless, capturing the ease of revisiting firms in the digital economy. 
\cite{chen2021consumer} introduces the concept of “Blind Buying,” where consumers can purchase products without incurring search costs, showing that prominence can be detrimental when match values are asymmetrically distributed.

In this paper, we extend the traditional consumer search model by introducing return costs, revealing that the firm with search prominence will also face higher return losses. 
As return costs increase, the profit advantage of the prominent firm diminishes and the profits of the prominent firm can even fall below those of the non-prominent firm, providing a novel mechanism supporting the disadvantage of prominence.

The literature on product return policies of firms is also rich. 
\cite{mukhopadhyay2005optimal,anderson2009option} argue that modularization and generous return policies boost revenue but also raise costs through higher return rates and design costs.
\cite{bower2012return}  argue that free return policies significantly increase consumer spending.
\cite{petersen2015perceived} show that satisfactory product return experiences can benefit firms by lowering the customer's perceived risk of current and future purchases.
\cite{shulman2011managing} show that restocking fees can be sustained in competitive environments, with fees being more severe when consumers are less informed.
\cite{liu2022study} study the interaction between social learning and return policies.
\cite{liu2022products} investigate whether dual-channel retailers should allow customers to return items through both channels.

Other studies, such as \cite{bonifield2010product,inderst2015refunds,zhang2022signaling}, show that return policies can signal product quality, with high-quality firms offering more lenient policies.
However, from the perspective of quality risk, \cite{hsiao2012returns} argue that restocking fees may not be monotonic with product quality, meaning a more generous returns policy does not necessarily imply lower quality risk.

In practice, product return policies are not solely determined by firms but also by the e-commerce platform.
Since platform objectives differ from those of individual firms, analyzing platform-driven return policies becomes an interesting area of study.
Additionally, our work seeks to bridge the gap between advertising strategies and product return policies, an aspect that has not been previously explored.

Finally, there is research on the efficient utilization of returned products, including remanufacturing \citep{guide2001managing}, the reverse supply chain design \citep{guide2006time} and whether returns are salvaged by the manufacturer or by the retailer \citep{shulman2010optimal}.



There are two notable papers that combine product returns with consumer search.
\cite{PETRIKAITE2018} examines how return costs affect price competition and welfare in horizontally differentiated markets, finding that as return costs rise, prices fall, consumer welfare increases, but social welfare declines. 
There are two key differences between their work and ours. 
First, in their model, the search order is determined by consumers, whereas in our model, the platform dictates the search order. 
Second, we focus primarily on the impact of return costs borne by firms, while they focus on the return costs faced by consumers.
\cite{janssen2024consumer} treat product returns as a substitute for search, showing that when returns are efficient, the market generates too few returns, underutilizing their welfare-enhancing potential.
In contrast, we treat product returns as an essential step for consumers to determine match values, aligning with the typical e-commerce platform setting described by \citep{wood2001remote}.

The remaining sections of the paper are organized as follows: 
Section \ref{sec:a search model with product returns} details the model settings and consumer search return decisions. 
Sections \ref{sec:prominence exogenous price} and \ref{sec:prominence in equilibrium} introduce the conclusions in price exogeneity and price competition equilibrium. 
Section \ref{sec:product return policy} discusses the product return policies of different types of monopoly e-commerce platform.
Section \ref{sec:extension} extends the main result to correlated match value and observable price settings, followed by a summary of the entire paper in Section \ref{sec:concluding}.

\section{A Search Model with Product Returns}\label{sec:a search model with product returns}

\subsection{Model Setup}\label{subsec:model setup}

There are two firms (Firm 1 and Firm 2) in the market, each selling differentiated products with zero marginal costs. 
Each firm  $i$  sets its own product price  $p_i$  to maximize profits. 
In this paper, search prominence refers to the advantage gained by the firm whose product is inspected first during the consumer search process, making it closely tied to the order in which products are displayed. 
The display order of the two firms' products is determined through a position auction, a method widely used by major search engines like Yahoo and Google. 
Specifically, both firms simultaneously bid for the prominent display position, with the firm that places the higher bid winning the prominent position and paying the amount bid by the other firm. 
The firm with the lower bid pays nothing but has its products displayed in the non-prominent position.


A continuum of consumers enters the market, each purchasing at most one unit of a product. 
A consumer's match value for product $i$ is denoted as $u_i$, where $i=1,2$. 
The match values for the two firms' products are random variables, independently and identically distributed. 
Let $f(\cdot)$ represent the probability density function, and $F(\cdot)$ denote the cumulative density function. 
If a consumer purchases product $i$ at price $p_i$, the utility gained is $u_i - p_i$. 
However, product prices are unobservable to consumers before initiating their search.


After the platform determines the display order of the products, consumers search according to this order and make return decisions based on the realized match values and the corresponding prices. 
As in traditional literature, consumers incur no cost when searching the first product, but must pay a search cost $s > 0$ to search for the second product.


This paper introduces return considerations into the traditional consumer search model, leading to adjustments in how match values are obtained. 
In conventional models, consumers fully know a product's match value after searching for it. 
In our setting, while consumers can observe a product's price after searching, they do not know its match value until after purchase and use. 
In other words, what is considered ``searching'' in the traditional model includes both searching and purchasing in our model \citep{PETRIKAITE2018}.


These assumptions closely align with real-world e-commerce scenarios. 
For instance, when consumers search for a suit on an e-commerce platform, they browse through different sellers, incurring search costs by investing time and effort to click on products and view prices. 
However, consumers cannot fully evaluate whether a suit fits well simply through an online search. 
In most cases, they must purchase and try the suit before determining its actual match value. 
While it's realistic that some match value is partially revealed during the search, we simplify our analysis by focusing on the match value revealed after purchasing, to concentrate on the issue of product returns.


In the benchmark case, we consider a consumer-friendly platform where unsatisfied consumers can return products for a full refund, incurring no return costs themselves. 
Firms, however, bear a return cost of $r$ for each returned product. 
This means consumers will always purchase the product they search for, only considering further searches or returns after the purchase. 
When analyzing how the platform determines its advertising and return policies, we incorporate consumers' return cost into the model.


The sequence of the game unfolds as follows:
First, firms bid for search prominence, which determines the consumer search order.
Second, both firms simultaneously set their product prices.
Third, consumers make their search, purchase, and return decisions.
The consumer decision process operates as follows: 
Consumers first search for and purchase the product of the prominent firm based on the search order. 
They then decide whether to search for the non-prominent firm's product, depending on the match value and price of the first product. 
If they choose not to search further, they keep the prominent firm's product. 
If they search and purchase from the non-prominent firm, they retain the product with the higher net utility (match value minus price) and return the other. 
If both products yield negative net utility, consumers return both.


This model is structured as a perfect information dynamic game, allowing us to derive subgame-perfect Nash equilibria using backward induction. 
First, we analyze consumer behavior given the search order and prices; 
next, we examine the pricing game between the two firms; 
finally, we address the advertising decisions made by the firms.

\subsection{The Consumer's Behavior}

We begin by analyzing consumer decision-making and firm demand given an exogenously determined search order and fixed product prices. 
Assume Firm 1 holds the prominent position, meaning consumers search for and purchase products in the order of first inspecting Firm 1's product, followed by Firm 2's. 
This section, grounded in traditional consumer search theory, outlines consumer behavior in the presence of return costs. 
To simplify the analysis, we assume the match values of both products follow a uniform distribution on the interval $[0,1]$, i.e., $u_i \sim U[0,1]$.


Assuming that Firm 1 and Firm 2 set their prices at $p_1$ and $p_2$, respectively, consumers first search for the price and purchase the product from Firm 1. 
Upon purchasing, a consumer observes the price and the realized match value of Firm 1's product, $u_1 = \hat{u}_1$. 
At this point, the consumer evaluates the potential additional benefits from continuing to search for and possibly purchasing the product from Firm 2. 
The expected utility of doing so is determined by the following expression:

$$
\int_{\hat{u}_1 - p_1 + p_2}^1 \left[u_2 - p_2 - (\hat{u}_1 - p_1)\right] d u_2 - s
$$


Consistent with traditional literature, we assume $s \in \left(0, \frac{1}{8}\right)$. 
Let the marginal consumer, who is deciding whether to continue searching, have a match value of $\hat{u}_m$ for Product 1. 
We define the reservation match value as $a = \hat{u}_m - p_1 + p_2$. 
Here, $a$ represents the minimum match value that a marginal consumer would require to retain the product from Firm 2 when considering a search for its product.
\endnote{After searching and purchasing the product from Firm 1, it is assumed that the marginal consumer derives a net value \( \hat{u}_m - p_1 \), which is higher or equal to 0. 
Otherwise, if \( \hat{u}_m - p_1 \) is less than 0, the expected utility of marginal consumer to search the product of Firm 2 is zero, that is:

\[
\int_{p_2}^{1} (u_2 - p_2) du_2 - s = 0
\]
For a small \( \epsilon > 0 \), the match value with Firm 1 of 2 non-marginal consumers is $\hat{u}_1 = \hat{u}_m - \epsilon<p_1$ and $\hat{u}_1 = \hat{u}_m + \epsilon<p_1$, the expected utility of searching Firm 2 is also:
\[
\int_{p_2}^{1} (u_2 - p_2) du_2 - s = 0
\]
Neither will continue to search for a second product, and their behavior remains consistent with the marginal consumer. 
This contradicts the net value of the marginal consumer being \( \hat{u}_m-p_1\), hence the net value that the marginal consumer obtains from the first product cannot be \( \hat{u}_m-p_1 < 0 \). 
Therefore, the net value that the marginal consumer obtains from the first product must be \( \hat{u}_m-p_1 \geq 0 \). 
Moreover, since \( \int_{\hat{u}_1-p_1+p_2}^{1} (u_2 - p_2)-(\hat{u}_1-p_1) du_2 \) decreases as \( \hat{u}_1-p_1 \) increases, it is required that \( \hat{u}_m-p_1 \geq 0 \) which implies \( \int_{p_2}^{1} (u_2 - p_2) du_2 -s \geq 0 \). 
Considering the situation where the equilibrium price is lower than the monopoly price \( p_m=1/2 \), that is \( p_2 < p_m \), thus \( \int_{p_2}^{1} (u_2 - p_2) du_2-s > \int_{1/2}^{1} (u_2 - 1/2) du_2-s \). 
To satisfy the aforementioned condition, it is only necessary that \( \int_{1/2}^{1} (u_2 - 1/2) du_2-s > 0 \), which simplifies to \( s < 1/8 \), consistent with the previously set parameter range.}
The additional expected gain from continuing to search for and purchase products from Firm 2 is zero, leading to the following equation:
$$
\int_{a}^1 (u_2 - a) d u_2 - s = 0
$$

Solving yields:

\begin{equation}
    a = 1 - \sqrt{2s}
    \label{eq:a_s}
\end{equation}

Given the range of $s$, we have $a \in \left(\frac{1}{2}, 1\right)$. 
Since the marginal consumer's match value for the first product satisfies $a = \hat{u}_m - p_1 + p_2$, we have $\hat{u}_m = a + p_1 - p_2$. 
For consumers with $u_1 < \hat{u}_m$, the decision to continue searching yields a positive additional expected gain, prompting them to search for and purchase products from Firm 2. 
Conversely, for consumers with $u_1 \geq \hat{u}_m$, continuing the search results in a negative additional expected gain, leading them to stop searching and retain the product from Firm 1. 
This analysis yields a behavioral rule for consumer search and product return decisions that mirrors the framework proposed by \cite{weitzman1979optimal}.
\begin{lemma}\textbf{Consumer Search and Return Strategy.}

    
    
    If $u_1 \geq a - p_2 + p_1$, consumers will keep the product from Firm 1 and stop searching.


    If $u_1 < a - p_2 + p_1$, consumers will continue to search and purchase products from Firm 2. If the net utility realized from the products of both firms is negative, i.e., $u_i - p_i < 0, i=1,2$, consumers will return both products; otherwise, consumers will return the product with the lower realized net utility and retain the product with the higher realized net utility.
    \label{lem1}
\end{lemma}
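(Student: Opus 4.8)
The plan is to solve the consumer's decision problem by backward induction within the consumption stage, exploiting the fact that costless full refunds make the continuation payoff after acquiring both goods equal to $\max\{u_1-p_1,\,u_2-p_2,\,0\}$. Working backwards, I would first dispose of the return rule, then of the second-search rule; the threshold form of the latter will be an instance of Weitzman's reservation-value logic specialized to a single remaining ``box''.

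\textbf{Return rule.} Suppose the consumer has purchased both products and privately knows the realized net utilities $u_1-p_1$ and $u_2-p_2$. Because any unit can be returned for a full refund at no cost, the payoff from any disposition of the two units is simply the net utility of whichever unit is retained, or $0$ if both are returned. Maximizing over the three feasible choices $\{u_1-p_1,\,u_2-p_2,\,0\}$ immediately gives: keep the unit with the larger net utility when that maximum is positive, and return both otherwise. The same logic also justifies the modelling convention that a consumer who searches Firm~2 actually buys its product: having already paid the search cost and learned $p_2$, buying is weakly dominant since the option to return is free, so the purchase is never regretted.

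\textbf{Second-search rule.} Condition on a consumer who has bought Firm~1's product and observed $u_1=\hat u_1$; by the argument in the endnote (which uses $s<1/8$), the relevant benchmark for such a consumer is the net value $\hat u_1-p_1\ge 0$ of keeping that unit. Using the return rule, the expected gain from paying $s$, searching Firm~2, and then disposing optimally is
$$\Delta(\hat u_1)\;=\;\int_{\hat u_1-p_1+p_2}^{1}\bigl[u_2-p_2-(\hat u_1-p_1)\bigr]\,du_2\;-\;s .$$
I would then show $\Delta$ is strictly decreasing in $\hat u_1$ on the region where the lower limit does not exceed $1$: differentiating under the integral (the boundary term vanishes because the integrand is zero at the lower endpoint) gives $\Delta'(\hat u_1)=-\bigl(1-(\hat u_1-p_1+p_2)\bigr)<0$, and for $\hat u_1-p_1+p_2\ge 1$ one has $\Delta\equiv-s<0$. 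Hence $\Delta$ crosses zero exactly once. Setting $\Delta(\hat u_m)=0$ and changing variables to $a:=\hat u_m-p_1+p_2$ reduces the indifference condition to $\int_a^1(u_2-a)\,du_2=s$, i.e.\ $a=1-\sqrt{2s}\in(\tfrac12,1)$, so the threshold for $u_1$ is $\hat u_m=a-p_2+p_1$. Monotonicity and single crossing then yield the stated rule: $\Delta(u_1)>0$ (search) when $u_1<a-p_2+p_1$, and $\Delta(u_1)\le 0$ (stop and keep Firm~1) when $u_1\ge a-p_2+p_1$. Equivalently, $a-p_2$ is the Weitzman reservation net value of the second ``box'' and one opens it precisely when the best in-hand net value falls below it.

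The step that needs the most care is not the algebra but the justification that the continuation value of a non-searching consumer is $\hat u_1-p_1$ rather than $\max\{\hat u_1-p_1,0\}$, and that the induced threshold $\hat u_m=a-p_2+p_1$ is interior enough for the characterization to be the operative one; both hinge on the price range pinned down later and on the maintained assumption $s\in(0,\tfrac18)$ (so $a>\tfrac12$), exactly as handled in the endnote. Once that benchmark is fixed, everything else is the monotone-comparison argument above.
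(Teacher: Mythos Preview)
Your proposal is correct and follows exactly the reasoning the paper sketches in the text preceding the lemma (the paper itself records the proof as ``Omitted''). You have made explicit the monotonicity and single-crossing argument for the incremental-gain function $\Delta(\hat u_1)$ and correctly flagged the one delicate point---that consumers with $u_1<p_1$ compare against the outside option $0$ rather than $u_1-p_1$, yet still search because $\int_{p_2}^1(u_2-p_2)\,du_2>s$ under $p_2<\tfrac12$ and $s<\tfrac18$---which is precisely what the paper's endnote handles.
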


\begin{proof}
    Omitted.
\end{proof}

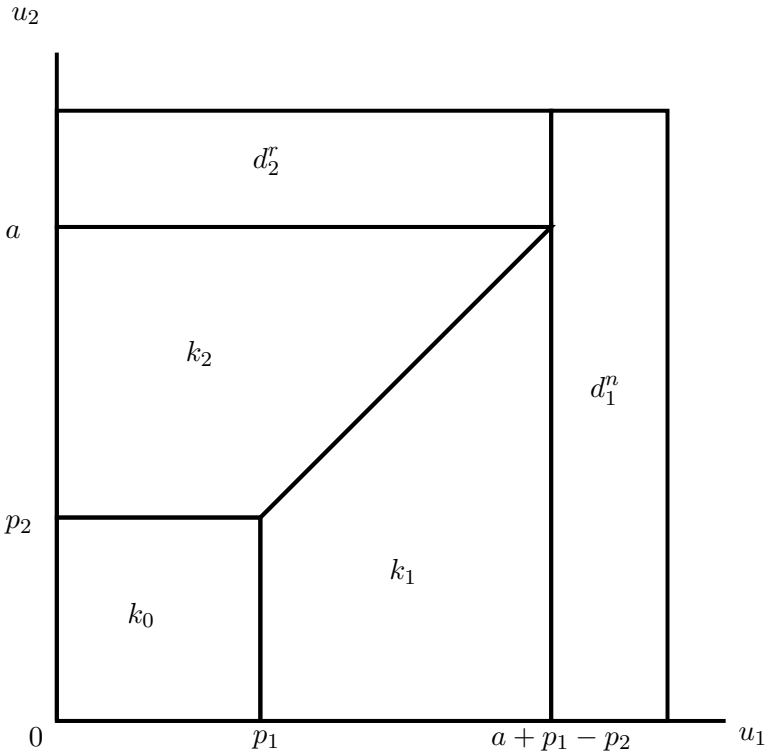
\begin{figure}[H]
    \tikzset{every picture/.style={line width=1.5pt}} 
    \centering
    
    \begin{tikzpicture}[x=0.55pt,y=0.55pt,yscale=-1,xscale=1]
    \draw    (434,74) -- (434,534) ;
    \draw    (434,534) -- (894,534) ;
    \draw   (434,394) -- (574,394) -- (574,534) -- (434,534) -- cycle ;
    \draw   (574,394) -- (774,194) -- (774,534) -- (574,534) -- cycle ;
    \draw   (434,194) -- (774,194)  -- (574,394)--(434,394) -- cycle ;
    \draw   (434,114) -- (774,114) -- (774,194) -- (434,194) -- cycle ;
    \draw   (774,114) -- (854,114) -- (854,534) -- (774,534) -- cycle ;
    
    \draw (412,536) node [anchor=north west][inner sep=0.75pt]   [align=left] {0};
    \draw (400,40) node [anchor=north west][inner sep=0.75pt]   [align=left] {$\displaystyle u_{2}$};
    \draw (730,536) node [anchor=north west][inner sep=0.75pt]   [align=left] {$\displaystyle a+p_{1} -p_{2}$};
    \draw (566,538) node [anchor=north west][inner sep=0.75pt]   [align=left] {$\displaystyle p_{1}$};
    \draw (396,390) node [anchor=north west][inner sep=0.75pt]   [align=left] {$\displaystyle p_{2}$};
    \draw (396,190) node [anchor=north west][inner sep=0.75pt]   [align=left] {$\displaystyle a$};
    \draw (900,536) node [anchor=north west][inner sep=0.75pt]   [align=left] {$\displaystyle u_{1}$};
    \draw (798,294) node [anchor=north west][inner sep=0.75pt]   [align=left] {$\displaystyle d_1^n$};
    \draw (660,420) node [anchor=north west][inner sep=0.75pt]   [align=left] {$\displaystyle k_1$};
    \draw (480,450) node [anchor=north west][inner sep=0.75pt]   [align=left] {$\displaystyle k_0$};
    \draw (520,270) node [anchor=north west][inner sep=0.75pt]   [align=left] {$\displaystyle k_2$};
    \draw (566,136) node [anchor=north west][inner sep=0.75pt]   [align=left] {$\displaystyle d_2^r$};
    \end{tikzpicture}
    \caption{Demand and Return for the Firms}
    \label{fig:demand and return}
    \end{figure}
\subsection{Demand, Product Return and Profit}

Based on the consumer search and return strategies outlined above, we can derive the demand functions for the two firms separately. 
The demand for both firms is illustrated in Figure \ref{fig:demand and return}.

For Firm 1, its demand consists of two components: 
the first component includes consumers who do not search for Firm 2's product and instead retain the product from Firm 1, denoted as  $d_1^n$. 
The second component comprises consumers who search for and purchase Firm 2's product but ultimately decide to return it in favor of keeping Firm 1's product, denoted as  $k_1$. 
Thus, the specific expression for Firm 1's demand can be articulated as follows:
\begin{equation}
    q_1(p_1,p_2) = \underbrace{d_1^n}_{\text{Directly Keep Firm 1's Product}} + \underbrace{k_1}_{\text{Search and Only Return Firm 2's Product}}
    \label{eq:q1}
\end{equation}
where:
$$
\begin{aligned}
d_1^n &= \text{Prob}(u_1 - p_1 \geq a - p_2) \\
&=1-F(a+p_1-p_2)\\
&= (1 - a + p_2 - p_1) \\
k_1 &= \text{Prob}\left(\max\{0,u_2 - p_2\} < u_1 - p_1 < a - p_2\right) \\
&=\int_{p_1}^{a+p_1-p_2}F(u_1-p_1+p_2)f(u_1)d u_1\\
&= \frac{1}{2} \cdot (a^2 - p_2^2)
\end{aligned}
$$

For Firm 2, its demand consists of consumers who choose to return Firm 1's product after searching for and purchasing Firm 2's product. 
To maintain consistency with the demand structure of Firm 1, Firm 2's demand can also be divided into two components: 
the first component includes consumers whose match value for Firm 2's product exceeds the marginal consumer's reservation value, denoted as  $d_2^r$. 
The second component consists of consumers whose match value for Firm 2's product is lower than the reservation value but still greater than the net utility derived from Firm 1's product, denoted as  $k_2$. 
Therefore, the specific expression for Firm 2's demand can be formulated as follows:
\begin{equation}
    q_2(p_1,p_2) = d_2^r + k_2
    \label{eq:q2}
\end{equation}
where:
$$ 
\begin{aligned}
d_2^r &= \text{Prob}(u_1 - p_1 < a - p_2 \text{ and } \tilde{u}_2 > a) \\
&=F(a+p_1-p_2)\left[1-F(a)\right]\\
&= (a - p_2 + p_1)(1 - a) \\
k_2 &= \text{Prob}\left(\max\{0,u_1 - p_1\} < u_2 - p_2 < a - p_2\right) \\
&=\int_{p_2}^{a}F(u_2+p_1-p_2)f(u_2)d u_2\\
&= \frac{a^2 - p_2^2}{2} + (p_1 - p_2)(a - p_2)
\end{aligned}
$$

The profits of the firms consist of the revenue from sales and the losses induced by product returns. 
Let 
$$
\begin{aligned}
    k_0&=\text{Prob}\{u_2 < p_2 \text{ and } u_1 < p_1\}\\
    &=F(p_1)\cdot F(p_2)\\
    &=p_1\cdot p_2.
\end{aligned}
$$
Firm 1's product return losses comprise two components: 
the first component includes consumers who opt to keep Firm 2's product after searching for and purchasing it, thereby returning Firm 1's product. 
The second component consists of consumers whose net utility from both products is negative. 
Based on the aforementioned demand functions
\endnote{Note that the above demand expressions hold under the condition that the following two conditions are satisfied in equilibrium: $p_2 \leq a$, $a - p_2 + p_1 \leq 1$. 
These conditions will be verified in proving the existence and uniqueness of competitive equilibrium. 
When discussing the case of exogenous pricing, it is assumed that these two conditions hold.}, 
the profit of Firm 1 can be expressed as follows:
\begin{equation}
    \begin{aligned}
        \pi_1(p_1,p_2) &= p_1 \cdot q_1(p_1,p_2) - r \cdot q_2(p_1,p_2) - r \cdot k_0\\
        &= p_1(d_1^n+k_1)-r(d_2^r+k_2+k_0)\\
        &= (p_1+r)(d_1^n+k_1)-r
    \end{aligned}
    \label{eq:pi1}
\end{equation}
Firm 2's product return losses also consist of two components: 
the first includes consumers who choose to return Firm 2's product after searching for and purchasing it, while retaining Firm 1's product. 
The second includes consumers whose net utility from both firms' products is negative. 
Therefore, the profit of Firm 2 can be expressed as follows:
\begin{equation}
    \begin{aligned}
        \pi_2(p_1,p_2) &= p_2 \cdot q_2(p_1,p_2) - r \cdot k_1 - r \cdot k_0 \\
        &=p_2(d_2^r+k_2)-r(k_1+k_0)\\
        &=(p_2+r)(d_2^r+k_2)-r(1-d_1^n)
    \end{aligned}
\label{eq:pi2}
\end{equation}

\section{Search Prominence with Exogenous Price}\label{sec:prominence exogenous price}

In many real-world scenarios, firms encounter situations where the prices of their products are not easily adjustable, while the display order of these products is flexible. 
In such cases, the product prices are determined exogenously prior to arranging the display order. 
This section will concentrate on this simplified scenario to intuitively illustrate why a firm with search prominence may experience disadvantages in the context of product return costs.

\subsection{Disadvantage of Search Prominence}


The ``Disadvantage of Search Prominence'' occurs when the profits of the firm with the prominent position are lower than those of the non-prominent firm once return cost is high enough. 
This result, which contradicts traditional consumer search theory, arises because the non-prominent firm can effectively ``free-ride'' on the prominent firm in a significant way.


To illustrate clearly, let us assume that both firms set equal prices, denoted as  $p_1 = p_2 = p$. 
Based on the profit functions in Equations \ref{eq:pi1} and \ref{eq:pi2}, we can derive Proposition \ref{prop:exogenous_price}:

\begin{proposition}\textbf{The Disadvantage of Prominence with Exogenous Price.}

Given the search order and the price $p_1=p_2=p$, when the return cost is high, i.e., $r>\frac{1-a}{a} p$, the profits of the prominent firm will be lower than those of the non-prominent firm, i.e., $\pi_1<\pi_2$.
\label{prop:exogenous_price}
\end{proposition}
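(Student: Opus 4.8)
The plan is to substitute $p_1=p_2=p$ into the demand and profit formulas derived above and reduce the inequality $\pi_1<\pi_2$ to a one-line algebraic condition on $r$.

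First I would simplify the demand components. With $p_1=p_2=p$, the expressions for $d_1^n,k_1,d_2^r,k_2$ collapse to $d_1^n=1-a$, $k_1=\tfrac12(a^2-p^2)$, $d_2^r=a(1-a)$, and $k_2=\tfrac12(a^2-p^2)$ (the cross term $(p_1-p_2)(a-p_2)$ in $k_2$ vanishes, and $k_1=k_2$). Hence $q_1=(1-a)+\tfrac12(a^2-p^2)$ and $q_2=a(1-a)+\tfrac12(a^2-p^2)$, so the prominent firm has strictly larger demand: $q_1-q_2=(1-a)-a(1-a)=(1-a)^2>0$, since $a\in(\tfrac12,1)$ by \eqref{eq:a_s}. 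This records the familiar ``demand advantage of prominence,'' so any profit disadvantage must be driven by the return-loss terms.

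Next I would plug these into \eqref{eq:pi1} and \eqref{eq:pi2}. Using $1-d_1^n=a$, the profits become $\pi_1=(p+r)q_1-r$ and $\pi_2=(p+r)q_2-ra$. Subtracting, the common $\tfrac12(a^2-p^2)$ terms cancel and
\[
\pi_1-\pi_2=(p+r)(q_1-q_2)-r(1-a)=(p+r)(1-a)^2-r(1-a)=(1-a)\bigl[p(1-a)-ra\bigr].
\]
Since $1-a>0$, the sign of $\pi_1-\pi_2$ equals the sign of $p(1-a)-ra$, so $\pi_1<\pi_2$ if and only if $ra>p(1-a)$, i.e. $r>\frac{1-a}{a}p$, which is exactly the stated threshold.

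There is essentially no analytic obstacle here — the argument is a short computation — so the only point requiring care is bookkeeping: the closed-form demand expressions must be used in their region of validity, namely $p_2\le a$ and $a-p_2+p_1\le 1$, which under $p_1=p_2=p$ reduce to $p\le a$ and $p\le 1$; these are taken as maintained hypotheses in the exogenous-price setting (cf.\ the endnote following the demand derivation), so nothing beyond noting this is needed. For interpretation I would also remark that the decomposition $\pi_1-\pi_2=(1-a)\bigl[p(1-a)-ra\bigr]$ isolates the trade-off: $p(1-a)$ is the extra sales revenue the prominent firm earns from its demand advantage $(1-a)^2$ at price $p$, while $ra$ is the extra expected return loss it bears, and prominence becomes a liability precisely when the latter dominates.
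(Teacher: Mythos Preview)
Your proof is correct and follows essentially the same route as the paper: both compute $\pi_1-\pi_2=(1-a)\bigl[p(1-a)-ra\bigr]$ (the paper writes this as $(p-pa-ra)(1-a)$) and read off the sign condition. You simply show more of the intermediate bookkeeping and add the demand-versus-return-loss interpretation, which the paper leaves implicit.
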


\begin{proof}
    See the Appendix \ref{proof:exogenous_price}.
\end{proof}


The above proposition highlights that when product prices are exogenously given, high return costs can result in the prominent firm earning lower profits than the non-prominent firm. 
Importantly, the required return costs fall within a reasonable range. 
In general, when considering exogenous product prices, one might assume that the price acts as a reasonable upper bound for return costs. 
This is because any return costs exceeding the product price could be avoided by a ``Returnless Refund'' strategy.
Specifically, according to Proposition \ref{prop:exogenous_price}, for any  $a \in \left(\frac{1}{2},1\right)$, $\frac{1-a}{a} p < p$. 
Therefore, there always exists a return cost $r$ higher than $\frac{1-a}{a} p$ but lower than the price, which makes the prominent firm's profit lower than the non-prominent firm's.

However, it is not always appropriate to strictly assume that the return cost $r$ must be lower than the price $p$. 
The main reason is that allowing firms to adopt a ``Returnless Refund'' strategy could introduce moral hazard on the consumer side. 
Consumers might request returns even if they are satisfied with the product, leading to inefficiencies. 
Since this issue deviates from the core insight of this paper, we omit such considerations and assume that firms must pay the return cost $r$ when consumers return products.
Moreover, the product return cost here includes not only the expenses incurred from reprocessing the returned items but also other losses associated with returns, such as labor costs, reputation damage, and additional fees imposed by platform policies. 
Consequently, product returns can become so costly that the return costs exceed the product price.


The source of the ``Disadvantage of Search Prominence'' stems from the fact that, when return costs are involved, the non-prominent firm can effectively ``free-ride'' on the prominent firm. 
This free-riding effect is driven by consumer search patterns. 
In Figure \ref{fig:demand and return}, if  $p_1 = p_2 = p$ , then  $k_1 = k_2$ , and  $d_1^n > d_2^r$ . 
As a result, the difference between the profits of Firm 1 and Firm 2 can be expressed as:

$$\pi_1-\pi_2= p(d_1^n-d_2^r)- rd_2^r.$$

This equation reveals that the prominent firm benefits from a larger demand due to search costs, giving it an ``Advantage of Search Prominence''.
However, the prominent firm simultaneously faces a ``Disadvantage of Search Prominence''. 
The reasoning is as follows: 
for the prominent firm, aside from consumers with a net utility of less than zero, it must bear the return costs for any consumers who search, purchase Firm 2's product, and ultimately choose to keep it (i.e.,  $d_2^r + r_2$ ).
The non-prominent firm only incurs return costs for consumers who search and purchase Firm 2's product but decide to return it and keep Firm 1's product (i.e.,  $k_1$ ). 
When prices are equal, Firm 1 bears the return burden for  $d_2^r$  more consumers than Firm 2 does. 
In summary, although search costs boost Firm 1's demand, they also attract consumers with a lower match value for Firm 1's product, while consumers who search for and purchase Firm 1's product may have a higher match value for Firm 2's product. 
Consequently, fewer consumers return Firm 2's product compared to those returning Firm 1's product.

\subsection{Real-world Example of Exogenous Price}
In practice, the exogenous pricing scenario described earlier mirrors situations where the order in which products are displayed can be easily adjusted, but the firms' prices are not as flexible. 
A typical example of this is the promotional strategy employed by flagship stores of major brands on the e-commerce platform. 
According to the auction mechanisms used on the platform, the bids placed by firms in position auctions determine the search order of their products. 
Unlike traditional offline markets, where search processes are often more organic, online platforms use algorithms to determine the search order, making it far easier to adjust.

When it comes to pricing, however, brand firms tend to focus on their overall sales strategy, which is less sensitive to search order on a specific platform. 
As a result, these firms do not frequently alter their pricing strategies in response to changes in their product's display order. 
For instance, companies like Nike, Adidas, Li-Ning, and Anta in the sportswear industry, or Apple, Huawei, and Samsung in the electronics sector, typically set the prices of their products in flagship stores to align with their broader corporate sales strategies. 
These prices remain relatively stable and do not fluctuate significantly based on whether their products are displayed more prominently on a given platform.

In such cases, adjusting the product display order is more feasible than changing product prices, making the assumption of exogenous pricing more reasonable. 
For several competing brand firms on the same platform, they may not always prioritize bidding aggressively to appear first in search results, particularly when return costs are high. 
The reason is that a firm in the prominent position might end up bearing higher return costs, which, as demonstrated earlier, can lead to lower profits compared to a non-prominent firm.

This real-world scenario contrasts sharply with traditional offline sales environments, where the interplay between product pricing and prominence operates differently. 
In an online setting, where displaying order is influenced by platform algorithms and pricing strategies are more rigid, the challenges posed by return costs are amplified for prominent firms. 
Consequently, this dynamic often discourages heavy advertising investments, as being prominently displayed may not always equate to higher profitability.

\section{Search Prominence in Price Equilibrium}\label{sec:prominence in equilibrium}


In the case of exogenous pricing, we have comprehensively explained the advantages and disadvantages faced by firms with search prominence. 
However, in many real-world contexts, product prices are not fixed and can be adjusted flexibly after the display order is determined. 
When price competition is introduced into the model, the prominent firm may still encounter certain disadvantages, particularly due to return costs.


In the scenario where firms can flexibly adjust prices, there is a unique Subgame Perfect Nash equilibrium for the pricing strategies of both firms, given a particular display order and within a certain range of return costs that firms must bear.
\begin{proposition}\textbf{The Existence and Uniqueness of Equilibrium.}

    For $0 \leq r \leq 1$, there exists a unique Subgame Perfect Equilibrium price $(p_1,p_2)$, where $p_i \in [0,p_m]$ for $i=1,2$. 
    Here, $p_m = (1-r)/2$ represents the pricing of the monopolistic firm in the presence of return costs.
    We can also guarantee that $p_i>1-a-r,i=1,2$.

    \label{prop:exis_uniq_equi}
\end{proposition}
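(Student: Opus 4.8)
The plan is to proceed by backward induction on the pricing subgame, treating $p_1$ and $p_2$ as the choice variables and showing the best-response system has a unique fixed point. First I would substitute the explicit demand components $d_1^n, k_1, d_2^r, k_2, k_0$ (all of which are low-degree polynomials in $p_1, p_2$, with $a = 1-\sqrt{2s}$ a fixed constant) into the profit expressions \eqref{eq:pi1} and \eqref{eq:pi2}, obtaining $\pi_1$ and $\pi_2$ as explicit polynomials (quadratic in own price). Differentiating in the own price gives the two first-order conditions; because each $\pi_i$ is quadratic and, as I would check, strictly concave in $p_i$ (the coefficient of $p_i^2$ should be a fixed negative number, independent of the rival's price), the FOCs are necessary and sufficient for the interior best response. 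Solving each FOC for the own price yields two linear reaction functions $p_1 = R_1(p_2)$ and $p_2 = R_2(p_1)$; substituting one into the other produces a single linear equation in one variable, which has a unique solution, and hence a unique candidate equilibrium $(p_1^\ast, p_2^\ast)$ with a closed form in $r$ (and $a$).

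Having the closed form, the second block of work is to verify that this candidate actually lies in the claimed region and that the demand formulas used to derive it are valid there. Concretely I would show: (i) $p_i^\ast \ge 0$; (ii) $p_i^\ast \le p_m = (1-r)/2$; (iii) the two regularity conditions flagged in the endnote, namely $p_2^\ast \le a$ and $a - p_2^\ast + p_1^\ast \le 1$, so that the piecewise demand expressions in Figure~\ref{fig:demand and return} are the correct ones (no boundary/corner regime); and (iv) the extra bound $p_i^\ast > 1 - a - r$. Each of these reduces to checking an inequality between explicit rational functions of $r$ on $[0,1]$ with $a\in(1/2,1)$ fixed; I would handle them by clearing denominators and checking the resulting polynomial inequalities, using monotonicity in $r$ or evaluation at the endpoints $r=0$ and $r=1$ where convenient. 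For uniqueness across the whole strategy space (not just among interior critical points) I would note that outside $[0,p_m]^2$ a firm can profitably deviate toward $p_m$-type prices — e.g. a price above $p_m$ is dominated because even a monopolist facing the full market would not price above $p_m$ once returns cost $r$ — so no equilibrium can have $p_i \notin [0,p_m]$, and within that box the concavity argument already pins down the unique fixed point.

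The main obstacle I anticipate is step (iii)–(iv): ensuring the candidate prices stay in the regime where the assumed demand curves are the right ones, over the entire parameter range $r\in[0,1]$ and all admissible $a$. If $p_2^\ast$ could exceed $a$, or $a-p_2^\ast+p_1^\ast$ could exceed $1$, the demand functions change form and the FOC analysis would have to be redone in the alternative regime; so the crux of the proof is the algebraic verification that the closed-form solution never crosses these thresholds. A secondary subtlety is confirming strict concavity of each $\pi_i$ in its own price uniformly in the rival's price and in $r$ — this is what makes the reaction functions well-defined single-valued maps and the fixed point unique — but since $\pi_i$ is quadratic in $p_i$ this should come down to checking that one explicit coefficient is negative. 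Once concavity, the closed-form fixed point, and the four inequalities are in hand, the proposition follows.
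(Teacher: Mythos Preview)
Your plan breaks down at the step where you assert that the two first-order conditions yield \emph{linear} reaction functions and hence a closed-form fixed point. Neither reaction function is linear in the rival's price. For firm~1, $b_1(p_2)=\tfrac12\bigl[1-a-r+p_2+\tfrac12(a^2-p_2^2)\bigr]$ is quadratic in $p_2$ because $k_1=\tfrac12(a^2-p_2^2)$ enters. For firm~2 the situation is worse: since prices are unobservable before search, the consumer's stopping threshold $a+p_1-p_2$ depends on the \emph{expected} $p_2$, not the deviation price. Evaluating the FOC at the equilibrium point $p=p_2$ gives
\[
(a+p_1-p_2)(1-a-p_2-r)+\tfrac12(a-p_2)(a-p_2+2p_1)=0,
\]
which is quadratic in $p_2$ (the coefficient on $p_2^2$ is $3/2$). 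Substituting $b_1(p_2)$ into this does not produce a linear equation, and indeed the paper states elsewhere that no closed-form equilibrium price is available. So the entire second block of your plan --- verifying inequalities by plugging in an explicit rational function of $r$ --- cannot be carried out as written.

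The paper's proof proceeds without a closed form. It shows that for each $p_1\in[0,a]$ the implicit equation $p_2=1-a-r+k_2/h_2$ (with $h_2=a+p_1-p_2$) has a unique solution $b_2(p_1)\in[0,(1-r)/2]$ by monotonicity of $k_2/h_2$ in $p_2$ and the intermediate value theorem; similarly $b_1$ maps $[0,(1-r)/2]$ into itself. Existence of a fixed point then follows from continuity, and uniqueness from the fact that $b_1'(p_2)\in[0,1/2]$, so that in the composite equation determining $p_2$ the left side is increasing while the right side is decreasing. The bounds $p_i\le p_m$ and $p_i>1-a-r$ are read off directly from the FOCs in the form $p_1=1-a-r+\tfrac12[k_1+k_2/h_2]$ and $p_2=1-a-r+k_2/h_2$, not from a closed-form solution. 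You would need to replace the ``solve and substitute'' step with this kind of qualitative fixed-point argument.
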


\begin{proof}
    See the Appendix \ref{proof_of_exis_uniq_equi}.
\end{proof}


Proposition \ref{prop:exis_uniq_equi} establishes that when return costs fall within a specified range, there exists a unique Subgame Perfect Equilibrium in the price competition game between two firms. 
Moreover, the prices set by both firms in this equilibrium are lower than the optimal price that a monopolistic firm would choose. 
The upper limit for the return cost of $1$ represents the highest level at which a monopoly can still achieve non-negative profits. 
If the return cost exceeds $1$, a monopoly firm would set a price of $p=0$, leading to negative profits, indicating that such a return cost is unreasonable.


After establishing the existence and uniqueness of the equilibrium, we will first describe the equilibrium prices set by the firms, followed by a proof of the main result.

\subsection{Characterization of Firms Pricing}


Although prices are unobservable before consumers search for products, consumers will still form expectations about prices in equilibrium. 
Since consumers hold identical prior beliefs regarding the match value distribution of both products, rational consumers will naturally follow the displayed search order if they expect the price of the prominent product to be lower. 
Otherwise, if consumers anticipate higher prices for the prominent product, they may deviate from the displayed search order, requiring stricter enforcement by the platform.

Lemma \ref{lem:relative_price} shows that, in equilibrium, the price of the prominent product is lower than that of the non-prominent product as long as $p_2$ is positive. 
This result aligns with findings from \cite{armstrong2009prominence}. 
Therefore, the platform does not need to impose restrictions to ensure that consumers follow the displayed search order.

\begin{lemma}\textbf{Relative Pricing of Firms.}

    As long as $p_2>0$, the prominent firm will always offer a lower price, i.e., $p_1 < p_2$. 
    \label{lem:relative_price}
\end{lemma}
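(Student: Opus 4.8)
The plan is to characterize the pricing subgame by its two first-order conditions and then read off the sign of $p_1-p_2$, using the bounds already established in Proposition \ref{prop:exis_uniq_equi}.

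I would first record Firm 1's condition, which is immediate: in $q_1=d_1^n+k_1$ the term $k_1=\tfrac12(a^2-p_2^2)$ does not depend on $p_1$ and $d_1^n=1-a-p_1+p_2$, so $\partial q_1/\partial p_1=-1$, and from $\pi_1=(p_1+r)q_1-r$ Firm 1's FOC is simply $q_1=p_1+r$. The step I expect to be the main obstacle is Firm 2's condition. Because prices are not observed before search, the consumers who reach Firm 2 and buy its product are those with $u_1<a+p_1-p_2^{e}$, a cutoff governed by the \emph{anticipated} price $p_2^{e}$ (with $p_2^{e}=p_2$ in equilibrium); a unilateral deviation by Firm 2 leaves this cutoff — and hence the mass $a+p_1-p_2^{e}$ of its buyers — unchanged, so only the fraction of those buyers who keep the good responds to $p_2$. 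Differentiating $q_2$ with the cutoff frozen at $p_2^{e}$ and then imposing $p_2^{e}=p_2$ gives $\partial q_2/\partial p_2=-(a+p_1-p_2)$, so, from $\pi_2=(p_2+r)q_2-r(a+p_1-p_2^{e})$, Firm 2's FOC reads $q_2=(p_2+r)(a+p_1-p_2)$. A na\"ive derivative that also moved the cutoff would over-state Firm 2's elasticity and point the comparison the wrong way; it is exactly this stickiness of the non-prominent firm's customer base that tilts its price up. The hypothesis $p_2>0$ is what lets me use this as an equality; if $p_1=0$ the claim is trivial, so I may take both prices interior.

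I would then combine the two conditions. Subtracting, $q_1-q_2=(p_1+r)-(p_2+r)(a+p_1-p_2)$; on the other hand the explicit demands give the identity $q_1-q_2=(1-a)^2-(2-p_2)(p_1-p_2)$, i.e.\ the prominence demand advantage $d_1^n-d_2^r$ adjusted by $k_1-k_2$. Equating the two, the quadratic terms cancel and solving for the gap yields
$$
p_1-p_2=\frac{(1-a)\bigl(1-a-p_2-r\bigr)}{\,3-2p_2-r\,}.
$$
It remains to sign this. Here $1-a=\sqrt{2s}>0$; by Proposition \ref{prop:exis_uniq_equi}, $p_2\le p_m=(1-r)/2$, so the denominator is at least $2>0$, and $p_2>1-a-r$, so the bracket in the numerator is negative. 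Hence $p_1-p_2<0$. The only remaining loose ends are routine: checking that the regularity conditions $p_2\le a$ and $a+p_1-p_2\le 1$ behind the demand formulas hold along the relevant prices — again supplied by Proposition \ref{prop:exis_uniq_equi} — and observing that $p_1<p_2$ is immediate at the corner $p_1=0$.
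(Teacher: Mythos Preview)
Your argument is correct, and the closed-form expression
\[
p_1-p_2=\frac{(1-a)\bigl(1-a-p_2-r\bigr)}{3-2p_2-r}
\]
checks out line by line; signing it via $p_2\le (1-r)/2$ and $p_2>1-a-r$ from Proposition~\ref{prop:exis_uniq_equi} is legitimate, and the corner case $p_1=0$ is handled.

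The paper's route is different. It does not solve for the gap explicitly. Instead it combines the two first-order conditions in the form $p_1=1-a-r+\tfrac12\bigl(k_1+k_2/h_2\bigr)$ and $p_2=1-a-r+k_2/h_2$ (with $h_2=a+p_1-p_2$), takes the difference $p_2-p_1=\tfrac12\bigl(k_2/h_2-k_1\bigr)$, and then uses only the crude bound $h_2<1$ to obtain $p_2-p_1>\tfrac12(k_2-k_1)=\tfrac12(a-p_2)(p_1-p_2)$, i.e.\ $(p_2-p_1)\bigl[1+\tfrac12(a-p_2)\bigr]>0$, which forces $p_2>p_1$. So the paper trades precision for economy: it never needs the sharper price bounds $p_2\le p_m$ or $p_2>1-a-r$, only the demand-regularity inequality $a+p_1-p_2<1$. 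Your approach, by contrast, extracts an exact formula for the price gap---potentially useful if one later wants comparative statics on $p_1-p_2$---at the cost of invoking the quantitative bounds from Proposition~\ref{prop:exis_uniq_equi}.
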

\begin{proof}
    See the Appendix \ref{proof:relative_price}.
\end{proof}


The intuition behind Lemma \ref{lem:relative_price} aligns with \cite{armstrong2009prominence}: 
while return cost impacts profit functions, they only introduce a constant term into the first-order conditions for both firms. 
Therefore, the demand-price elasticity remains similar. 
When the price of the prominent product changes, consumers adjust their search behavior accordingly. 
In contrast, changes in the price of the non-prominent product do not affect consumers' search behavior. 
As a result, the demand for the prominent firm is more sensitive than that of the non-prominent firm, leading to a lower price for the prominent product compared to the non-prominent product.


Although return cost does not affect the relative price relationship between the two products, an increase in return cost influences overall pricing. 
Proposition \ref{prop:price_change} shows that, when return costs remain within a relatively low range, an increase in return cost leads to a decrease in equilibrium prices.

\begin{proposition}\textbf{The Impact of Return Cost on Prices.}

    1. When $r\in[0,\bar{r}(a))$:
    As the return cost $r$ increases, the equilibrium prices $p_i,i=1,2$ of the firms will decrease.


    2. When $r\in[\bar{r}(a),1-\frac{a}{2})$:
    A. $p_1=0$;
    B. As the return cost $r$ increases, $p_2$ will decrease.

    3. When $r\in[1-\frac{a}{2},1]$:
    $p_1=p_2=0$.

    where $\bar{r}(a)=\frac{1-2a+\sqrt{4a^2-4a+9}}{4}$.
    \label{prop:price_change}
\end{proposition}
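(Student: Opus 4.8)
The plan is to work from the equilibrium characterization in Proposition~\ref{prop:exis_uniq_equi}: for every $r\in[0,1]$ the equilibrium $(p_1,p_2)$ is the unique pair in which each firm either satisfies its own first-order condition in the interior of $[0,p_m]$ or is pinned to the corresponding boundary. I would begin by writing out the first-order conditions. Since $\partial q_1/\partial p_1=-1$, differentiating \eqref{eq:pi1} gives $\partial\pi_1/\partial p_1=q_1-(p_1+r)$, so the prominent firm's interior condition is $q_1=p_1+r$; differentiating \eqref{eq:pi2} gives $\partial\pi_2/\partial p_2=q_2+(p_2+r)\,\partial q_2/\partial p_2+r$. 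The key structural fact is that $r$ enters each condition as an additive constant, but with \emph{opposite} sign: raising $r$ at fixed prices lowers $\partial\pi_1/\partial p_1$ by exactly $1$, yet raises $\partial\pi_2/\partial p_2$ by $\partial q_2/\partial p_2+1=p_2-p_1>0$ (Lemma~\ref{lem:relative_price}). So a higher return cost directly pushes the prominent firm to cut price but directly tempts the non-prominent firm to \emph{raise} price, in order to shrink its return volume $k_1$; this is why the effect on $p_2$ will be the delicate one.

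On $[0,\bar r(a))$ both prices are interior, so both first-order conditions hold with equality, and I would apply the implicit function theorem to the system $G(p_1,p_2,r)=0$, giving $\tfrac{d}{dr}(p_1,p_2)^{\!\top}=-[\partial_pG]^{-1}\partial_rG$. Three facts about $\partial_pG$ do the work: its diagonal entries are negative (the local second-order conditions, already established in the proof of Proposition~\ref{prop:exis_uniq_equi}); its off-diagonal entries are positive on the admissible range $1-a-r<p_i<p_m$, i.e.\ prices are strategic complements there (a one-line check from $p_2<p_m=(1-r)/2$); and $\det\partial_pG>0$ (another short estimate from the same bound). With $\partial_rG_1=-1$, Cramer's rule then gives $dp_1/dr<0$ easily: the only positive term in the numerator is dominated using $p_2<p_m$. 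For $p_2$, the direct effect $\partial_rG_2=p_2-p_1>0$ works against the strategic-complements feedback, so elementary sign-chasing is inconclusive; here I would substitute the first-order conditions back into the expression for $dp_2/dr$ to reduce its sign to an inequality in $a$, $r$ and the equilibrium $p_2$, and then invoke the bounds $1-a-r<p_2<(1-r)/2$ from Proposition~\ref{prop:exis_uniq_equi} to settle it.

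For the thresholds: since $p_1(\cdot)$ is continuous, strictly decreasing on the interior regime, and bounded above by $1-r$ (because $q_1=p_1+r\le 1$), it reaches zero at a unique point $\bar r(a)$; imposing $p_1=0$ in the equilibrium relations and eliminating $p_2$ reduces, after simplification, to $2r^2+(2a-1)r-1=0$, whose unique positive root is $\bar r(a)=\tfrac{1-2a+\sqrt{4a^2-4a+9}}{4}$. On $[\bar r(a),1-\tfrac a2)$ the constraint $p_1\ge0$ binds, so $p_1=0$ and the one-dimensional problem $\max_{p_2}\pi_2(0,p_2)$ must be analyzed directly; one shows $dp_2/dr<0$ there and that $p_1=0$ remains Firm~1's best reply, and $r=1-\tfrac a2$ is the value at which $p_2$ is in turn driven to its boundary, after which $p_1=p_2=0$. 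The main obstacle is the sign of $dp_2/dr$ on $[0,\bar r(a))$, where the return cost acts on the non-prominent firm in the ``wrong'' direction so that only the equilibrium relations — not monotonicity or crude bounds — pin it down; the secondary ones are pinning down the two thresholds exactly and the regime bookkeeping, i.e.\ continuity of $(p_1,p_2)$ at $\bar r(a)$ and $1-\tfrac a2$ and confirming that in each regime the selected critical point is Firm~2's \emph{global} maximizer, since $\pi_2$ is cubic in $p_2$.
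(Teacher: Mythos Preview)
Your first-order condition for the non-prominent firm is wrong, and the error is exactly the one that makes you think $dp_2/dr$ is delicate. Prices are \emph{unobservable} in this model: when Firm~2 deviates from its equilibrium price, consumers cannot see the deviation before they search, so the search threshold $u_1<a+p_1-p_2$ (and hence $d_1^n$, and hence $k_1+k_0=1-d_1^n-q_2$) is pinned by the \emph{expected} $p_2$, not by Firm~2's choice variable. Differentiating \eqref{eq:pi2} in $p_2$ as though Firm~2 controls $d_1^n$ is precisely what produces your extra ``$+r$'' term. The correct first-order condition, derived in the proof of Proposition~\ref{prop:exis_uniq_equi}, is $q_2-(p_2+r)(a+p_1-p_2)=0$; differentiating it in $r$ at fixed prices gives $\partial_rG_2=-(a+p_1-p_2)<0$, not $p_2-p_1>0$. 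So the direct effect of a higher return cost on Firm~2 is the \emph{same} sign as on Firm~1: both are pushed to cut price. The ``delicate'' competing effect you build the proof around is an artifact of the wrong FOC.

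With the corrected sign the implicit-function-theorem route does go through, but the paper's argument is more economical and you should adopt it. Because $b_1(p_2)=\tfrac12\bigl[1-a-r+p_2+\tfrac12(a^2-p_2^2)\bigr]$ is explicit, one substitutes $p_1=b_1(p_2)$ into Firm~2's condition to get a single equation
\[
p_2 \;=\; 1-a-r+\int_{p_2}^{a}\frac{u_2+b_1(p_2)-p_2}{a+b_1(p_2)-p_2}\,du_2,
\]
whose right-hand side is strictly decreasing in $r$ (directly, and via $b_1$) and in $p_2$; hence $p_2$ falls with $r$, and $p_1=b_1(p_2)$ follows since $b_1'>0$. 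The threshold computations you sketch (the quadratic $2r^2+(2a-1)r-1=0$ for $\bar r(a)$ and $r=1-\tfrac a2$ for the second corner) are correct.
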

\begin{proof}
    See the Appendix \ref{proof:return_on_prices}.
\end{proof}


Proposition \ref{prop:price_change} states that as the return cost $r$ increases from zero, prices for both firms will decrease, which is consistent with \cite{PETRIKAITE2018}.
Since Lemma \ref{lem:relative_price} shows that the prominent firm offers a lower price when prices are positive, $p_1$ will be the first to reach zero during this decreasing trend caused by rising return costs. 
Once $p_1$ drops to zero, further increases in the return cost will still lead to decreases in $p_2$. 
When $r$ increases to $1-a$, the return cost becomes so high that both firms will set their prices to zero.


One might assume that firms, when facing higher return cost, would attempt to offset a portion of these losses by increasing prices. 
However, Proposition \ref{prop:price_change} demonstrates that firms, instead of passing these return costs to consumers, will choose to lower prices to reduce the volume of product returns and minimize losses. 
Notably, as Proposition \ref{prop:exis_uniq_equi} shows, the optimal pricing for a monopolist is $p_m = \frac{1-r}{2}$, meaning that even a monopolist would lower prices as the return cost increases, rather than shifting these costs to consumers. 
This is because, while higher prices may pass some return costs onto consumers keeping the product, they also reduce the number of such consumers, increasing return losses and decreasing the firm's ability to offset those losses.


The impact of the return cost on prices can be divided into two parts. 
First, the direct impact, which causes firms to lower prices to avoid product returns. 
Second, the competitive impact, where firms lower prices further to counter intensified competition from their rivals, who also reduce prices to manage higher return cost. 
When $r\in[0,\bar{r}(a))$, both factors drive prices down. 
When $r\in[\bar{r}(a),1-\frac{a}{2}]$, since $p_1 = 0$, only the direct impact influences Firm 2 to lower its price further.

After characterizing the pricing strategies of the firms, we can apply these insights to demonstrate the disadvantage of search prominence. 
Since it is unrealistic for firms to consistently set prices at zero, the following analysis focuses on cases where $r\in[0,\bar{r}(a)]$.

\subsection{Disadvantage of Search Prominence}


Under exogenous pricing conditions, it has been demonstrated that extremely high return costs lead to lower profits for the prominent firm compared to the non-prominent firm, resulting in the disadvantage of search prominence. 
In the context of price competition, the situation becomes more complex, as firms adjust their prices in response to the increased return cost, which further alters consumer behavior. 
This influence makes it more challenging to determine whether search prominence still provides an advantage to firms.
However, despite this complexity, we can still demonstrate that when the return cost become sufficiently high, the prominent firm's profits will eventually be lower than those of the non-prominent firm. 

\begin{proposition}\textbf{The Possibility of Disadvantage of Prominence.}

    1. When $r\in[0,(1-a)^2]$, the prominent firm's profit is always higher than that of the non-prominent firm.

    2. $\exists \hat{r}(a)\in\left((1-a)^2,\bar{r}(a)\right)$, when $r\in(\hat{r}(a),\bar{r}(a)]$, the prominent firm's profits will be lower than those of the non-prominent firm.    
    \label{prop:possibility_of_prominence_disadvantage}
\end{proposition}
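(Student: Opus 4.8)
Here is how I would approach the proof.

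\medskip

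The plan is to reduce everything to the equilibrium profits in closed form. Because a change in the non-prominent firm's price leaves the search threshold unchanged (prices are not observed before search), the interior first-order conditions underlying Proposition~\ref{prop:exis_uniq_equi} are $q_1=p_1+r$ for the prominent firm and $q_2=(p_2+r)\hat{u}_m$ for the non-prominent firm, where $\hat{u}_m=a+p_1-p_2\in(0,1)$ is the mass of consumers who search the second product. Combining these with the identities $q_1+q_2+k_0=1$ and $k_1+k_0=\hat{u}_m-q_2$ collapses Equations~\ref{eq:pi1} and~\ref{eq:pi2} to
\[
\pi_1=(p_1+r)^2-r,\qquad \pi_2=\hat{u}_m\bigl[(p_2+r)^2-r\bigr].
\]
Writing $\delta=p_2-p_1>0$ (Lemma~\ref{lem:relative_price}) and substituting $p_2=p_1+\delta$, $\hat{u}_m=a-\delta$, this becomes the identity
\[
\pi_1-\pi_2=\bigl[(p_1+r)^2-r\bigr](1-a+\delta)\;-\;(a-\delta)\,\delta\,\bigl[2(p_1+r)+\delta\bigr],
\]
whose sign is what must be controlled on the two ranges of $r$.

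For Part~2 I would evaluate at $r=\bar{r}(a)$, where Proposition~\ref{prop:price_change} gives $p_1=0$. Then $\pi_1=r(q_1-1)\le r^2-r<0$ (using $\bar{r}(a)<1$ and the corner optimality condition $q_1\le r$), whereas $\pi_2=(a-p_2)\bigl[(p_2+r)^2-r\bigr]\ge(a-p_2)(r^2-r)>r^2-r$ because $(p_2+r)^2\ge r^2$, $0<a-p_2<1$ and $r^2-r<0$. Hence $\pi_1<\pi_2$ at $r=\bar{r}(a)$. Since $\pi_1$ and $\pi_2$ are continuous in $r$ and, by Part~1, $\pi_1>\pi_2$ at $r=(1-a)^2<\bar{r}(a)$, the set $\{r\in[(1-a)^2,\bar{r}(a)]:\pi_1(r)\ge\pi_2(r)\}$ is nonempty, omits a neighborhood of $\bar{r}(a)$, and contains a neighborhood of $(1-a)^2$; taking $\hat{r}(a)$ to be its supremum places $\hat{r}(a)\in\bigl((1-a)^2,\bar{r}(a)\bigr)$ and yields $\pi_1(r)<\pi_2(r)$ for every $r\in(\hat{r}(a),\bar{r}(a)]$.

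The substance is Part~1: $\pi_1-\pi_2>0$ on $[0,(1-a)^2]$. On this range $p_i+r>1-a\ge\sqrt{r}$ by $p_i>1-a-r$ (Proposition~\ref{prop:exis_uniq_equi}), so both brackets $(p_i+r)^2-r$ are strictly positive and the claim is equivalent to $\frac{(p_1+r)^2-r}{(p_2+r)^2-r}>a-\delta$. Here I would use the two first-order conditions jointly: $q_1=p_1+r$ expresses $p_1$ as an explicit quadratic in $p_2$, and eliminating $p_1$ between the two conditions leaves a single polynomial pinning $p_2$ (hence $\delta$ and $p_1+r$) down as a function of $(a,r)$; these relations, together with $p_i\le p_m=(1-r)/2$, should bound $\delta$ and $p_1+r$ sharply enough to sign the displayed identity. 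A useful anchor is $r=0$, where the system is quadratic and the inequality reduces to $p_1^2>p_2^2(a+p_1-p_2)$, checkable directly from the closed-form equilibrium.

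I expect Part~1 near the upper endpoint $r=(1-a)^2$ to be the main obstacle. There $(p_i+r)^2-r$ is small, so the positive piece $[(p_1+r)^2-r](1-a+\delta)$ and the negative piece $(a-\delta)\delta[2(p_1+r)+\delta]$ are of comparable order, and crude estimates---replacing $\hat{u}_m$ by $1$, or $p_i+r$ by its lower bound $1-a$---give away too much. Showing that the positive piece still dominates there will require the precise equilibrium relations rather than bounds alone, and is where I would expect any delicate dependence on $a$ (in particular as $a\to1$, i.e. $s\to0$) to enter.
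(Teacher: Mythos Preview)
Your closed-form reduction $\pi_1=(p_1+r)^2-r$ and $\pi_2=\hat{u}_m\bigl[(p_2+r)^2-r\bigr]$ is correct, and your Part~2 argument goes through: at $r=\bar{r}(a)$ the chain
\[
\pi_1=r(q_1-1)\le r^2-r<(a-p_2)(r^2-r)\le(a-p_2)\bigl[(p_2+r)^2-r\bigr]=\pi_2
\]
works exactly as you say, using only $p_2>0$, $0<a-p_2<1$, and $\bar r(a)<1$. This is different from, and arguably slicker than, the paper's route, which substitutes the explicit value $p_2=2-a-2\bar r(a)$ and signs $k_1-d_2^r-k_2$ by factoring it as $\bigl[2a-2+2\bar r(a)\bigr]\bigl[1-2\bar r(a)\bigr]$.

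Part~1, however, has a genuine gap. You correctly diagnose that the difficulty lies near $r=(1-a)^2$, where the two pieces of your identity are of comparable size and crude bounds lose too much. But your plan---eliminate $p_1$ between the two first-order conditions and hope the resulting polynomial in $(a,r)$ is tractable---is not carried out, and is markedly harder than what the paper actually does. The paper's device is a \emph{deviation argument}: since $p_1$ is optimal for firm~1, one has $\pi_1\ge\pi_1(p_2)$, the profit firm~1 would earn if it charged $p_2$ instead. A direct computation (using $d_1^n(p_2)=1-a$ and $k_1-k_2=(p_2-p_1)(a-p_2)$) gives
\[
\pi_1(p_2)-\pi_2=\bigl[(p_2+r)(1-a)-r\bigr](1-\hat u_m)+(p_2+r)(a-p_2)(p_2-p_1),
\]
whose second term is automatically positive. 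Everything then reduces to the single inequality $p_2(1-a)\ge ra$, and the bound $p_2>1-a-r$ from Proposition~\ref{prop:exis_uniq_equi} immediately yields $p_2(1-a)-ra>(1-a)^2-r\ge 0$ on $[0,(1-a)^2]$. No precise equilibrium values are needed and nothing delicate happens as $a\to 1$. The idea you are missing is that optimality of $p_1$ lets you replace $\pi_1$ by the off-equilibrium quantity $\pi_1(p_2)$; this puts both profits on the same price and collapses the comparison to a one-line bound, bypassing the obstacle you anticipated entirely.
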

\begin{proof}
    See the Appendix \ref{proof:possibility_of_prominence_disadvantage}.
\end{proof}


The first part demonstrates that when the product return cost is relatively low, search prominence will always provide an advantage to the prominent firm. 
The economic intuition is straightforward: 
when return costs are minimal, the disadvantage of product returns is outweighed by the larger demand that search prominence generates, which is thoroughly analyzed by \cite{armstrong2009prominence}.
Moreover, the range within which this advantage holds expands as the search cost increases, since higher search cost further enhances the demand advantage provided by search prominence.


The second part highlights the possibility of the disadvantage of prominence. 
Specifically, when the return cost is sufficiently high, the prominent firm's profits will eventually be lower than those of the non-prominent firm. 
This result focuses on higher return costs due to technical limitations: 
because we cannot derive a closed-form solution for the pricing equilibrium, it is challenging to fully assess the exact impact of return costs on prices. 
However, demonstrating the disadvantage of prominence requires comparing the extra losses from product returns with the additional revenue generated by prominence.
To prove the disadvantage of prominence, we allow the return cost to be extremely high to ensure that the losses due to product returns are substantial. 
At the same time, the high return cost pushes equilibrium prices low enough to control the upper limit of additional income driven by prominence. 
As a result, under conditions of extremely high return cost, we show that extra losses outweigh the additional income, even without an exact closed-form pricing solution.


Although the exact value of $\hat{r}(a)$ cannot be calculated, simulations indicate that the range of $r$ leading to the disadvantage of prominence is not negligible. 
Figure \ref{fig:extra_profit_with_r} provides an example with $s = \frac{1}{16}$.

\begin{figure}[H]
    \centering
    \includegraphics[width=12cm]{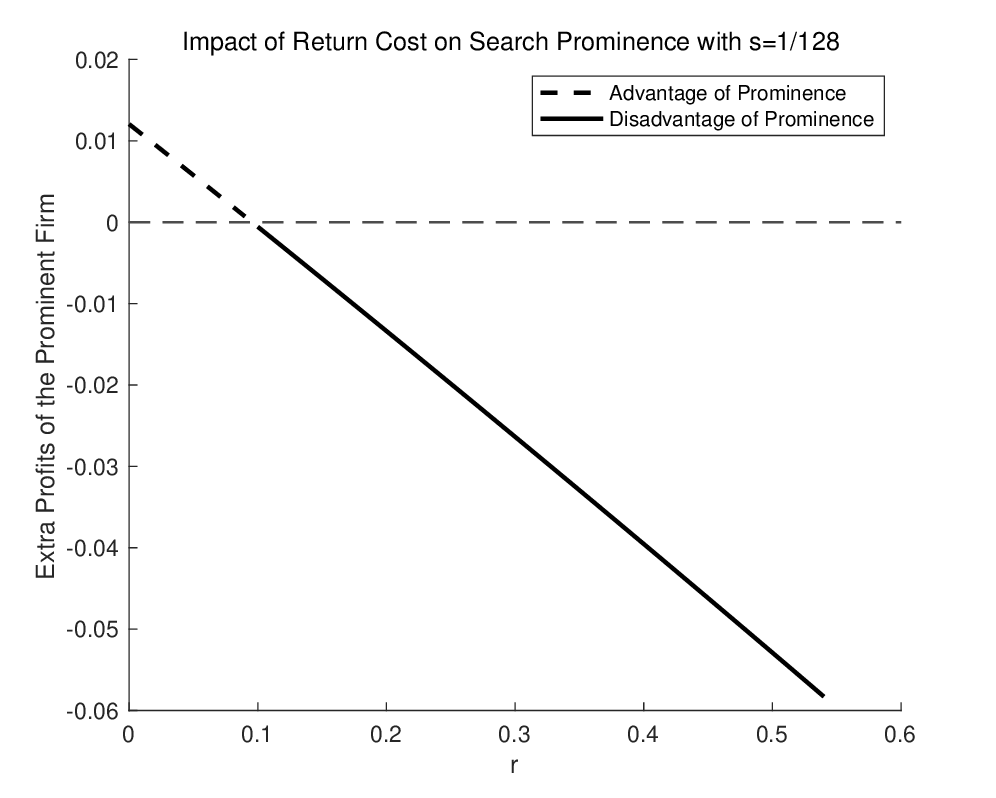}
    \caption{The impact of return costs on the extra profit of the prominent firm. Assuming search costs are $1/128$.}
    \label{fig:extra_profit_with_r}
\end{figure}


Proposition \ref{prop:possibility_of_prominence_disadvantage} demonstrates that under extremely high return cost, the prominent firm's profits can be lower than those of the non-prominent firm. 
However, when return cost is low, the prominent firm can still secure high profits. 
In reality, product return cost is typically not excessively high, especially for more expensive goods like suits, dresses, and mobile phones. 
In such cases, firms can continue to benefit from search prominence, which provides them with extra profits. 
Consequently, firms still have an incentive to invest in advertising to gain prominence, and platforms can generate revenue from this advertising process.
Conversely, for cheaper items, product return cost can quickly become significant relative to the prices. 
In these instances, firms will have less motivation to participate in advertising efforts, as the potential profits from search prominence diminish.


In addition to showing that it is possible for the prominent firm to earn lower profits under certain conditions, we can also demonstrate that the extra profits from being prominent steadily decline as return cost increases.

\begin{proposition}\textbf{Weak Disadvantage of Prominence.}

    $\exists \underline{s}\in\left(0,\frac{1}{8}\right)$, $\pi_1-\pi_2$ will decrease with $r$ increasing, when $r\in [0,\bar{r}(a)]$ and $s\in\left[\underline{s},\frac{1}{8}\right)$.
    \label{prop:weak_prominence's_disadvantage}
\end{proposition}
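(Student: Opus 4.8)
The plan is to fix a search order with Firm~1 prominent, let $(p_1(r),p_2(r))$ denote the unique equilibrium prices from Proposition~\ref{prop:exis_uniq_equi}, set $\Pi(r):=\pi_1-\pi_2$ along this path, and show $\Pi'(r)\le 0$ on $[0,\bar r(a)]$. For $r\in[0,\bar r(a))$ the equilibrium is interior, so $\partial_{p_1}\pi_1=0$ and $\partial_{p_2}\pi_2=0$, and the envelope theorem gives
\[
\Pi'(r)=\bigl(\partial_r\pi_1-\partial_r\pi_2\bigr)-\partial_{p_1}\pi_2\cdot p_1'(r)+\partial_{p_2}\pi_1\cdot p_2'(r).
\]
From $\pi_1=(p_1+r)q_1-r$ and $\pi_2=p_2q_2-r(k_1+k_0)$ one computes $\partial_r\pi_1-\partial_r\pi_2=k_1-q_2$, $\partial_{p_1}\pi_2=p_2(1-p_2-r)$ and $\partial_{p_2}\pi_1=(p_1+r)(1-p_2)$. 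Differentiating Firm~1's first-order condition gives the clean relation $2p_1'(r)=(1-p_2)p_2'(r)-1$; substituting it to remove $p_1'(r)$ yields
\[
\Pi'(r)=\underbrace{(k_1-q_2)+\tfrac12 p_2(1-p_2-r)}_{=:B}\;+\;(1-p_2)\underbrace{\Bigl[(p_1+r)-\tfrac12 p_2(1-p_2-r)\Bigr]}_{=:C}\cdot p_2'(r).
\]
Since $p_2'(r)<0$ on $[0,\bar r(a))$ by Proposition~\ref{prop:price_change}, it suffices to prove $B\le 0$ and $C\ge 0$.

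The factor $C$ is easy: $C\ge 0$ is equivalent to $p_1+r\ge\tfrac12 p_2(1-p_2-r)$, and since $p_1>1-a-r$ by Proposition~\ref{prop:exis_uniq_equi} while $\tfrac12 p_2(1-p_2-r)\le(1-r)^2/8\le\tfrac18$ for $p_2\in[0,p_m]$, this holds whenever $1-a\ge\tfrac18$, i.e. $s\ge\tfrac1{128}$. (Observe also that once $B\le 0$ is known it forces $q_2-k_1\ge\tfrac12 p_2(1-p_2-r)\ge 0$, so the direct term $k_1-q_2$ needs no separate sign discussion.) It therefore remains to establish $B\le 0$; using the explicit demand formulas, where $q_2-k_1=a(1-a)-(p_2-p_1)(1-p_2)$, this is exactly
\[
a(1-a)\ \ge\ (p_2-p_1)(1-p_2)+\tfrac12 p_2(1-p_2-r).
\]

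To prove this I would substitute the value $p_2-p_1=\tfrac12[p_2-(1-a)+r]-\tfrac14(a^2-p_2^2)$ delivered by Firm~1's first-order condition, obtaining a polynomial inequality in $(p_2,a,r)$, and then verify it on the box $p_2\in[\,1-a-r,\,p_m\,]$, $r\in[0,\bar r(a)]$ given by the a priori bounds of Proposition~\ref{prop:exis_uniq_equi} (which is stronger than, hence implies, the statement at the equilibrium value of $p_2$). The right-hand side stays strictly below $\tfrac14$, while $a(1-a)$ is largest near $a=\tfrac12$, so the inequality holds provided $a$ is bounded away from $1$ — equivalently $s$ bounded away from $0$ — and a short computation shows the worst configuration is $r=0$ and yields a threshold $\underline s\in(0,\tfrac18)$. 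At the endpoint $r=\bar r(a)$ one has $p_1=0$ and $p_1'(r)=0$, so directly $\Pi'(r)=(k_1-q_2)+r(1-p_2)p_2'(r)$, a sum of nonpositive terms. The main obstacle is the polynomial bookkeeping in the $B\le 0$ step: one must bound $(p_2-p_1)(1-p_2)$ uniformly in $r$, where $p_2-p_1$ runs from nearly $0$ at $r=0$ up to its small corner value at $r=\bar r(a)$ while $p_2$ shrinks from roughly $p_m$ down to that same small value, so neither $p_2-p_1\le p_2$ nor $p_2\le p_m$ is by itself sharp enough and the first-order-condition expression for $p_2-p_1$ is needed to interpolate between the two regimes.
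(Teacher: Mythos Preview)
Your envelope-theorem step for Firm~2 is incorrect, and this is the crucial gap. You write $\partial_{p_2}\pi_2=0$ in equilibrium, but in this model prices are unobservable before search: consumers decide whether to visit Firm~2 based on their \emph{belief} about $p_2$, not its actual value. Firm~2's first-order condition sets to zero the derivative with respect to its actual price holding the belief fixed; the equilibrium profit $\pi_2(p_1,p_2,r)$, however, depends on $p_2$ also through the search cutoff $a+p_1-p_2$, which moves with the belief. The paper computes this residual explicitly (see the proof of Corollary~\ref{coro:industry_profits}, and the surrounding discussion in the text): $\partial_{p_2}\pi_2=(a-1)p_2+ar$, which is nonzero. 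Consequently your expression for $\Pi'(r)$ is missing the term $-\bigl[(a-1)p_2+ar\bigr]\,p_2'(r)$.

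This omission is not harmless. For small $r$ the missing term is negative (since $(a-1)p_2+ar<0$ and $p_2'(r)<0$), so your bound happens to be conservative there; but once $r>\tfrac{(1-a)p_2}{a}$ the missing term is positive, and your inequality $B+(1-p_2)C\,p_2'(r)\le 0$ no longer implies $\Pi'(r)\le 0$. Since the proposition must cover all of $[0,\bar r(a)]$ with $\bar r(a)>1/2$, the large-$r$ regime cannot be sidestepped. The paper's proof does not invoke the envelope theorem for Firm~2 at all: it writes out all three partials $\partial_r(\pi_1-\pi_2)$, $\partial_{p_1}(\pi_1-\pi_2)$, $\partial_{p_2}(\pi_1-\pi_2)$ as explicit polynomials, uses the same Firm~1 relation $2p_1'=(1-p_2)p_2'-1$ that you found to eliminate $p_1'$, and then verifies two polynomial inequalities (the constant piece negative, the coefficient of $p_2'$ positive) via elementary bounds $p_1,p_2\in[0,(1-r)/2]$ together with continuity in $a$ near $a=\tfrac12$. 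Your high-level strategy matches this, but the bookkeeping must carry the extra $(a-1)p_2+ar$ through into your $C$-term. Separately, the $B\le 0$ step is only sketched: the crude bound $(p_2-p_1)(1-p_2)+\tfrac12 p_2(1-p_2-r)\le\tfrac32 p_2(1-p_2)\le\tfrac38$ already exceeds $a(1-a)\le\tfrac14$, so the sharper use of the first-order condition that you allude to is genuinely needed and not yet supplied.
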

\begin{proof}
    See the Appendix \ref{proof:weak_prominence's_disadvantage}.
\end{proof}


Proposition \ref{prop:weak_prominence's_disadvantage} shows that as return costs increase, the extra profits from being prominent diminish, particularly when search cost is relatively high.
\endnote{The focus on cases with relatively high search cost follows the same rationale as in the proof of Proposition \ref{prop:possibility_of_prominence_disadvantage}. 
This approach helps control the upper limit of equilibrium prices, allowing us to eliminate the price variable in the profit function.}
The challenge in proving this lies in the fact that, while it is straightforward to demonstrate that return cost negatively impacts extra profits, the effect of price changes on these profits is less clear.
However, even though we can analytically prove this result only for scenarios with relatively high search costs, simulations suggest that this outcome holds true even when search cost is lower. 
For instance, the example in Figure \ref{fig:extra_profit_with_r} indicates that the result is robust.


Proposition \ref{prop:weak_prominence's_disadvantage} has more practical implications than Proposition \ref{prop:possibility_of_prominence_disadvantage}. 
Specifically, it highlights how the extra profits $(\pi_1 - \pi_2)$, which incentivize firms to seek prominence, decrease as return cost rises. 
As discussed in Section \ref{sec:intro}, some firms on the Chinese e-commerce platform, Taobao, opt out of participating in the ``618'' promotion.
One key reason is that the promotion brings a substantial number of product returns, leading to significant losses for firms.
The underlying economic intuition here is that while the ``618'' promotions may increase demand for firms, they also result in a surge of product returns. 
In many cases, the negative impact of these returns outweighs the benefits of the increased demand.


To better understand Proposition \ref{prop:weak_prominence's_disadvantage}, we can now turn to an analysis of the firms' profits and examine how the increased return cost affects them.

\begin{corollary}\textbf{Industry Profits.}

    1. The prominent firm's profits will decrease with the return cost $r$ increasing.

    2. When the search cost is relatively low or the return cost is high enough, the non-prominent firm's profit will also decrease with return cost $r$ increasing.

    3. Industry profits will decrease with return cost increasing.
    \label{coro:industry_profits}
\end{corollary}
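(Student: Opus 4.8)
The plan is to treat each firm's equilibrium profit as a function of $r$ along the equilibrium price path $r\mapsto(p_1(r),p_2(r))$ supplied by Proposition \ref{prop:exis_uniq_equi}, differentiate, kill the own-price responses with the envelope theorem, and sign the surviving direct and strategic (cross-price) terms using the monotone price comparative static of Proposition \ref{prop:price_change}. Throughout I use the explicit forms $\pi_1=(p_1+r)(d_1^n+k_1)-r$ and $\pi_2=(p_2+r)(d_2^r+k_2)-r(1-d_1^n)$, together with the accounting identity $1-d_1^n=q_2+k_1+k_0$ (a consumer who searches Firm 2 either keeps Firm 2's product, keeps Firm 1's product, or returns both), whence also total returned units equal $(1-d_1^n)+k_0$.

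For part 1, write $\Pi_1(r)=\pi_1(p_1(r),p_2(r))$. Since the equilibrium $p_1$ either satisfies $\partial\pi_1/\partial p_1=0$ or is pinned at the corner $p_1\equiv 0$ (when $r\ge\bar{r}(a)$), the own-price contribution drops and
\[
\frac{d\Pi_1}{dr}=\frac{\partial\pi_1}{\partial r}+\frac{\partial\pi_1}{\partial p_2}\,\frac{dp_2}{dr}.
\]
Here $\partial\pi_1/\partial r=q_1-1<0$ because Firm 1's demand is strictly below the unit mass (as $p_2<a$), and $\partial\pi_1/\partial p_2=(p_1+r)(1-p_2)\ge 0$ because $d_1^n+k_1$ is increasing in $p_2$ on the relevant range; since $dp_2/dr\le 0$ by Proposition \ref{prop:price_change}, both terms are non-positive and the first is strictly negative, so $d\Pi_1/dr<0$. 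For part 2, repeat for $\Pi_2$; as Firm 2's price is interior whenever positive, $d\Pi_2/dr=\partial\pi_2/\partial r+(\partial\pi_2/\partial p_1)(dp_1/dr)$. The direct term is $\partial\pi_2/\partial r=q_2-(1-d_1^n)=-(k_1+k_0)<0$, unambiguously negative. The strategic coefficient is $\partial\pi_2/\partial p_1=(p_2+r)(1-p_2)-r=p_2(1-p_2-r)\ge 0$ (using $p_2+r\le p_m+r<1$), and $dp_1/dr\le 0$ by Proposition \ref{prop:price_change}, so the strategic term is also non-positive; summing gives $d\Pi_2/dr<0$. This is exactly where the hypothesis that the search cost is low (equivalently $a$ near $1$) or that $r$ is large is invoked: it confines the equilibrium to the region where Proposition \ref{prop:price_change}'s comparative static holds and $p_2\le p_m$, so the strategic effect cannot overturn the direct one.

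For part 3, add the two decompositions. The direct parts collapse to $(q_1-1)+(q_2-(1-d_1^n))=-[(1-d_1^n)+k_0]$, i.e.\ minus the total mass of returned units, which is strictly negative; the strategic parts are $(p_1+r)(1-p_2)(dp_2/dr)+p_2(1-p_2-r)(dp_1/dr)\le 0$ by the sign computations above and Proposition \ref{prop:price_change}. Hence $d(\Pi_1+\Pi_2)/dr<0$, and — since the combined direct effect is unconditionally negative and both strategic coefficients are unconditionally non-negative — this conclusion needs none of the extra hypotheses of part 2.

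The main obstacle is signing the strategic terms. The envelope theorem disposes of own-price responses, and the direct return-cost effects and the cross-partials $\partial\pi_i/\partial p_j$ are routine from the explicit profit formulas; what is not self-evident is the sign of $dp_1/dr$ and $dp_2/dr$ along the equilibrium path, which must be imported from Proposition \ref{prop:price_change} (this is also why part 2 inherits parameter restrictions). One must additionally treat the boundary regimes $p_1=0$ (for $r\ge\bar{r}(a)$) and $p_1=p_2=0$ (for $r\ge 1-\tfrac{a}{2}$) separately, but there the own-price term in the differentiation is simply absent and the required inequalities follow by direct substitution of the boundary values of $d_1^n,k_1,d_2^r,k_2,k_0$.
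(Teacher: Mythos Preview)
Your argument for Part~1 is correct and is exactly the paper's proof. The problem is Part~2, and it contaminates Part~3.

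You claim that, since Firm~2's equilibrium price is interior, the envelope theorem kills the own-price term $\dfrac{\partial\pi_2}{\partial p_2}\,\dfrac{dp_2}{dr}$. It does not. Prices are \emph{unobservable} before search: the consumer's search cutoff $u_1<a+p_1-p_2$ depends on the \emph{expected} price of Firm~2, which Firm~2 takes as given when it deviates. Hence Firm~2's first-order condition equates to zero only the derivative of $\pi_2$ with respect to its actual price holding the consumer belief fixed; the full partial $\partial\pi_2/\partial p_2$ in the equilibrium-profit expression also contains the effect of $p_2$ on the search cutoff and on $1-d_1^n$. Concretely, using the FOC $q_2=(p_2+r)(a+p_1-p_2)$ one finds
\[
\frac{\partial\pi_2}{\partial p_2}=(a-1)p_2+ar,
\]
which is \emph{not} zero. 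For small $r$ this residual is negative, and since $dp_2/dr<0$ the missing term $[(a-1)p_2+ar]\,dp_2/dr$ is \emph{positive}; it works against the direct and cross-price effects you identified, and this is precisely why the corollary needs the hypothesis ``search cost low or $r$ high enough.'' Your stated rationale for that hypothesis (that it is needed to invoke Proposition~\ref{prop:price_change} or to ensure $p_2\le p_m$) is incorrect: those facts hold throughout $r\in[0,\bar r(a))$ by Propositions~\ref{prop:exis_uniq_equi}--\ref{prop:price_change}. The paper instead uses the hypothesis to sign $(a-1)p_2+ar$ (it is positive once $r>1-a$), or, when $a$ is close to $1$, to show the residual is dominated after substituting $dp_1/dr=-\tfrac12+\tfrac{1-p_2}{2}\,dp_2/dr$.

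Part~3 inherits the same gap: your sum of decompositions omits $[(a-1)p_2+ar]\,dp_2/dr$, which for small $r$ is a positive contribution. The paper handles this by rewriting everything in terms of $dp_2/dr$ via the best-response relation for $p_1$, and then showing the resulting coefficient of $dp_2/dr$ is positive using $p_1>1-a-r$; only then does the unconditional conclusion follow.
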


\begin{proof}
    See the Appendix \ref{proof:industry_profits}.
\end{proof}

Corollary \ref{coro:industry_profits} asserts that for the prominent firm, an increase in return cost will reduce its profits. 
However, we can only demonstrate that, when the search cost is relatively low or the return cost is sufficiently high, the non-prominent firm's profits will also decline as return cost rises. 
For both firms, an increase in return cost impacts profits through three channels: 
direct effects from product returns, effects via the competitor's pricing, and effects via the firm's own pricing.

For the prominent firm, an increase in return cost leads to price reductions by both firms. 
Therefore, the prominent firm experiences a negative impact through both direct product return effects and the competitor's price changes, while the effect via its own price is neutral, as explained by the envelope theorem. 
This results in a clear negative impact on the prominent firm's profits from the rising return cost.

For the non-prominent firm, however, the effect of its own pricing is initially positive when the return cost is low. 
The reason is that as return cost rises, its equilibrium price decreases, causing more consumers to search for the non-prominent product, an effect not accounted for by the prominent firm when setting its price. 
In other words, the envelope theorem does not apply to the non-prominent firm. 
Therefore, the non-prominent firm's profits do not decrease in a straightforward manner.

However, we can show that when return costs are high, the impact of the non-prominent firm's own price becomes negative because increased consumer searches for the non-prominent product lead to greater losses due to product returns. 
Additionally, when search cost $s$ is low (i.e., $a$ is high), the positive effect of the non-prominent firm's own price is outweighed by the negative effects of direct product returns and competitor price changes.

Finally, when considering industry profits, as return costs rise, the decline in the prominent firm's profits is so significant that the combined profits of both firms will decrease. 
Similarly, \cite{PETRIKAITE2018} also shows that industry profits decrease as return costs rise. 
However, in that model, return costs are borne by consumers, and the intuition is that firms will significantly lower prices to attract consumers to search them first, which ultimately reduces industry profits. 
In contrast, in our model, return costs are borne by the firms, and both firms are directly harmed by the increased return costs.


Corollary \ref{coro:industry_profits} also offers a potential explanation as to why it is easier to prove Proposition \ref{prop:weak_prominence's_disadvantage} when the search cost is high. 
The reason is that when the search cost is high, the effect of the non-prominent firm's pricing reduces the severity of its profit decline, making it more feasible to execute a scaling proof.


Beyond industry profits, we can also examine how the increasing return cost affects consumer surplus.

\begin{corollary}\textbf{Consumer Surplus.}

    Consumer Surplus will increase with return cost increasing.
    \label{coro:consumer_surplus}
\end{corollary}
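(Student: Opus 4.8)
The plan is to use the fact that the return cost $r$ does not enter consumers' payoffs directly — in the benchmark they receive full refunds and bear no return cost themselves, and the reservation value $a=1-\sqrt{2s}$ depends only on the search cost — so $r$ influences consumer surplus purely through the equilibrium prices $p_1(r)$ and $p_2(r)$. Let $V(u_1;p_1,p_2)$ denote the optimized payoff of a consumer whose realized match value with the prominent firm is $u_1$, i.e., the value of the search/keep/return problem of Lemma \ref{lem1}. Then consumer surplus is $CS(p_1,p_2)=\int_0^1 V(u_1;p_1,p_2)\,du_1$, and by the chain rule $\frac{dCS}{dr}=\frac{\partial CS}{\partial p_1}\frac{dp_1}{dr}+\frac{\partial CS}{\partial p_2}\frac{dp_2}{dr}$, so it suffices to sign the two partials and then invoke the price comparative statics of Proposition \ref{prop:price_change}.

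First I would compute $\frac{\partial CS}{\partial p_i}$ by an envelope argument. The function $V(u_1;p_1,p_2)$ is the maximum over a finite menu of smooth alternatives — ``do not search and keep Firm 1's product'', ``do not search and return it'', and ``search, then keep whichever product yields the higher net utility, or return both'' — hence it is differentiable in $(p_1,p_2)$ for every $u_1$ outside a measure-zero set of ties, with gradient equal to that of the currently active alternative. In each alternative the term $-p_1$ contributes a derivative of $-1$ precisely on the event that the consumer ultimately keeps Firm 1's product, and $0$ otherwise; symmetrically for $-p_2$ and Firm 2. Integrating over $u_1$ yields the clean identities $\frac{\partial CS}{\partial p_1}=-q_1(p_1,p_2)\le 0$ and $\frac{\partial CS}{\partial p_2}=-q_2(p_1,p_2)\le 0$, where $q_1,q_2$ are the demand functions in Equations \ref{eq:q1} and \ref{eq:q2}. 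These identities can also be verified by differentiating the explicit expression for $CS$ under the integral sign; the boundary terms at $u_1=\hat{u}_m$ then cancel because $\int_0^1 \max\{a-p_2,\,u_2-p_2\}\,du_2=(a-p_2)+s$.

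It remains to combine these signs with Proposition \ref{prop:price_change}, which gives $\frac{dp_1}{dr}\le 0$ and $\frac{dp_2}{dr}\le 0$ for $r\in[0,\bar{r}(a)]$ (and on $[\bar{r}(a),1]$ one price is identically zero while the other remains weakly decreasing, so the argument extends). Since $q_1,q_2\ge 0$, the expression $\frac{dCS}{dr}=-q_1\frac{dp_1}{dr}-q_2\frac{dp_2}{dr}$ is a sum of nonnegative terms, so consumer surplus is weakly increasing in $r$. The step that requires the most care is the envelope argument: one must check that $V(\cdot;p_1,p_2)$ is genuinely an upper envelope of finitely many $C^1$ functions of $(p_1,p_2)$ — in particular that the price-dependent search threshold $\hat{u}_m$ and the refund options are absorbed into the $\max$ operator rather than changing the feasible set — so that Danskin's theorem applies and one may differentiate ``through'' the consumer's optimal policy. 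Given that, no closed form for the equilibrium prices is needed, in contrast with the more delicate Propositions \ref{prop:possibility_of_prominence_disadvantage} and \ref{prop:weak_prominence's_disadvantage}.
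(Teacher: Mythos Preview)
Your argument is correct and rests on the same two observations the paper exploits: (i) in the benchmark $r$ enters consumer surplus only through the equilibrium prices, and (ii) because consumers are optimizing, a change in prices affects $CS$ to first order only through the direct price effect, so that $\partial CS/\partial p_i=-q_i\le 0$; combined with Proposition~\ref{prop:price_change} this yields $dCS/dr\ge 0$. The paper's own proof uses the discrete, revealed-preference version of the same envelope idea: it fixes the consumer's search threshold and keep/return regions at the old-price levels, shows that replacing $(p_1^0,p_2^0)$ by the lower $(p_1^1,p_2^1)$ raises the payoff pointwise, and then observes that re-optimizing the threshold can only help further. Your differential route is more compact and delivers the exact marginal identity $dCS/dr=-q_1\,dp_1/dr-q_2\,dp_2/dr$, at the cost of having to justify Danskin-type differentiability of the upper envelope; the paper's two-step comparison is more elementary and sidesteps any regularity conditions, but is longer and does not produce a closed-form derivative.
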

\begin{proof}
    See the Appendix \ref{proof:consumer_surplus}.
\end{proof}


The rise in consumer surplus primarily comes from the reduction in firms' equilibrium prices as the return cost increases, thereby offering consumers more surplus. 
Specifically, if we first fix consumers' search behaviors, the lower firm pricing leads to higher net consumer surplus. 
Furthermore, as consumers adjust their search behaviors optimally, they can further improve their net surplus.


The implication of Corollary \ref{coro:consumer_surplus} is that even if consumers can obtain full refunds without bearing the product return cost, an increase in the return cost borne by firms can still lead to an increase in consumer surplus. 
Therefore, a consumer-oriented platform aiming to maximize consumer surplus may implement strict product return policies for firms, using return policies as a mechanism to regulate firm pricing.

In practice, platforms like Pinduoduo often impose strict return policies on firms to attract consumers, leveraging this to build strong network effects through larger consumer bases. 
On such platforms, even without considering the allocation of return costs between consumers and firms, stringent return policies on firms alone can benefit consumers.



\section{Advertising and Product Return Policy}\label{sec:product return policy}


Building on the characterization of the pricing game between firms, we can further explore the process of advertising for search prominence. 
Additionally, we can briefly compare the incentives of different types of platforms to reduce the product return cost.
Next, we focus on the platform that maximizes advertising revenue, examining its incentive to shift the product return cost from firms to consumers. 
A platform driven by advertising fees may have a strong motivation to allocate positive return cost to consumers, as this can enhance the firms' willingness to invest in advertising for search prominence. 

\subsection{Advertising via Position Auction}

Position auctions are a commonly used advertising method on platforms and search engines to determine the display order of bidders' products, initially modeled by \cite{varian2007position}. 
On an e-commerce platform, two products are displayed in two sequential advertising spaces. 
The two firms participate in an auction for these advertising spots, with each firm submitting a bid. 
The firm with the higher bid secures the first advertising space and pays the bid of the second firm. 
The second firm, which places the lower bid, occupies the second advertising space without paying additional fees.

The firm that wins the auction places its product in the first advertising space, ensuring it is seen by consumers first, while the second firm can only place its product in the subsequent space. 
For this analysis, we assume the return cost is relatively low, ensuring that the prominent firm can generate positive extra profit from being displayed first.

\begin{proposition}\textbf{Advertising Revenue with Return Cost Increasing.}

    When the return cost is relatively low, an increase in return cost $ r $ will lead to a decrease in the platform's advertising revenue.
    \label{prop:platform_revenue}
\end{proposition}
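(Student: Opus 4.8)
The plan is to reduce the statement to the monotonicity already established in Proposition~\ref{prop:weak_prominence's_disadvantage}, after first pinning down what ``advertising revenue'' means in terms of the firms' equilibrium profits. I would argue by backward induction on the three-stage game. Since the winner of the position auction pays only the loser's bid and this payment is sunk by the time prices are chosen, the pricing subgame is exactly the one solved in Section~\ref{sec:prominence in equilibrium}; let $\pi_1$ and $\pi_2$ denote the resulting equilibrium profits (net of return losses, as in Equations~\ref{eq:pi1} and \ref{eq:pi2}) of the prominent and the non-prominent firm, both of which are functions of $r$ through the equilibrium prices of Proposition~\ref{prop:exis_uniq_equi}. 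In the auction stage the two ex~ante identical firms each face the comparison ``$\pi_1 - b_{\text{rival}}$ if I win'' versus ``$\pi_2$ if I lose,'' so the incremental value of the prominent slot is $\pi_1 - \pi_2$, which is strictly positive precisely in the low-return-cost region by Proposition~\ref{prop:possibility_of_prominence_disadvantage}(1). A standard Bertrand-type argument for a complete-information two-bidder position auction (equivalently, selecting the locally envy-free equilibrium used in the position-auction literature cited earlier) then shows that in equilibrium both firms bid $\pi_1 - \pi_2$, the tie is broken arbitrarily, and the platform's advertising revenue equals
\[
R(r)=\pi_1-\pi_2 .
\]

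The second step is then immediate: $R'(r)=\frac{d}{dr}\bigl(\pi_1-\pi_2\bigr)$, and Proposition~\ref{prop:weak_prominence's_disadvantage} gives $\frac{d}{dr}(\pi_1-\pi_2)<0$ for $r\in[0,\bar r(a)]$, so $R$ is strictly decreasing in $r$. The economic content is exactly that a higher return cost erodes the prominence wedge $\pi_1-\pi_2$ that the two firms compete up and hand to the platform, so less of it can be extracted as ad revenue.

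The delicate part is the first step, not the second. The complete-information position auction admits a continuum of Nash equilibria --- any bid profile in which the loser's bid lies in $[0,\pi_1-\pi_2]$ and the winner's bid is weakly larger is a Nash equilibrium --- so one must justify selecting the equilibrium whose revenue is $\pi_1-\pi_2$, either via the locally-envy-free/symmetric refinement or as the limit of a vanishing incomplete-information perturbation. A secondary bookkeeping point is to make ``return cost relatively low'' precise so that it simultaneously ensures $\pi_1>\pi_2$ (positive advertising revenue, Proposition~\ref{prop:possibility_of_prominence_disadvantage}(1)) and lies in the range where Proposition~\ref{prop:weak_prominence's_disadvantage} applies; taken literally, the analytic monotonicity there also carries the search-cost restriction $s\in[\underline s,\tfrac18)$, though the simulation in Figure~\ref{fig:extra_profit_with_r} indicates the conclusion is robust beyond that restriction.
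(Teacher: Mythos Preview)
Your proposal is correct and follows essentially the same approach as the paper: identify the platform's advertising revenue with the prominence wedge $\pi_1-\pi_2$ via the second-price position auction (the paper simply ``focuses on the symmetric equilibrium'' to get $b_2=\pi_1-\pi_2$), and then invoke Proposition~\ref{prop:weak_prominence's_disadvantage} for the monotonicity in $r$. If anything, you are more explicit than the paper about the equilibrium-selection issue and the search-cost caveat on Proposition~\ref{prop:weak_prominence's_disadvantage}; the paper's own proof glosses over both.
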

\begin{proof}
    See the Appendix \ref{proof_of_plarform_revenue}. 
\end{proof}


The reasoning behind Proposition \ref{prop:platform_revenue} is straightforward: 
In a position auction setting, the maximum advertising revenue a platform can collect is equivalent to the extra profit the firm gains from search prominence. 
Since we have already shown that an increase in return cost reduces the extra profit of the prominent firm, it follows that the platform's advertising revenue will also decline as the return cost rises.


For a platform focused on maximizing advertising fees, there is a clear incentive to reduce the product return cost. 
In practice, some e-commerce platforms actively explore methods to lower these costs. 
For example, recognizing that high return cost discourages firms from participating in promotions and advertising, as mentioned in Section \ref{sec:intro}, platforms like Taobao and Tmall have recently introduced the ``Return Treasure'' service. 
This service claims to reduce firms' return cost by 20\%, with potential reductions of up to 30\% in some cases.
Other platforms, such as Amazon, have also implemented strategies to reduce the return cost. 
For instance, Amazon's FBA Returns Dashboard helps sellers identify the causes of returns, enabling them to make product or listing adjustments that could reduce return rates.
\endnote{More details can be found at \detokenize{https://sellercentral-europe.amazon.com/seller-forums/discussions/t/5d43c390-3a62-4a96-b49a-d6a1e73379f8}.}


As stated in Corollary \ref{coro:industry_profits}, an industry profit-maximizing platform would also be motivated to reduce firms' return cost to increase overall industry profits. 
Both the advertising fee-maximizing platform and the industry profit-maximizing platform will collaborate with firms to reduce product return costs. 
For instance, firms can work with the platform to improve reverse supply chain design, as suggested by \cite{guide2006time}.

However, by contrast, as discussed in Corollary \ref{coro:consumer_surplus}, a platform that seeks to maximize consumer surplus would likely adopt stricter return policies for firms, thereby increasing the return cost and benefiting consumers. 
According to Proposition \ref{prop:price_change}, a platform focused on maximizing trading volume would similarly enforce strict return policies to reduce firm pricing and boost trading volumes.


Thus, our results demonstrate that platforms with different objectives—whether maximizing advertising revenue, industry profits, consumer surplus, or trading volume—will implement varying return policies for firms. 
This provides a potential explanation for the diverse return policies seen across different monopolistic e-commerce platforms.

\subsection{The Returns Cost of Consumers}

In the previous analysis, we focused solely on the return cost borne by firms, without considering the return cost that consumers incur during the return process. 
In this section, we explore the platform's incentives to allocate some of the return cost to consumers. 
The return cost that consumers bear will influence both their search cost and the outside options available to them.


Let's first examine the impact on consumers' search and purchase decisions. 
Consistent with the analysis in Section \ref{sec:a search model with product returns}, we can characterize consumers' search, purchase, and return behavior by analyzing the marginal consumers who are indifferent about whether to search for the second product after purchasing the first one.
Since marginal consumers obtain non-negative utility from the first product, when deciding whether to search for the second product, they consider that an additional search and potential purchase will also lead to a possible product return. 
If the net value of the second firm's product exceeds that of the first firm's product, they will return the first product, incurring a unit of return cost. 
Conversely, if the net value of the second product is lower than that of the first product, they will return the second product, also incurring a unit of return cost. 
Thus, regardless of the outcome, once the decision to search for and potentially purchase from the second firm is made, the marginal consumer faces an additional unit of return cost.

For the marginal consumer, assuming that the return cost borne by the consumer for returning an item is $r_s$ and the realized match value of the first product is $\hat{u}_1$, the additional benefit that the consumer expects to gain from searching for and buying the second product after searching for and buying the first product is

$$
\int_{\hat{u}_1 - p_1 + p_2}^{1} (u_2 - p_2 - (\hat{u}_1 - p_1)) du_2 - s - r_s
$$

Let $ S = s + r_s $, then the above expression becomes

$$
\int_{\hat{u}_1 - p_1 + p_2}^{1} (u_2 - p_2 - (\hat{u}_1 - p_1)) du_2 - S
$$

The utility of the marginal consumer when deciding to keep the second firm's product after searching for and purchasing it is given by:
$ a = 1 - \sqrt{2S} $. 
For other consumers, if  $\hat{u}_1 \geq A - p_2 + p_1$, they will not search for or purchase the second firm's product. 
However, if  $\hat{u}_1 < A - p_2 + p_1$ , they will continue to search for and purchase the product from the second firm.
\endnote{For consumers with $\hat{u}_1 \geq p_1$, their rationale when considering whether to search for a second firm's product aligns with that of the marginal consumer. 
For those with $\hat{u}_1 < p_1$, since they know they need to return the first firm's product after purchase, their expected additional return cost when considering whether to continue searching for a second firm's product is actually the probability that the net value of the second firm's product is less than zero multiplied by the unit return cost $r_c$. 
This implies that, compared to marginal consumers, they are more likely to gain a positive net benefit from searching for a second product, hence are more motivated to continue searching for a second firm's product. 
Therefore, these consumers will also display the same search and purchase rules as those in the baseline model.
}
At this stage, it is evident that in the decision-making process of purchasing and searching, the return cost becomes part of the overall cost of searching for the second product. 
Following the structure of the baseline model, it requires that:
$S = s + r_s \in (0, 1/8)$, while ensuring $s > 0, r_s \geq 0$.


As for the outside option, the return cost borne by consumers discourages them from returning the product. 
Assuming that a consumer derives net utility $u_i - p_i$ from product $i$, they will choose to return the product and opt for an outside option only if:
$$
u_i-p_i<0-r_s
$$

Thus, the second role of the consumer's return cost $r_s$ is to shift the outside option from $0$ to $-r_s$.

Summing up the analysis above, we derive Lemma \ref{lem:consumer_return_cost}.
\begin{lemma}\textbf{Consumers' Return Costs.}
    
    1. During searching process, consumers' return cos will prevent consumers to search and play the same role with search cost.

    2. Consumer's return cost will prevent consumers to return products after purchasing.
    \label{lem:consumer_return_cost}
\end{lemma}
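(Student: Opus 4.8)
The plan is to formalize the two channels isolated in the discussion above, tracking precisely where the per-unit consumer return cost $r_s$ enters the marginal consumer's search decision and where it enters the post-purchase keep/return decision; the lemma is in essence a repackaging of that derivation.

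For part~1, I would first pin down the return outcome conditional on the marginal consumer choosing to search for and purchase the second product. For this consumer $\hat u_1 = \hat u_m$; write $v_1 = \hat u_m - p_1$ and $v_2 = u_2 - p_2$ for the realized net values. The key observation is that she then pays \emph{exactly one} unit of return cost, whatever the draw $u_2$: if $v_2 > v_1$ she keeps product~2 and returns product~1; if $v_2 \le v_1$ she keeps product~1 and returns product~2; and she never returns both, since $v_1 = \hat u_m - p_1 \ge 0 \ge -r_s$ --- the non-negativity being exactly the property established in the footnote to the baseline analysis. Hence the continuation payoff from searching is $\mathbb{E}[\max\{v_1,v_2\}] - s - r_s$, so the incremental benefit over stopping is
$$
\int_{\hat u_1 - p_1 + p_2}^{1}\!\bigl(u_2 - p_2 - (\hat u_1 - p_1)\bigr)\,du_2 - s - r_s ,
$$
and $r_s$ enters additively next to $s$. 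Setting this incremental benefit to zero for the marginal consumer and writing $a = \hat u_m - p_1 + p_2$ gives $\tfrac12(1-a)^2 = s + r_s$, which we denote $S$; hence $a = 1 - \sqrt{2S}$, the claim that $r_s$ acts like extra search cost. The second step is the monotonicity argument carried over from the baseline model: the incremental benefit is strictly decreasing in $\hat u_1$ on $\{\hat u_1 \ge p_1\}$, so the threshold $\hat u_m = a + p_1 - p_2$ separates searchers from non-searchers among those consumers; and for $\hat u_1 < p_1$ the expected extra return cost of searching is strictly below $r_s$ (a second return is triggered only on the event that the second product's net value is itself negative), so those consumers are even more willing to search and hence also all search --- consistent with the threshold since $a > p_2$ forces $\hat u_m > p_1$.

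For part~2 the argument is immediate: a consumer holding product~$i$ with realized net value $u_i - p_i$ compares keeping it, worth $u_i - p_i$, against returning it and taking the outside option, worth $0 - r_s$ because the return itself costs $r_s$; she returns if and only if $u_i - p_i < -r_s$, so $r_s$ lowers the effective reservation level for the outside option from $0$ to $-r_s$ and thereby deters returns.

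The only step that needs care is the return-cost bookkeeping in part~1: one must check that a searching consumer incurs precisely one unit of $r_s$ --- in particular, never returns both products --- and that the $\hat u_1 < p_1$ consumers still obey the same cutoff. Both points follow from arguments already used to characterize baseline behavior (the marginal consumer's net value from product~1 is non-negative, and low-$\hat u_1$ consumers face a strictly smaller expected return cost), so no genuinely new difficulty arises.
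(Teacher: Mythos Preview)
Your proposal is correct and mirrors the paper's own derivation: the paper carries out exactly this computation in the text preceding the lemma (showing the marginal searcher incurs precisely one unit of $r_s$, so $r_s$ enters additively with $s$ to give $a=1-\sqrt{2(s+r_s)}$, and that the outside option shifts from $0$ to $-r_s$), and then simply writes ``Omitted'' for the proof. Your write-up is more careful on the bookkeeping --- in particular the check that $\hat u_m - p_1 \ge 0$ rules out a double return for the marginal consumer, and the observation that $\hat u_1 < p_1$ consumers face weakly lower expected return cost and hence also search --- but these are the same points the paper makes informally in its footnote.
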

\begin{proof}
    Omitted.
\end{proof}

\subsection{Return Cost Allocation}

In real-world scenarios, when the return cost remains constant, a key consideration is how the return cost is allocated between consumers and firms. 
On digital platforms, such as e-commerce platforms, the proportion of the return cost borne by consumers and firms is often determined by the platform's policies.
For example, a platform may mandate that firms provide full refunds to consumers and cover return shipping costs, or even require firms to absorb all return cost through policies like Returnless Refund. 
Alternatively, the platform could pass some of the return cost onto consumers by allowing firms to only refund a portion of the purchase price.
This section explores whether the platform has an incentive to allocate a portion of product return cost to consumers.
\begin{proposition}\textbf{Return Cost Allocation.}

    $\exists s_0\in(0,\frac{1}{8})$, when $s\in[s_0,\frac{1}{8})$ and $r\in(0,\bar{r}(a))$, the platform has the incentive to allocate positive product return cost to consumers.
    \label{prop:return_cost_allocation}
\end{proposition}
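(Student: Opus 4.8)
The plan is to exploit Proposition~\ref{prop:platform_revenue}: a platform maximizing advertising revenue collects exactly the prominent firm's extra profit, so write $\Pi:=\pi_1-\pi_2$ (evaluated at the pricing equilibrium) for the platform's payoff. I model the allocation decision as splitting a fixed total return cost $t$: the firms bear $r=t-r_s$ and consumers bear $r_s\ge 0$, so $\mathrm dr/\mathrm dr_s=-1$. By Lemma~\ref{lem:consumer_return_cost}, raising $r_s$ also raises the consumers' effective search cost $S=s+r_s$ and lowers their outside option $o=-r_s$. Since Proposition~\ref{prop:exis_uniq_equi} gives a unique regular equilibrium at $r_s=0$, the implicit function theorem yields a smooth equilibrium selection for small $r_s$, and by the chain rule it suffices to show
\[
\frac{\mathrm d\Pi}{\mathrm dr_s}\Big|_{r_s=0} = -\,\Pi_r + \Pi_S - \Pi_o \;>\;0 ,
\]
where $\Pi_r,\Pi_S,\Pi_o$ denote the derivatives of $\Pi$ with respect to $r$, $S$, $o$ along the equilibrium price path, at the baseline $r_s=0$. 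I will take $s_0<\tfrac18$ large enough that all three terms can be signed simultaneously.

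The first term is already available: for $s\in[\underline s,\tfrac18)$ and $r\in[0,\bar r(a)]$, Proposition~\ref{prop:weak_prominence's_disadvantage} gives $\Pi_r<0$, hence $-\Pi_r>0$; it suffices to take $s_0\ge\underline s$. For the second term I would isolate the direct (fixed-price) effect. Using \eqref{eq:pi1}--\eqref{eq:pi2} one has $\Pi=p_1 q_1+r k_1-(p_2+r)q_2$, and differentiating the uniform-distribution closed forms of $d_1^n,k_1,d_2^r,k_2$ in $a=1-\sqrt{2S}$ collapses to the compact expression
\[
\frac{\partial\Pi}{\partial a}\Big|_{\text{prices fixed}} = r\,a-(1-a)\,(p_1+p_2+r).
\]
Since $\partial a/\partial S<0$, the direct effect of $S$ on $\Pi$ is nonnegative precisely when $r(2a-1)\le(1-a)(p_1+p_2)$; invoking $p_i>1-a-r$ from Proposition~\ref{prop:exis_uniq_equi}, this holds once $a$ is close enough to $\tfrac12$, i.e.\ once $s$ is close enough to $\tfrac18$. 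The residual cross-price piece of $\Pi_S$, namely $(\partial\pi_1/\partial p_2)(\mathrm dp_2/\mathrm dS)-(\partial\pi_2/\partial p_1)(\mathrm dp_1/\mathrm dS)$, is controlled by the same $s\to\tfrac18$ restriction, which (as in the proofs of Propositions~\ref{prop:possibility_of_prominence_disadvantage} and~\ref{prop:weak_prominence's_disadvantage}) confines the equilibrium prices to the narrow band $\big(1-a-r,\ (1-r)/2\big]$; I would differentiate the firms' first-order conditions to express $\mathrm dp_i/\mathrm dS$ and bound it on this band, giving $\Pi_S\ge 0$.

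The third term is the genuinely new ingredient. Lowering the outside option from $0$ to $-r_s$ changes behavior only for searchers whose two purchased products both have negative net utility, i.e.\ the mass $k_0$ in Figure~\ref{fig:demand and return}: re-deriving the demand system with outside option $-r_s$, the rectangle $\{u_1<p_1,\ u_2<p_2\}$ splits into a mass $\approx p_2 r_s$ now retaining Firm~1's product, a mass $\approx p_1 r_s$ now retaining Firm~2's, and only $(p_1-r_s)(p_2-r_s)$ still returning both. Since $p_1<p_2$ (Lemma~\ref{lem:relative_price}), the prominent firm recaptures strictly more of these consumers and avoids strictly more return losses, and a short computation turns this into
\[
\frac{\partial\Pi}{\partial o}\Big|_{\text{prices fixed}} = -\,r\,(p_2-p_1)<0 ,
\]
so $-\Pi_o>0$ up to the analogous cross-price correction, again bounded via the $s\to\tfrac18$ price band. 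Adding the three contributions gives $\mathrm d\Pi/\mathrm dr_s|_{r_s=0}>0$, so the ad-revenue-maximizing platform strictly prefers $r_s>0$; taking $s_0$ to be the largest of $\underline s$ and the thresholds required above proves the claim for every $r\in(0,\bar r(a))$.

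I expect the binding difficulty to be the cross-price correction terms in $\Pi_S$ and $\Pi_o$: because the pricing equilibrium has no closed form, the induced responses $\mathrm dp_i/\mathrm dS$ and $\mathrm dp_i/\mathrm do$ are not explicit and could, a priori, reverse the favorable direct effects. Restricting to $s$ near $\tfrac18$ is the device that tames this---forcing $a$ near $\tfrac12$ squeezes the prices into $\big(1-a-r,(1-r)/2\big]$, and after using the equilibrium first-order conditions to eliminate the second-order terms these corrections can be bounded---which is exactly why the statement asserts only the existence of a threshold $s_0$ rather than covering all search costs. A secondary, purely mechanical task is re-deriving the demand, return, and profit expressions of Section~\ref{subsec:model setup} with a nonzero outside option, which is needed to evaluate $\Pi_o$ cleanly.
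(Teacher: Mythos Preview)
Your decomposition $-\Pi_r+\Pi_S-\Pi_o$ is a valid chain-rule split and the idea of importing Proposition~\ref{prop:weak_prominence's_disadvantage} wholesale for $-\Pi_r>0$ is a nice shortcut the paper does not take. The paper instead writes $\tfrac{d(\pi_1-\pi_2)}{dr_s}\big|_{r_s=0}$ directly as a sum of four \emph{partial} effects at fixed prices (firm-return-cost, outside-option/demand, search-cost, and a single unified ``prices'' term) and then shows the whole sum is positive for $a$ near $\tfrac12$. Your ``direct'' pieces for $\Pi_S$ and $-\Pi_o$ reproduce the paper's search-cost term $p_1+p_2+2r-\tfrac{r}{1-a}$ and demand term $r(p_2-p_1)$, so the bookkeeping is consistent.

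Where the two plans diverge is in how the price adjustments are handled, and this is where your proposal carries real risk. The paper never signs the search-cost channel on its own: it lumps that term together with the other partials and with the piece of the price term \emph{not} multiplied by $\tfrac{dp_2}{dr_s}$, and proves that combined expression is positive. Its remaining step, and the one you are missing, is to show $\tfrac{dp_2}{dr_s}>0$ (prices rise when return cost shifts to consumers); once that is established, the coefficient on $\tfrac{dp_2}{dr_s}$ is already known to be positive from the proof of Proposition~\ref{prop:weak_prominence's_disadvantage}, so the price correction helps rather than hurts. Your plan instead requires signing $\Pi_S\ge 0$ \emph{including its own cross-price correction}, i.e.\ controlling $dp_i/dS$ and $dp_i/do$ separately, and there is no guarantee those individual pieces are nonnegative---the paper's argument works precisely because it aggregates them. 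A smaller issue: the bound $p_i>1-a-r$ you invoke for the direct $\Pi_S$ sign is vacuous once $r>1-a$, which occurs throughout most of $(0,\bar r(a))$; the conclusion you want (direct effect $\ge 0$ near $a=\tfrac12$) is still true, but it follows from $r(2a-1)\to 0$ and $p_1+p_2$ bounded away from zero on $(0,\bar r(a))$, not from that lower bound.
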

\begin{proof}
    See the Appendix \ref{proof:return_cost_allocation}. 
\end{proof}

Allocating a portion of the product return cost to consumers will influence the platform's advertising fees through four key channels.
The first is the impact via the firm's return cost. 
By shifting some return cost to consumers, firms will bear lower return cost, which tends to increase advertising fees, as firms have more incentive to bid for prominence when their return cost decrease.
The second is the impact via demand when competing with outside options.
Since $p_2 > p_1$, the prominent firm has more opportunities to compete with the outside option. 
This larger demand benefits the prominent firm more, thereby increasing advertising fees.
The third is the impact via search cost.
Allocating return cost to consumers increases their search cost, which may discourage them from searching for the non-prominent product. 
This impact is ambiguous because, while fewer searches for the non-prominent product may lead to higher demand for the prominent firm, it could also reduce the non-prominent firm's losses from product returns, balancing the effect.
The forth is the impact via prices.
The allocation will likely lead to price increases, as the search cost rises and the return cost for firms decreases. 
However, the effect of price changes on advertising fees is complex, as the lack of a closed-form solution prevents a straightforward conclusion.

We have demonstrated that, when the search cost is high, the positive effects of the first two channels can dominate, causing advertising fees to increase as the consumer-borne return cost ($r_s$) rises from zero. 
Therefore, a platform seeking to maximize advertising fees has an incentive to allocate a positive portion of product return cost to consumers.

In practice, several e-commerce platforms, such as Taobao and Tmall, have required firms to purchase return shipping insurance services to attract buyers. 
This type of insurance effectively transfers most of the product return cost to the firms, as they pay for the insurance while consumers receive the compensation. 
However, recently, recognizing that the high return cost deters firms from participating in promotions and advertising,
Taobao and Tmall announced the unbinding of return shipping insurance during the ``Double 11'' promotion.
Additionally, Taobao and Tmall are actively providing subsidies for return shipping insurance to further reduce firms' costs.

Previous literature suggests that firms should balance the additional revenue from increased consumer spending \citep{petersen2015perceived} against the extra return costs incurred by absorbing more of the return burden themselves \citep{mukhopadhyay2005optimal, anderson2009option, bower2012return}, indicating that it may not be optimal for firms to bear the full return costs. 
In our case, although the return costs are not directly borne by the platform, the e-commerce platform should also avoid placing the entire return burden on firms. 
However, the intuition is quite different. 
In prior studies, the reasoning behind not allocating all return costs to firms is to minimize the losses from those costs. 
For the platform, the motivation is to prevent significantly reducing the extra profits firms gain from being in a prominent position.



\section{Extension}\label{sec:extension}
\subsection{Correlated Match Values}



In the benchmark model, we assume that the match values of the two products are independently distributed. 
A natural extension is to consider the case where match values are correlated. 
We argue that, by incorporating correlated match values, we uncover another mechanism that supports the disadvantage of prominence, in addition to the mechanism discussed earlier. 
To justify this argument, we present a stylized example.

Assume that a consumer's match value $u_i$ for a product consists of two components, similar to but not exactly the same as those in \cite{chen2011paid}: 
a common value $\tilde{u}$ and a personalized value $\tilde{u}_i$, where $i=1,2$, and the match value is given by $u_i = \tilde{u} \cdot \tilde{u}_i$. 
In this case, the common match value $\tilde{u}$ represents the overall degree of match between the consumer and the type of products sold by both firms.
$\tilde{u}$ takes the value $1$ with probability $\alpha \in (0,1]$, meaning the consumer is matched with the product type and will proceed to consider purchasing. 
With probability $1 - \alpha$, $\tilde{u} = 0$, indicating that the consumer is not matched with the product type and will exit the market.
If $\tilde{u} = 1$, the match value becomes $u_i = \tilde{u}_i$, where $\tilde{u}_i$ represents the personalized match value for product $i$, similar to the independent random variables assumed in the benchmark model. 
The personalized match values for both products follow independent and uniform distributions.

Given this new setup, and based on the analysis in Section \ref{sec:a search model with product returns}, we can derive the following augmented consumer behavior rule.

\begin{lemma}\textbf{Consumer Search and Return Strategy with Correlated Match Values.}

    1. If $\tilde{u}=0$, consumers return the product from Firm 1 and exit the search process.
    
    2. If $\tilde{u}=1$ and $\tilde{u}_1 \geq a - p_2 + p_1$, consumers will retain the product from Firm 1 and stop searching.
    
    3. If $\tilde{u}=1$ and $\tilde{u}_1 > a - p_2 + p_1$, consumers will continue to search for and purchase products from Firm 2. If the net utility realized from the products of both firms is negative, i.e., $\tilde{u}_i - p_i < 0, i=1,2$, consumers will return both products simultaneously; otherwise, consumers will return the product with the lower realized net utility and retain the product with the higher realized net utility.
    \label{lem:consumer_behavior_correlated}
\end{lemma}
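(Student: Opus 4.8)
The plan is to proceed by backward induction exactly as in Section~\ref{sec:a search model with product returns}, conditioning on the realized common value $\tilde{u}$ once the consumer has inspected and purchased Firm~1's product. The crucial observation is that, since $u_1=\tilde{u}\cdot\tilde{u}_1$ and $\tilde{u}_1$ is drawn from a continuous distribution on a positive interval, the realization $u_1$ perfectly reveals $\tilde{u}$: the event $\{u_1=0\}$ coincides with $\{\tilde{u}=0\}$ (probability $1-\alpha$), while any strictly positive $u_1$ occurs on $\{\tilde{u}=1\}$. Moreover, conditional on $\tilde{u}=1$, the components $\tilde{u}_1$ and $\tilde{u}_2$ are independent, so observing $u_1$ conveys no information about $u_2$ beyond pinning down the common component. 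Hence after the first purchase the consumer's continuation problem splits cleanly into two cases with degenerate posteriors, and since searching the first product is costless every consumer inspects and purchases Firm~1's product regardless of the (common) prior over $\tilde{u}$, so this split exhausts on-path behaviour.

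First, suppose $\tilde{u}=0$, i.e.\ the consumer has learned $u_1=0$. Keeping Firm~1's product yields net utility $-p_1$, which is weakly dominated by returning it for the outside option $0$ (strictly, unless $p_1=0$, in which case we take the consumer to return). Searching Firm~2 is pointless: its match value must also be $u_2=\tilde{u}\cdot\tilde{u}_2=0$, so the best continuation payoff available from a second search is $\max\{0,-p_2\}-s=-s<0$, strictly worse than returning Firm~1's product now. Therefore the consumer returns Firm~1's product and exits, giving Part~1.

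Second, suppose $\tilde{u}=1$. Then $u_i=\tilde{u}_i$ for $i=1,2$, the two match values are i.i.d.\ uniform, and the consumer's remaining decision problem is identical to the baseline problem analysed immediately before Lemma~\ref{lem1}. Quoting that argument: the marginal consumer indifferent about a second search has reservation match value $a=1-\sqrt{2s}$ determined by $\int_{a}^{1}(u_2-a)\,du_2-s=0$; the induced threshold on $\tilde{u}_1$ is $a-p_2+p_1$; and since prices and the return technology are unchanged, the post-purchase return rule (retain the higher-net-utility product, return the other, or return both when both net utilities are negative) carries over verbatim. This delivers Parts~2 and~3, with the understanding that the inequality in Part~3 is the complement of that in Part~2, namely $\tilde{u}_1<a-p_2+p_1$.

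The only non-mechanical step is the informational one in the first paragraph — verifying that the first observation degenerates the posterior on $\tilde{u}$ and carries no side information about $\tilde{u}_2$ — because once that is in place the two cases are, respectively, immediate and a direct invocation of Lemma~\ref{lem1}. I do not anticipate any genuine obstacle beyond stating this observation cleanly; the substantive content of the lemma is simply that the common-value channel operates only through the first purchase's realization.
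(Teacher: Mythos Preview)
Your proposal is correct and complete. The paper itself omits the proof entirely (writing simply ``Omitted''), so there is nothing to compare against; your argument---observing that the first purchase perfectly reveals $\tilde{u}$ almost surely, then splitting into the trivial $\tilde{u}=0$ case and the $\tilde{u}=1$ case that reduces verbatim to Lemma~\ref{lem1}---is exactly the natural reconstruction the authors presumably had in mind, and you are right to flag that the inequality in Part~3 should read $\tilde{u}_1<a-p_2+p_1$ as the complement of Part~2.
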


\begin{proof}
    Omitted.
\end{proof}

Therefore, the profit functions of both firms can also be changed as
$$
\begin{aligned}
    \pi_1 &= \alpha\left[p_1 \cdot q_1(p_1,p_2) - r \cdot q_2(p_1,p_2) - r \cdot k_0\right] - (1 - \alpha) \cdot r\\
    \pi_2 &= \alpha\left[p_2 \cdot q_2(p_1,p_2) - r \cdot k_1 - r \cdot k_0\right] 
\end{aligned}
$$

For brief, we let $p_1=p_2=p$ and calculate the extra profit of the prominent firm as
$$
\pi_1-\pi_2=\underbrace{\alpha p (q_1-q_2)}_{\text{ Advantage of Prominence }}  \underbrace{-\alpha r (q_2-k_1)}_{\text{ Indirect Mechanism}} \underbrace{-(1-\alpha)r}_{\text{ Direct Mechanism }}
$$


We can decompose the extra profit into three parts.
The first is the advantage of prominence. 
This remains positive, as in the benchmark model, where the prominent firm benefits from being searched first.
The second is the indirect mechanism supporting the disadvantage of prominence. 
This mechanism is thoroughly discussed in our main results. 
It illustrates how increased product return cost and reduced demand for the second firm can erode the prominent firm's profitability.
The third is direct mechanism supporting the disadvantage of prominence. 
This new mechanism arises due to the correlated match values.


The underlying economics of the direct mechanism is that when the match values of the two products are correlated, consumers can update their beliefs about the second product after trying the first one. 
If, after experiencing the first product, consumers anticipate that the match value of the second product is likely to be low, they will refrain from searching for or purchasing it. 
As a result, the second firm avoids the return costs associated with dissatisfied consumers who would otherwise return the product.

In our stylized example, after trying the first product, if consumers realize they do not favor this type of product, they will not search for the second product at all. 
Consequently, only the first firm bears the product return losses from consumers who discover they do not like this category of product, which amplifies the disadvantage of prominence.


In summary, compared to the independent match values scenario, correlated match values intensify the disadvantage of prominence by triggering this direct mechanism.

\subsection{Observable Price}


In the benchmark model, we assume that consumers have to pay search costs of $s$ to find the price of the product.
Sometimes consumers don't need to pay search cost $s$ to discover the price because the price is easily found on the e-commerce platform.
However, even if we assume that consumers don't have to pay search costs to discover the price, they still pay search costs to find the product before they buy it, otherwise there would be no search prominence.
In this part, we try to extend our main results from unobservable price to observable price.

Although we can still prove the existence and uniqueness of the price equilibrium and the existence of the weak advantage of prominence in Appendix \ref{results_proof_with_observable_price}, we can find some new properties of price competition.


The first new feature is that the non-prominent firm may not always charge a higher price than the prominent firm.
The economic intuition is that this is because consumers can observe the price of the second product before they search for it.
If the non-prominent firm sets a lower price, it may not only have a greater demand among consumers who have decided to search for it, but also attract more consumers to search for and purchase it.
Therefore, when the product return cost is low, the non-prominent firm has an incentive to attract consumers to search for its product by setting a relatively low price.
However, if the product return cost is high enough, the non-prominent firm also has an incentive to set a relatively high price to discourage consumers from searching for its product in order to reduce the product return scale.
See Appendix \ref{relationship_between_prices} for more details.


Second, with the same intuition, as the cost of return increases, the price of the non-prominent firm may not always decrease.
When return cost is low, the non-prominent firm will lower its price as return cost rises to increase demand and reduce product returns. 
However, if return cost is high, the non-prominent firm will raise its price to discourage purchases and, in turn, reduce the product return rate. 
While \cite{PETRIKAITE2018} highlights a similar intuition, we use a sequential search framework to clarify it more explicitly.
However, the prominent firm's price will always decrease as the cost of return increases because the prominent firm cannot reduce product returns by setting a higher price.
See Appendix \ref{impact_of_return_costs_on_prices} for more details.


Third, unlike the corollary \ref{coro:industry_profits}, the non-prominent firm's profits will certainly fall as the cost of return rises.
The reason is that, since prices are observable, the non-prominent firm will take into account the impact of its price on consumers' search criterion.
Therefore, we can directly apply the envelope theorem when considering the impact of return costs on the non-prominent firm's profits.
The difficulty in proving the corollary \ref{coro:industry_profits} disappears.
See Appendix \ref{firms_profit} for more details.

\section{Concluding Remarks}\label{sec:concluding}


This paper develops a duopoly model incorporating product returns into the traditional sequential consumer search framework. 
The analysis reveals that the extremely high return cost can lead to the ``Disadvantage of Prominence'', where the prominent firm's profit is lower than that of its non-prominent competitor, regardless of whether pricing is exogenously fixed or determined by competitive equilibrium. 
Additionally, the platform's advertising revenue declines as the return cost increases when advertising investments determine the display order of products. 
Higher return costs also leads to a reduction in overall industry profits, while consumer surplus increases.
Finally, when a platform maximizing advertising fees can choose how to distribute the return cost between firms and consumers, it will always allocate a positive portion of the return cost to consumers.



These findings offer two key managerial insights.
Firstly, for an advertising revenue-maximizing platform, it is preferable to allocate positive return cost to consumers. 
In practice, a platform should avoid transferring all return cost from consumers to firms by forcing them to cover return shipping through insurance or similar policies, which is common on some major e-commerce platforms.
Secondly, different platforms adopt distinct return policies based on their management goals. 
An advertising fee-maximizing or industry profit-maximizing platform should implement a lenient product return policy for firms, aiming to reduce their return cost. 
On the other hand, platforms focused on maximizing consumer surplus or trading volume should adopt stricter return policies to control prices effectively.

Future research could expand on this paper in several directions. 
First, in reality, there might be heterogeneity in return costs due to inadequate investment in return systems by some firms. 
Second, the return costs borne by businesses may serve as a signaling mechanism. 
Intuitively, businesses or platforms willing to bear higher return costs often have higher product quality, potentially earning them a higher reputation. 
Third, focusing on returns as the main object of study to analyze the factors influencing the scale of returns could deepen the understanding of return issues.

\section*{Acknowledgments}
Omitted.
\theendnotes

\bibliographystyle{chicagoa}
\bibliography{ref}

\appendix
\section{Proof of Main Results}
\subsection{Proof of Proposition \ref{prop:exogenous_price}.}\label{proof:exogenous_price}
Let $ p_1 = p_2 = p $, we have:
$$ 
\pi_1 - \pi_2 = (p - pa - ra)(1 - a) 
$$
Since $a\in \left(\frac{1}{2},1\right)$,  $ \pi_1 - \pi_2 < 0 $ if and only if:
$$
p - pa - ra<0
$$
The inequality can be simplified as:
$$
r>\frac{1-a}{a}p
$$

\subsection{Proof of Proposition \ref{prop:exis_uniq_equi}.}\label{proof_of_exis_uniq_equi}
    In order to find the equilibrium price $(p_1,p_2)$, we need to find such pair of price which both firms will not deviate from.

    For firm 1, suppose that it unilateral deviates from $p_1$ to $p$.
    The profits of firm 1 is
    $$
        \pi_1(p)=(p+r)[d_1^n(p)+k_1(p)]-r
    $$
    where, 
    $$
    \begin{aligned}
        d_1^n(p)&=1-F(a+p-p_2)\\
        &=1-(a+p-p_2)\\
        k_1(p)&=\int_{p}^{a+p-p_2}F(u_1-p+p_2)f(u_1)d u_1\\
        &=\frac{1}{2}(a-p_2)(a+p_2)=k_1
    \end{aligned}
    $$
    In order to maximize the profits, the first order condition of firm 1 is
    $$
        \frac{\partial \pi_1(p)}{\partial p}=d_1^n(p)+k_1(p)-(p+r)=0
    $$

    The best response function of firm 1 is:
    \begin{equation}
        b_1(p_2)=\left\{
            \begin{aligned}
                &\frac{1}{2}[1-a-r+p_2+k_1],\text{ if }1-a-r+p_2+k_1\geq0\\
                &0,\text{ otherwise }
            \end{aligned}
        \right.
        \label{eq:br1}
    \end{equation}

    And the second order condition is satisfied
    $$
    \frac{\partial^2 \pi_1(p)}{\partial p^2}=-2<0
    $$
    The profits of firm 2 is
    $$
        \pi_2(p)=(p+r)[d_2^r(p)+k_2(p)]-r[1-d_1^n(p)]
    $$
    where,
    $$
    \begin{aligned}
        d_2^r(p)&=F(a+p_1-p_2)\left[1-F(a+p-p_2)\right]\\
        &=(a+p_1-p_2)\left[1-(a+p-p_2)\right]\\
        k_2(p)&=\int_{p}^{a+p-p_2}F(u_2+p_1-p)f(u_2)d u_2\\
        &=\frac{1}{2}(a-p_2)(a-p_2+2p_1)=k_2\\
        d_1^n(p)&=1-F(a+p_1-p_2)\\
        &=1-(a+p_1-p_2)
    \end{aligned}
    $$
    In order to maximize the profits, the first order condition of firm 2 is
    $$
        \frac{\partial \pi_2(p)}{\partial p}=d_2^r(p)+k_2(p)+(p+r)[-(a+p_1-p_2)]=0
    $$
    Set $p=p_2$, we can get:
    \begin{equation}
        (a+p_1-p_2)\left[1-a\right]+k_2-(p_2+r)(a+p_1-p_2)=0
        \label{eq:br2}
    \end{equation}
    And the second order condition is satisfied
    $$
    \frac{\partial^2 \pi_2(p)}{\partial p^2}=-2(a+p_1-p_2)<0
    $$
    Set $h_2=a+p_1-p_2$, recall $k_2=\int_{p_2}^{a}(u_2+p_1-p_2)d u_2$, we have
    $$
        \frac{k_2}{h_2}=\int_{p_2}^{a}\frac{u_2+p_1-p_2}{a+p_1-p_2}d u_2
    $$

    For $\forall p_1\in [0,a]$, when $p_2$ increases, $\frac{u_2+p_1-p_2}{a+p_1-p_2}$ decreases and $\frac{k_2}{h_2}$ decreases.
    
    Therefore, the price determined function of firm 2 is:
    \begin{equation}
        p_2=\left\{
        \begin{aligned}
            &1-a-r+\frac{k_2}{h_2},\text{ if }1-a-r+\int_{0}^{a}\frac{u_2+p_1}{a+p_1}d u_2\geq 0\\
            &0,\text{ if }1-a-r+\int_{0}^{a}\frac{u_2+p_1}{a+p_1}d u_2< 0
        \end{aligned}
    \right.
    \label{eq:condition_2}
    \end{equation}


    Because $0<\frac{u_2+p_1-p_2}{a+p_1-p_2}<1$, we can get $0<\frac{k_2}{h_2}<a-p_2$.
    If $p_2=\frac{1-r}{2}$, we have $\frac{1-r}{2}=1-r-p_2>\max\{1-r-a+\frac{k_2}{h_2},0\}$, i.e. the LHS of Equation \ref{eq:condition_2} is larger than RHS.
    If $p_2=0$, we have $0\leq \max\{1-r-a+\frac{k_2}{h_2},0\}$, the LHS of Equation \ref{eq:condition_2} is no larger than RHS.
    Therefore, $\exists$ a unique solution $p_2\in\left[0,\frac{1-r}{2}\right]$ for $\forall p_1\in[0,a]$.
    That is, $b_2(p_1)\in[0,\frac{1-r}{2}]$, where $p_1\in[0,a]$.

    If $1-a-r+p_2+k_1\geq0$, recall equation \ref{eq:br1}, we can get
    $$
    \begin{aligned}
        b_1(p_2)&=\frac{1}{2}(1-a-r+p_2+k_1)\\
        &=\frac{1}{2}\left[1-a-r+p_2+\frac{1}{2}(a^2-p_2^2)\right]
    \end{aligned}
    $$

    Differentiate with respect to $p_2$, we have $b_1'(p_2)=\frac{1}{2}(1-p_2)\in[0,\frac{1}{2}]$.
    Because $p_2\in[0,\frac{1-r}{2}]$, we have:
    $$
    \begin{aligned}
        b_1(p_2)&\geq \frac{1}{2}[1-a-r+\frac{1}{2}a^2]\\
        b_1(p_2)&\leq \frac{1}{2}(1-r)+\frac{1}{2}\left(a-\frac{1-r}{2}\right)\left(\frac{a}{2}+\frac{1-r}{4}-1\right)
    \end{aligned}
    $$
    Because $\frac{a}{2}+\frac{1-r}{4}-1<\frac{1}{2}+\frac{1}{4}-1-\frac{r}{4}<0$, we can get $b_1(p_2)\leq \frac{1}{2}(1-r)$.
    Therefore, for $r\in[0,1]$, we have $b_1(p_2)\in \left[0,\frac{1-r}{2}\right]$.

    For $p_1=b_1\left(b_2\left(p_1\right)\right)$, there at least exists a fixed point $p_1\in[0,\frac{1-r}{2}]$. 
    And there at least exists a pair of fixed point $(p_1,p_2)$ satisfying $p_i\in[0,\frac{1-r}{2}],i=1,2$.
    
    The next part is to show the uniqueness.

    Recall the equation \ref{eq:condition_2}, if $1-a-r+\int_{0}^{a}\frac{u_2+p_1}{a+p_1}d u_2\geq 0$ we have
    \begin{equation}
        \begin{aligned}
            p_2&=1-a-r+\int_{p_2}^{a}\frac{u_2+p_1-p_2}{a+p_1-p_2}d u_2\\
            &=1-a-r+\int_{p_2}^{a}\frac{u_2+b_1(p_2)-p_2}{a+b_1(p_2)-p_2}d u_2
        \end{aligned}
        \label{eq:determine_p2}
    \end{equation}

    Because when $p_2\in[0,a]$, $b_1'(p_2)\in[0,\frac{1}{2}]$, we have $\left[b_1(p_2)-p_2\right]'<0$.
    If $p_2$ increases, the LHS of equation \ref{eq:determine_p2} increases, and the RHS decreases.

    If $1-a-r+\int_{0}^{a}\frac{u_2+p_1}{a+p_1}d u_2<0$, $p_2=0$.

    Therefore, there always exists a unique solution $p_2$ for $p_1\in[0,a]$.
    Combine Equation \ref{eq:br1} and Equation \ref{eq:condition_2}, we can have 
    $$
    \begin{aligned}
        p_1&=1-a-r+\frac{1}{2}\left[k_1+\frac{k_2}{h_2}\right]\\
        p_2&=1-a-r+\frac{k_2}{h_2}
    \end{aligned}
    $$
    And it's obvious that $p_1>1-a-r,p_2>1-a-r.$

    For $p_1$, when $\frac{1}{2}[1-a-r+p_2+k_1],\text{ if }1-a-r+p_2+k_1\geq0$, because $b_1'(p_2)>0$, there exists a unique solution for given $p_2\in[0,a]$.
    When $\frac{1}{2}[1-a-r+p_2+k_1],\text{ if }1-a-r+p_2+k_1<0$, there also exists a unique solution.
    So there always exists a unique $p_1$ for any $p_2\in[0,a]$.
    
    Therefore, there exists a unique equilibrium $(p_1,p_2)$ satisfying $p_i\in[0,\frac{1-r}{2}],i=1,2, \forall r\in[0,1]$.

    Monopoly case:
    $$
    \pi_m(p)=p\cdot \left[1-F(p)\right]-r\cdot F(p)=p\cdot(1-p)-r\cdot p.
    $$

    The optimal price is $p=\frac{1-r}{2}$, profit is $\pi_m=\frac{(1-r)^2}{4}$.

\subsection{Proof of Lemma \ref{lem:relative_price}.}\label{proof:relative_price}
Recalling equation \ref{eq:br1} and equation \ref{eq:br2}, combine these two equations and notice that $h_2=a+p_1-p_2<1$, we can get:

if $p_1>0,p_2>0$:

$$
\begin{aligned}
    p_2-p_1&=\frac{1}{2}\left(\frac{k_2}{h_2}-k_1\right)\\
    &>\frac{1}{2}(k_2-k_1)\\
    &=\frac{1}{2}(a-p_2)(p_1-p_2).
\end{aligned}
$$
    
Therefore we have
$$
    (p_2-p_1)\left[1+\frac{1}{2}(a-p_2)\right]>0.
$$

if $p_1=0,p_2>0$:
$$
p_2-p_1=p_2>0.
$$

We have proved that $p_2-p_1>0.$
    
\subsection{Proof of Proposition \ref{prop:price_change}.}\label{proof:return_on_prices}
First, given $p_i>0,i=1,2$, prove that when $r$ increases or $s$ decreases, $p_1$ and $p_2$ will decrease.

The equilibrium can be characterized by
\begin{equation}
    \begin{aligned}
        p_1=&b_1(p_2)=\frac{1}{2}\left[1-a-r+p_2+\frac{1}{2}\left(a^2-p_2^2\right)\right]\\
        p_2&=1-a-r+\int_{p_2}^{a}\frac{u_2+b_1(p_2)-p_2}{a+b_1(p_2)-p_2}d u_2
    \end{aligned}
    \label{eq:equilibrium_system_positive}
\end{equation}

When $r$ increases, $b_1(p_2)$ will decrease, therefore, the RHS of equation \ref{eq:determine_p2} will also decrease.
So $p_2$ will decrease and because $b_1'(p_2)>0$, $p_1$ will also decrease.



And because $p_1<p_2$, with $r$ increasing, $p_1$ will first go to $0$.
Letting $p_1=0$ in the equilibrium system \ref{eq:equilibrium_system_positive},  we can calculate that $r=\frac{1-2a+\sqrt{4a^2-4a+9}}{4}$.
Let $\bar{r}(a)=\frac{1-2a+\sqrt{4a^2-4a+9}}{4}$. 
We have proved $p_i>0,i=1,2$ when $r\in[0,\bar{r}(a))$.
For $\bar{r}(a)$, we have $\frac{d \bar{r}(a)}{d a}=\frac{2a-1-\sqrt{4a^2-4a+9}}{\sqrt{4a^2-4a+9}}.$
Because $2a-1-\sqrt{4a^2-4a+9}<1-2\sqrt{2}<0$.
Therefore, $\bar{r}(a)\in \left(\frac{1}{2},\frac{\sqrt{2}}{2}\right)$.

Second, given $p_1=0,p_2>0$.

The equilibrium can be characaterized by
\begin{equation}
        p_2=1-a-r+\int_{p_2}^{a}\frac{u_2-p_2}{a-p_2}d u_2
        \label{eq:equilibrium_system_one_positive}
\end{equation}

We can straightforwardly show that with $r$ increasing, $p_2$ will decrease.

Letting $p_1=p_2=0$ in the equilibrium system \ref{eq:equilibrium_system_one_positive}, we can calculate that $r=1-\frac{a}{2}$.
We have proved $p_1=0,p_2>0$ when $r\in[\bar{r}(a),1-\frac{a}{2})$.

Third, it's obvious that if $r\in[1-\frac{a}{2},1]$, we can have $p_1=p_2=0.$



\subsection{Proof of Proposition \ref{prop:possibility_of_prominence_disadvantage}.}\label{proof:possibility_of_prominence_disadvantage}
Because $p_1$ is the prominent firm's optimal price, we have
$$
\pi_1>\pi_1(p_2)=(p_2+r)(1-a+k_1)-r
$$
Therefore, we have
$$
\begin{aligned}
    \pi_1-\pi_2&>\pi_1(p_2)-\pi_2\\
    &=(p_2+r)\left[1-a-h_2(1-a)+(p_2-p_1)(a-p_2)\right]-r d_1^n\\
    &=\left[(p_2+r)(1-a)-r\right](1-h_2)+(p_2+r)(a-p_2)(p_2-p_1)
\end{aligned}
$$

Because $(p_2+r)(a-p_2)(p_2-p_1)>0$, the sufficient condition of $\pi_1-\pi_2>0$ is $p_2(1-a)-ra\geq0$.
Because $p_2>1-a-r$, the sufficient condition of $p_2(1-a)-ra\geq0$ is $r<(1-a)^2$.


In order to proof the second part, we just need to find some examples making $\pi_1<\pi_2$.
Consider the extreme case with $r=\bar{r}(a)$. 
The equilibrium prices are $p_1=0,p_2=2-a-2r.$

Therefore, we can have
$$
\begin{aligned}
    \pi_1-\pi_2&=\left[(p_1+r)(d_1^n+k_1)-(p_2+r)(d_2^r+k_2)-rd_1^n\right]|_{r=\bar{r}(a)}\\
    &=\bar{r}(a)(k_1-d_2^r-k_2)-p_2(d_2^r+k_2)
\end{aligned}
$$

We can calculate that
$$
k_1-d_2^r-k_2=\left[2a-2+2\bar{r}(a)\right]\cdot\left[1-2\bar{r}(a)\right]
$$

We just need to prove:
$2a-2+2\bar{r}(a)>0$ and $1-2\bar{r}(a)<0$.

For the first inequality, we can have $\bar{r}(a)-(1-a)=\frac{-3+2a+\sqrt{4a^2-4a+9}}{4}$, which is increase with $a$ for $a\in \left(\frac{1}{2},1\right)$.
Therefore, $\bar{r}(a)-(1-a)>\frac{\sqrt{2}-1}{2}>0$.

For the second inequality, because $\bar{r}(a)\in\left(\frac{1}{2},\frac{\sqrt{2}}{2}\right)$,we can have $1-2\bar{r}(a)<0$.

Therefore, $\pi_1-\pi_2<0.$

Becuase profit functions are continuous w.r.t $r$, there exists $\hat{r}(a)\in\left((1-a)^2,\bar{r}(a)\right)$, and the above inequation holds for $r\in\left(\hat{r}(a),\bar{r}(a)\right]$.

\subsection{Proof of Proposition \ref{prop:weak_prominence's_disadvantage}.}\label{proof:weak_prominence's_disadvantage}
In $\pi_1-\pi_2$, differentiate with respect to $r$:
    $$
        \frac{d(\pi_1-\pi_2)}{d r}=\frac{\partial (\pi_1-\pi_2)}{\partial r}+\frac{\partial (\pi_1-\pi_2)}{\partial p_1}\frac{d p_1}{d r}+\frac{\partial (\pi_1-\pi_2)}{\partial p_2}\frac{d p_2}{d r}
    $$
    We can calculate that
    $$
    \begin{aligned}
        \frac{\partial (\pi_1-\pi_2)}{\partial r}&=-(a-p_2)(1-a+p_1-p_2)-p_1(1-a)\\
        \frac{\partial (\pi_1-\pi_2)}{\partial p_1}&=1-a-r+\frac{a^2}{2}-2p_1+\frac{p_2^2}{2}+rp_2\\
        \frac{\partial (\pi_1-\pi_2)}{\partial p_2}&=2p_2-\frac{3}{2}p_2^2+\frac{a^2}{2}-a+r(1+p_1-2p_2)+p_1p_2
    \end{aligned}
    $$

Because we focus on $r\in \left[0,\bar{r}(a)\right),p_2>p_1>0$, recall equation \ref{eq:br1}, differentiate with respect to $r$:
    $$
        \frac{d p_1}{d r}=-\frac{1}{2}+\frac{1-p_2}{2}\frac{d p_2}{d r}
    $$

    Therefore, we have
    $$
    \begin{aligned}
        &\frac{\partial (\pi_1-\pi_2)}{\partial p_1}\frac{d p_1}{d r}+\frac{\partial (\pi_1-\pi_2)}{\partial p_2}\frac{d p_2}{d r}\\
        =&\left[1-a-r+\frac{a^2}{2}-2p_1+\frac{p_2^2}{2}+rp_2\right]\frac{d p_1}{d r}\\
        &+\left[2p_2-\frac{3}{2}p_2^2+\frac{a^2}{2}-a+r(1+p_1-2p_2)+p_1p_2\right]\frac{d p_2}{d r}\\
        =&-\frac{1}{2}+\frac{a}{2}+\frac{r}{2}-\frac{a^2}{4}+p_1-\frac{p_2^2}{4}-\frac{r}{2}p_2\\
        &+\left[-\frac{p_2^3}{4}+\left(-\frac{r}{2}-\frac{5}{4}\right)p_2^2+\left(-r+\frac{3}{2}+\frac{a}{2}-\frac{a^2}{4}+2p_1\right)p_2+\frac{1}{2}-\frac{3a}{2}+\frac{3a^2}{4}+\frac{r}{2}+rp_1-p_1\right]\frac{d p_2}{d r}
    \end{aligned}
    $$

    Because $\frac{d p_2}{d r}<0$, in order to show $\frac{\partial (\pi_1-\pi_2)}{\partial r}<0$, we just need to show:
    $$
    \begin{aligned}
        -(a-p_2)(1-a+p_1-p_2)-p_1(1-a)-\frac{1}{2}+\frac{a}{2}+\frac{r}{2}-\frac{a^2}{4}+p_1-\frac{p_2^2}{4}-\frac{r}{2}p_2&<0\\
        -\frac{p_2^3}{4}+\left(-\frac{r}{2}-\frac{5}{4}\right)p_2^2+\left(-r+\frac{3}{2}+\frac{a}{2}-\frac{a^2}{4}+2p_1\right)p_2+\frac{1}{2}-\frac{3a}{2}+\frac{3a^2}{4}+\frac{r}{2}+rp_1-p_1&>0
    \end{aligned}
    $$

    For the first inequation:
    $$
    \begin{aligned}
        &-(a-p_2)(1-a+p_1-p_2)-p_1(1-a)-\frac{1}{2}+\frac{a}{2}+\frac{r}{2}-\frac{a^2}{4}+p_1-\frac{p_2^2}{4}-\frac{r}{2}p_2\\
        =&-\frac{5}{4}p_2^2+\left(1+p_1-\frac{r}{2}\right)p_2-\frac{a}{2}+\frac{3}{4}a^2-\frac{1}{2}+\frac{r}{2}
    \end{aligned}
    $$

    In order to make sure the above equation is negative, because $p_1<\frac{1-r}{2}$,  we just need to guarantee:
    $$
    \begin{aligned}
        &\left(1+p_1-\frac{r}{2}\right)^2+5\left(-\frac{a}{2}+\frac{3}{4}a^2-\frac{1}{2}+\frac{r}{2}\right)\\
        &<\left(\frac{3}{2}-r\right)^2+5\left(-\frac{a}{2}+\frac{3}{4}a^2-\frac{1}{2}+\frac{r}{2}\right)\\
        &=-\frac{1}{4}-\frac{5}{2}a+\frac{15}{4}a^2+r(r-\frac{1}{2})<0
    \end{aligned}
    $$

    Because $r<\bar{r}(a)<\frac{\sqrt{2}}{2}$, $r(r-\frac{1}{2})<\frac{1}{2}-\frac{\sqrt{2}}{4}$, and let $a=\frac{1}{2}$, we have:
    $$
    -\frac{1}{4}-\frac{5}{2}a+\frac{15}{4}a^2+r(r-\frac{1}{2})<-\frac{1+4\sqrt{2}}{16}<0.
    $$

    Because $-\frac{1}{4}-\frac{5}{2}a+\frac{15}{4}a^2$ is a continuous function w.r.t. $a$, there exists $s_1<\frac{1}{8}$, and the above inequation holds for $s>s_1$.

    For the second inequation: $-\frac{p_2^3}{4}+\left(-\frac{r}{2}-\frac{5}{4}\right)p_2^2+\left(-r+\frac{3}{2}+\frac{a}{2}-\frac{a^2}{4}+2p_1\right)p_2+\frac{1}{2}-\frac{3a}{2}+\frac{3a^2}{4}+\frac{r}{2}+rp_1-p_1>0$,
    differentiate with respect to $p_2$:
    $$
        -\frac{3}{4}p_2^2+\left(-r-\frac{5}{2}\right)p_2-r+\frac{3}{2}+\frac{a}{2}-\frac{a^2}{4}+2p_1
    $$
    The above function will decrease with respect to $p_2$ increasing.
    Because $0<p_1<p_2<\frac{1-r}{2}$, we have
    $$
    \begin{aligned}
        &-\frac{3}{4}p_2^2+\left(-r-\frac{5}{2}\right)p_2-r+\frac{3}{2}+\frac{a}{2}-\frac{a^2}{4}+2p_1\\
        &>-\frac{3}{4}\frac{(1-r)^2}{4}-\left(r+\frac{5}{2}\right)\frac{1-r}{2}-r+\frac{3}{2}+\frac{a}{2}-\frac{a^2}{4}+2p_1\\
        &>-\frac{3}{4}\frac{(1-r)^2}{4}-\left(r+\frac{5}{2}\right)\frac{1-r}{2}-r+\frac{3}{2}+\frac{a}{2}-\frac{a^2}{4}\\
        &=\frac{7}{16}r^2+\frac{r}{8}+\frac{1}{16}+\frac{a}{2}-\frac{a^2}{4}>0
    \end{aligned}
    $$
    Therefore, $-\frac{p_2^3}{4}+\left(-\frac{r}{2}-\frac{5}{4}\right)p_2^2+\left(-r+\frac{3}{2}+\frac{a}{2}-\frac{a^2}{4}+2p_1\right)p_2+\frac{1}{2}-\frac{3a}{2}+\frac{3a^2}{4}+\frac{r}{2}+rp_1-p_1$ will increase with respect to $p_2$, and because $p_2>p_1$, we have
    $$
    \begin{aligned}
        &-\frac{p_2^3}{4}+\left(-\frac{r}{2}-\frac{5}{4}\right)p_2^2+\left(-r+\frac{3}{2}+\frac{a}{2}-\frac{a^2}{4}+2p_1\right)p_2+\frac{1}{2}-\frac{3a}{2}+\frac{3a^2}{4}+\frac{r}{2}+rp_1-p_1\\
        >&-\frac{1}{4}p_1^3+\left(\frac{3}{4}-\frac{r}{2}\right)p_1^2+\left(\frac{1}{2}+\frac{a}{2}-\frac{a^2}{4}\right)p_1+\frac{1}{2}-\frac{3}{2}a+\frac{3}{4}a^2+\frac{r}{2}
    \end{aligned}
    $$

    For the last function, differentiate with respect to $p_1$:
    $$
    \frac{1}{2}+\frac{a}{2}-\frac{a^2}{4}+\left(\frac{3}{2}-r\right)p_1-\frac{3}{4}p_1^2
    $$
    Because $0<p_1<\frac{1-r}{2}$,  we have:
    $$
    \begin{aligned}
        &\frac{1}{2}+\frac{a}{2}-\frac{a^2}{4}+\left(\frac{3}{2}-r\right)p_1-\frac{3}{4}p_1^2\\
        &>\frac{1}{2}+\frac{a}{2}-\frac{a^2}{4}\\
        &>0.
    \end{aligned}
    $$
    
    First, consider $r\in[0,1-a]$.
    We have:
    $$
    \begin{aligned}
        &-\frac{1}{4}p_1^3+\left(\frac{3}{4}-\frac{r}{2}\right)p_1^2+\left(\frac{1}{2}+\frac{a}{2}-\frac{a^2}{4}\right)p_1+\frac{1}{2}-\frac{3}{2}a+\frac{3}{4}a^2+\frac{r}{2}\\
        >&-\frac{1}{4}(1-a-r)^3+\left(\frac{3}{4}-\frac{r}{2}\right)(1-a-r)^2+\left(\frac{1}{2}+\frac{a}{2}-\frac{a^2}{4}\right)(1-a-r)+\frac{1}{2}-\frac{3}{2}a+\frac{3}{4}a^2+\frac{r}{2}\\
        >&\left(\frac{1}{2}+\frac{a}{2}-\frac{a^2}{4}\right)(1-a-r)+\frac{1}{2}-\frac{3}{2}a+\frac{3}{4}a^2+\frac{r}{2}
    \end{aligned}
    $$

    Set $a=\frac{1}{2}$:
    $$
    \left(\frac{1}{2}+\frac{a}{2}-\frac{a^2}{4}\right)(1-a-r)+\frac{1}{2}-\frac{3}{2}a+\frac{3}{4}a^2+\frac{r}{2}=\frac{9-6r}{32}>0.
    $$

    Then, consider $r\in[1-a,\bar{r}(a)]$.
    We have:
    $$
    \begin{aligned}
        &-\frac{1}{4}p_1^3+\left(\frac{3}{4}-\frac{r}{2}\right)p_1^2+\left(\frac{1}{2}+\frac{a}{2}-\frac{a^2}{4}\right)p_1+\frac{1}{2}-\frac{3}{2}a+\frac{3}{4}a^2+\frac{r}{2}\\
        &>\frac{1}{2}-\frac{3}{2}a+\frac{3}{4}a^2+\frac{r}{2}\\
        &\geq 1-2a+\frac{3}{4}a^2
    \end{aligned}
    $$

    Set $a=\frac{1}{2}$:
    $$
    1-2a+\frac{3}{4}a^2=\frac{3}{16}>0.
    $$

    Because $-\frac{1}{4}p_1^3+\left(\frac{3}{4}-\frac{r}{2}\right)p_1^2+\left(\frac{1}{2}+\frac{a}{2}-\frac{a^2}{4}\right)p_1+\frac{1}{2}-\frac{3}{2}a+\frac{3}{4}a^2+\frac{r}{2}$ is a continuous function w.r.t $a$, there exists $s_2<\frac{1}{8}$, and the above inequation holds for $s>s_2$.

    Therefore, $\exists \underline{s}=\max\{s_1,s_2\}<\frac{1}{8}$, for $s\in[\underline{s},\frac{1}{8})$, we have $\frac{d(\pi_1-\pi_2)}{d r}<0$.

\subsection{Proof of Corollary \ref{coro:industry_profits}.}\label{proof:industry_profits}
For the first part, recall the profits of the prominent firm:
    $$
    \pi_1=(p_1+r)[d_1^n+k_1]-r
    $$

    where
    $$
    \begin{aligned}
        d_1^n&=1-a+p_2-p_1\\
        k_1&=\frac{1}{2}(a^2-p_2^2)
    \end{aligned}
    $$

    Therefore, we have
    $$
    \frac{d \pi_1}{d r}=\frac{\partial \pi_1}{\partial p_1}\cdot\frac{d p_1}{d r}+\frac{\partial \pi_1}{\partial p_2}\cdot \frac{d p_2}{d r}+\frac{\partial \pi_1}{\partial r}
    $$

    According to envelope theorem, $\frac{\partial \pi_1}{\partial p_1}=0$.
    And we can also calculate that $\frac{\partial \pi_1}{\partial p_2}=(p_1+r)(1-p_2)>0$ and $\frac{\partial \pi_1}{\partial r}=d_1^n+k_1-1<0.$ Recall that $\frac{d p_2}{d r}<0$, we have $\frac{d \pi_1}{d r}<0.$

    For the second part, recall the profits of the non-prominent firm:
    $$
    \pi_2=(p_2+r)[d_2^r+k_2]-r[1-d_1^n]
    $$

    where
    $$
    \begin{aligned}
        d_2^r&=(a-p_2+p_1)(1-a)\\
        k_2&=\frac{a^2-p_2^2}{2}+(p_1-p_2)(a-p_2)
    \end{aligned}
    $$

    We have
    $$
    \frac{d \pi_2}{d r}=\frac{\partial \pi_2}{\partial p_2}\cdot\frac{d p_2}{d r}+\frac{\partial \pi_2}{\partial p_1}\cdot \frac{d p_1}{d r}+\frac{\partial \pi_2}{\partial r}
    $$

    According to envelope theorem, $\frac{\partial \pi_2}{\partial p_2}=\frac{d \pi_2(p)}{d p}|_{p=p_2}+(a-1)p_2+ar=(a-1)p_2+ar$. And we can also calculate that $\frac{\partial \pi_2}{\partial p_1}=(p_2+r)(1-p_2)-r>0$ and $\frac{\partial \pi_2}{\partial r}=d_2^r+k_2-(1-d_1^n)<0.$

    If $r>1-a$, $\frac{\partial \pi_2}{\partial p_2}=(a-1)p_2+ar>0$, therefore we have $\frac{d \pi_2}{d r}<0$.

    Recall Equation \ref{eq:br1}, we have
    $$
    \frac{d p_1}{d r}=-\frac{1}{2}+\frac{1-p_2}{2}\frac{d p_2}{d r}
    $$

    Therefore, 
    $$
        \frac{d \pi_2}{d r}=-\frac{p_2}{2}(1-p_2-r)+d_2^r+k_2-1+d_1^n+\frac{d p_2}{d r}[ap_2-p_2+ar+\frac{1-p_2}{2}p_2(1-p_2-r)]\\
    $$

    When $a=1$, we can know that $ap_2-p_2+ar+\frac{1-p_2}{2}p_2(1-p_2-r)=r+\frac{1-p_2}{2}p_2(1-p_2-r)>0$, because $\frac{d \pi_2}{d r}$ is a continuous function w.r.t. $a$, we have $\frac{d \pi_2}{d r}<0$ when $a$ is close to $1$.

    For the third part, we have:
    $$
    \frac{d (\pi_1+\pi_2)}{d r}=\frac{\partial \pi_1}{\partial r}-\frac{p_2}{2}(1-p_2-r)+d_2^r+k_2-1+d_1^n+\frac{d p_2}{d r}[ap_2-p_2+ar+\frac{1-p_2}{2}p_2(1-p_2-r)+(p_1+r)(1-p_2)]
    $$

    We only need to show $ap_2-p_2+ar+\frac{1-p_2}{2}p_2(1-p_2-r)+(p_1+r)(1-p_2)>0$.
    Because $p_1>1-a-r$, we can have:
    $$
    \begin{aligned}
        &ap_2-p_2+ar+\frac{1-p_2}{2}p_2(1-p_2-r)+(p_1+r)(1-p_2)\\
        &>(1-a)(1-2p_2)+ar+\frac{1-p_2}{2}p_2(1-p_2-r)>0
    \end{aligned}
    $$

    Therefore, we have $\frac{d (\pi_1+\pi_2)}{d r}<0.$

\subsection{Proof of Corollary \ref{coro:consumer_surplus}.}\label{proof:consumer_surplus}
From Proposition \ref{prop:price_change}, the equilibrium prices are decreasing with respect to the return costs afforded by firms. 
Suppose that when the return costs increase from $r^0$ to $r^1$, the prices decrease from $(p_1^0,p_2^0)$ to $(p_1^1,p_2^1)$. 
Assume that the lowest criterion of not searching firm 2 is $A^i=a+p_1^i-p_2^i$ when prices are $(p_1^i,p_2^i),i=0,1$, the consumer surplus given $p_1^0,p_2^0$ is:
$$
\begin{aligned}
    CS(p_1^0,p_2^0)&=E\{u_1-p_1^0|u_1\geq A^0\}\cdot Pr\{u_1\geq A^0\}\\
    &+E\{u_1-p_1^0|u_1-p_1^0\geq u_2-p_2^0,u_1\geq p_1^0, u_1<A^0\}\\
    &\cdot Pr\{u_1-p_1^0\geq u_2-p_2^0,u_1\geq p_1^0, u_1<A^0\}\\
    &+E\{u_2-p_2^0|u_2-p_2^0> u_1-p_1^0,u_2\geq p_2^0, u_1<A^0\}\\
    &\cdot Pr\{u_2-p_2^0> u_2-p_2^0,u_2\geq p_2^0, u_1<A^0\}
\end{aligned}
$$

Because $p_1^0>p_1^1,p_2^0>p_2^2$, we have:
$$
\begin{aligned}
    CS(p_1^0,p_2^0)&<E\{u_1-p_1^1|u_1\geq A^0\}\cdot Pr\{u_1\geq A^0\}\\
    &+E\{u_1-p_1^1|u_1-p_1^0\geq u_2-p_2^0,u_1\geq p_1^0, u_1<A^0\}\\
    &\cdot Pr\{u_1-p_1^0\geq u_2-p_2^0,u_1\geq p_1^0, u_1<A^0\}\\
    &+E\{u_2-p_2^1|u_2-p_2^0> u_1-p_1^0,u_2\geq p_2^0, u_1<A^0\}\\
    &\cdot Pr\{u_2-p_2^0> u_2-p_2^0,u_2\geq p_2^0, u_1<A^0\}\\
    &<E\{u_1-p_1^1|u_1\geq A^0\}\cdot Pr\{u_1\geq A^0\}\\
    &+E\{u_1-p_1^1|u_1-p_1^1\geq u_2-p_2^1,u_1\geq p_1^1, u_1<A^0\}\\
    &\cdot Pr\{u_1-p_1^1\geq u_2-p_2^1,u_1\geq p_1^1, u_1<A^0\}\\
    &+E\{u_2-p_2^1|u_2-p_2^1> u_1-p_1^1,u_2\geq p_2^1, u_1<A^0\}\\
    &\cdot Pr\{u_2-p_2^1> u_2-p_2^1,u_2\geq p_2^1, u_1<A^0\}\\
    &=CS(p_1^1,p_2^1;A^0)
\end{aligned}
$$

When $u_i\sim U[0,1],i=1,2$, $CS(p_1^1,p_2^1,A)$ can be shown as:
$$
\begin{aligned}
    CS(p_1^1,p_2^1;A)&=\int_0^{p_1^1}\int_{p_2^1}^{1}(u_2-p_2^1)du_2du_1+\int_{p_1^1}^{A}\int_{u_1-p_1^1+p_2^1}^{1}(u_2-p_2^1)du_2du_1\\
    &+\int_{0}^{p_2^1}\int_{p_1^1}^{A}(u_1-p_1^1)du_1du_2+\int_{p_2^1}^{A-p_1^1+p_2^1}\int_{u_2-p_2^1+p_1^1}^{A}(u_1-p_1^1)du_1du_2\\
    &+\int_{0}^{1}\int_{A}^{1}(u_1-p_1^1)du_1du_2-s\cdot A
\end{aligned}
$$

The first order condition of maximizing consumer surplus is:
$$
\frac{\partial CS(p_1^1,p_2^1;A)}{\partial A}=\int_{A-p_1^1+p_2^1}^{1}(u_2-p_2^1)d u_2+(A-p_1^1)(p_2^1+A-p_1^1-1)-s=0
$$

which is the same as the consumers' searching rule mentioned in section \ref{sec:a search model with product returns}:
$$
\int_{A-p_1^1+p_2^1}^{1}(u_2-A+p_1^1-p_2^1)d u_2=s,
$$

Therefore, $A^1=a+p_1^1-p_2^1=\arg\max\limits_{A}CS(p_1^1,p_2^1,A)$. We have:
$$
CS(p_1^0,p_2^0)<CS(p_1^1,p_2^1;A^0)\leq CS(p_1^1,p_2^1;A^1)=CS(p_1^1,p_2^1)
$$

The consumer surplus is increasing with product return cost increasing.

\subsection{Proof of Proposition \ref{prop:platform_revenue}.}\label{proof_of_plarform_revenue}
Let the bids of the two firms be denoted as $ b_1 $ and $ b_2 $, satisfying $ b_1 > b_2 > 0 $. 
At equilibrium, it is required that neither firm has the incentive to deviate, which implies:
$$
\begin{aligned}
    \pi_1 - b_2 &\geq \pi_2 - 0\\
    \pi_2 - 0 &\geq \pi_1 - b_1
\end{aligned}
$$
Simplifying, we get:
$ b_1 \geq \pi_1 - \pi_2 \geq b_2 $
Any bid satisfying the above inequalities can form a Nash equilibrium, and the platform's advertising revenue is given by: $ b_2 \leq \pi_1 - \pi_2 $.

Furthermore, focusing on the symmetric equilibrium, i.e.:
$$
\begin{aligned}
    \pi_1 - b_2 &\geq \pi_2 - 0\\
    \pi_2 - 0 &\geq \pi_1 - b_2
\end{aligned}
$$
At this point, the platform can achieve maximum revenue, given by:
$ b_2 = \pi_1 - \pi_2 $.

When $ r \in [0, \bar{r}(a)] $, $ \pi_1 - \pi_2 $ decreases as $ r $ increases. Additionally, when the return cost is relatively low, we always have $ \pi_1 - \pi_2 > 0 $, making the above model always valid. 
Therefore, the platform's revenue decreases as $ r $ increases.

\subsection{Proof of Proposition \ref{prop:return_cost_allocation}}\label{proof:return_cost_allocation}
Assume that the platform will transfer $r_s$ of $r$ from firms to consumers.
    The demand for both firms can be represented by Figure \ref{fig:demand_allocation}. $a=1-\sqrt{2(s+r_s)}$.
    The result we want to show is that when $r_s=0$, the platform has the incentive to allocate some product return cost to consumers under mild conditions.
    Set $r_s=0$, the equilibrium system is the same as before, and there existes a unique pricing equilibirum.
    Therefore, we only need to show that $\frac{d (\pi_1-\pi_2)}{d r_s}>0$ when $s\in[\underline{s},\frac{1}{8}]$ and $r\in(\underline{r},\hat{r}(a))$.

    \begin{figure}[H]
        \tikzset{every picture/.style={line width=1.5pt}} 
        \centering
        
        \begin{tikzpicture}[x=0.55pt,y=0.55pt,yscale=-1,xscale=1]
        \draw    (434,74) -- (434,534) ;
        \draw    (434,534) -- (894,534) ;
        \draw   (434,394) -- (574,394) -- (574,534) -- (434,534) -- cycle ;
        \draw   (574,394) -- (774,194) -- (774,534) -- (574,534) -- cycle ;
        \draw   (434,194) -- (774,194)  -- (574,394)--(434,394) -- cycle ;
        \draw   (434,114) -- (774,114) -- (774,194) -- (434,194) -- cycle ;
        \draw   (774,114) -- (854,114) -- (854,534) -- (774,534) -- cycle ;
        
        \draw (412,536) node [anchor=north west][inner sep=0.75pt]   [align=left] {0};
        \draw (400,40) node [anchor=north west][inner sep=0.75pt]   [align=left] {$\displaystyle u_{2}$};
        \draw (730,536) node [anchor=north west][inner sep=0.75pt]   [align=left] {$\displaystyle a+p_{1} -p_{2}$};
        \draw (550,538) node [anchor=north west][inner sep=0.75pt]   [align=left] {$\displaystyle p_{1}-r_s$};
        \draw (370,390) node [anchor=north west][inner sep=0.75pt]   [align=left] {$\displaystyle p_{2}-r_s$};
        \draw (396,190) node [anchor=north west][inner sep=0.75pt]   [align=left] {$\displaystyle a$};
        \draw (900,536) node [anchor=north west][inner sep=0.75pt]   [align=left] {$\displaystyle u_{1}$};
        \draw (798,294) node [anchor=north west][inner sep=0.75pt]   [align=left] {$\displaystyle d_1^n$};
        \draw (660,420) node [anchor=north west][inner sep=0.75pt]   [align=left] {$\displaystyle k_1$};
        \draw (480,450) node [anchor=north west][inner sep=0.75pt]   [align=left] {$\displaystyle k_0$};
        \draw (520,270) node [anchor=north west][inner sep=0.75pt]   [align=left] {$\displaystyle k_2$};
        \draw (566,136) node [anchor=north west][inner sep=0.75pt]   [align=left] {$\displaystyle d_2^r$};
        \end{tikzpicture}
        \caption{Demand Situation for Firms with Allocation $r_s$.}
        \label{fig:demand_allocation}
        \end{figure}
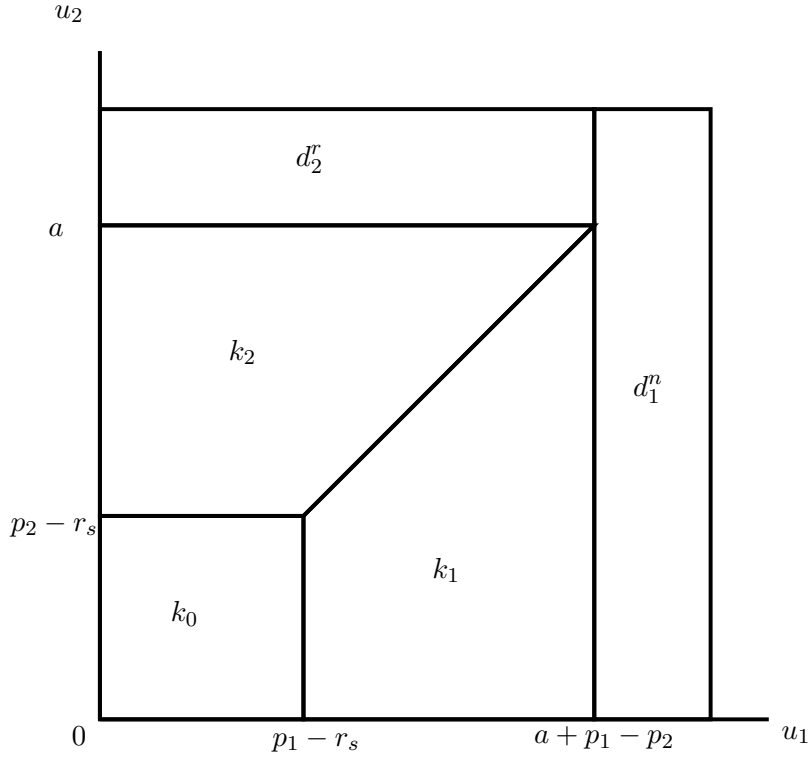

    In order to show this result, let's first calculate the following equation.
    
    First, firm's profits.

    For firm 1, the profits under equilibrium price $(p_1,p_2)$ is
    $$
    \pi_1=(p_1+r-r_s)[d_1^n+k_1]-(r-r_s)
    $$

    where
    $$
    \begin{aligned}
        d_1^n&=1-(a+p_1-p_2)\\
        k_1&=\frac{1}{2}a^2-\frac{1}{2}(p_2-r_s)^2
    \end{aligned}
    $$

    For firm 2, the profits under equilibrium price $(p_1,p_2)$ is
    $$
    \pi_2=(p_2+r-r_s)[d_2^r+k_2]-(r-r_s)[1-d_1^n]
    $$

    where
    $$
    \begin{aligned}
        d_2^r&=(1-a)(a+p_1-p_2)\\
        k_2&=\frac{1}{2}(a+2p_1-r_s-p_2)(a-p_2+r_s)\\
        d_1^n&=1-(a+p_1-p_2)
    \end{aligned}
    $$

    Second, the First Order Condition guaranting firm 1 will not deviate from the equilibrium price $p_1$.\footnote{Of course firm 2 should also satisfy the condition, but we just need to use one of them.}
    Given $p_2$, if firm 1 deviate from $p_1$ to $p$, the profit is:
    $$
    \pi_1(p)=(p+r-r_s)[d_1^n(p)+k_1(p)]-(r-r_s)
    $$

    where
    $$
    \begin{aligned}
        d_1^n(p)&=1-(a+p-p_2)\\
        k_1(p)&=\frac{1}{2}a^2-\frac{1}{2}(p_2-r_s)^2
    \end{aligned}
    $$

    The F.O.C. of firm 1 is:
    \begin{equation}
        \frac{1}{2}(1-a-r+r_s)+\frac{p_2}{2}+\frac{a^2}{4}-\frac{1}{4}(p_2-r_s)^2=p_1
        \label{eq:foc_1_allocation}
    \end{equation}
    
    In $\pi_1-\pi_2$, differentiate with respect to $r_s$ and set $r_s=0$:
    $$
    \begin{aligned}
        \frac{d (\pi_1-\pi_2)}{d r_s}&=\underbrace{\frac{\partial (\pi_1-\pi_2)}{\partial (r-r_s)}\cdot\frac{d (r-r_s)}{d r_s}}_{\text{Impact via Firm's Return Cost}}\\
        &+\underbrace{(p_1+r-r_s)\frac{d (d_1^n+k_1)}{d r_s}-(p_2+r-r_s)\frac{d (d_2^r+k_2)}{d r_s}}_{\text{Impact via Demand}}\\
        &+\underbrace{\frac{d (\pi_1-\pi_2)}{d a}\cdot \frac{d a}{d r_s}}_{\text{Impact via Search Cost}}\\
        &+\underbrace{\frac{\partial (\pi_1-\pi_2)}{\partial p_1}\frac{d p_1}{d r_s}+\frac{\partial (\pi_1-\pi_2)}{\partial p_2}\frac{d p_2}{d r_s}}_{\text{Impact via Prices}}
    \end{aligned}
    $$

    Let's calculate each part:

    Impact via Firm's Return Cost:
    $$
    \frac{\partial (\pi_1-\pi_2)}{\partial (r-r_s)}\cdot\frac{d (r-r_s)}{d r_s}=(a-p_2)(1-a+p_1-p_2)+(1-a)p_1>0
    $$

    Impact via Demand:
    $$
    (p_1+r-r_s)\frac{d (d_1^n+k_1)}{d r_s}-(p_2+r-r_s)\frac{d (d_2^r+k_2)}{d r_s}=(p_2-p_1)r>0
    $$

    Impact via Search Cost:
    $$
    \frac{d (\pi_1-\pi_2)}{d a}\cdot \frac{d a}{d r_s}=p_1+p_2+2r-\frac{r}{1-a}
    $$

    Impact via Prices:
    $$
    \begin{aligned}
        &\frac{\partial (\pi_1-\pi_2)}{\partial p_1}\frac{d p_1}{d r_s}+\frac{\partial (\pi_1-\pi_2)}{\partial p_2}\frac{d p_2}{d r_s}\\
        &=\left[-\frac{p_2^3}{4}+\left(-\frac{r}{2}-\frac{5}{4}\right)p_2^2+\left(-r+\frac{3}{2}+\frac{a}{2}-\frac{a^2}{4}+2p_1\right)p_2+\frac{1}{2}-\frac{3a}{2}+\frac{3a^2}{4}+\frac{r}{2}+rp_1-p_1\right]\frac{d p_2}{d r_s}\\
        &+(1-a-r+\frac{a^2}{2}-2p_1+\frac{p_2^2}{2}+rp_2)(1+\frac{p_2}{2})
    \end{aligned}
    $$

    According to the proof of Proposition \ref{prop:weak_prominence's_disadvantage}, for $s\in (s_2,\frac{1}{8})$, we have
    $$
    -\frac{p_2^3}{4}+\left(-\frac{r}{2}-\frac{5}{4}\right)p_2^2+\left(-r+\frac{3}{2}+\frac{a}{2}-\frac{a^2}{4}+2p_1\right)p_2+\frac{1}{2}-\frac{3a}{2}+\frac{3a^2}{4}+\frac{r}{2}+rp_1-p_1>0
    $$

    We need to show $\frac{d p_2}{d r_s}\geq0$ and 
    $$
    (a-p_2)(1-a+p_1-p_2)+(1-a)p_1+(p_2-p_1)r+p_1+p_2+2r-\frac{r}{1-a}+(1-a-r+\frac{a^2}{2}-2p_1+\frac{p_2^2}{2}+rp_2)(1+\frac{p_2}{2})\geq 0
    $$

    Because $p_2>p_1$ when $r_s=0$, we have:
    $$
    \begin{aligned}
        &(a-p_2)(1-a+p_1-p_2)+(1-a)p_1+(p_2-p_1)r+p_1+p_2+2r-\frac{r}{1-a}\\
        &+(1-a-r+\frac{a^2}{2}-2p_1+\frac{p_2^2}{2}+rp_2)(1+\frac{p_2}{2})\\
        &>\frac{p_2^3}{4}-\frac{1-r}{2}p_2^2+\frac{2+2r-2a+a^2}{4}p_2-\frac{a^2}{2}+1+r-\frac{r}{1-a}
    \end{aligned}
    $$

    In $\frac{p_2^3}{4}-\frac{1-r}{2}p_2^2+\frac{2+2r-2a+a^2}{4}p_2-\frac{a^2}{2}+1+r-\frac{r}{1-a}$, differentiate with respect to $p_2$, we can have:
    $$
    \frac{3}{4}p_2^2-(1-r)p_2+\frac{2+2r-2a+a^2}{4}
    $$

    Because $p_2\in(0,\frac{1-r}{2})$, $\frac{3}{4}p_2^2-(1-r)p_2+\frac{2+2r-2a+a^2}{4}$ is decreasing with respect to $p_2$.
    In addition, $\frac{2+2r-2a+a^2}{4}>0$, we can know that $\frac{p_2^3}{4}-\frac{1-r}{2}p_2^2+\frac{2+2r-2a+a^2}{4}p_2-\frac{a^2}{2}+1+r-\frac{r}{1-a}$ first increase and then decrease with respect to $p_2$.

    In order to make sure $\frac{p_2^3}{4}-\frac{1-r}{2}p_2^2+\frac{2+2r-2a+a^2}{4}p_2-\frac{a^2}{2}+1+r-\frac{r}{1-a}>0$, we just need
    $$
    \begin{aligned}
        \frac{p_2^3}{4}-\frac{1-r}{2}p_2^2+\frac{2+2r-2a+a^2}{4}p_2-\frac{a^2}{2}+1+r-\frac{r}{1-a}|_{p_2=0}&>0\\
        \frac{p_2^3}{4}-\frac{1-r}{2}p_2^2+\frac{2+2r-2a+a^2}{4}p_2-\frac{a^2}{2}+1+r-\frac{r}{1-a}|_{p_2=\frac{1-r}{2}}&>0
    \end{aligned}
    $$

    When $a=\frac{1}{2}$, the first inequality is $-\frac{a^2}{2}+1+r-\frac{r}{1-a}=\frac{7}{8}-r>0.$
    The second inequality is $-\frac{3}{32}(1-r)^3-\frac{r^2}{4}-\frac{29}{32}r+\frac{33}{32}$. Because $0<r\leq\bar{r}(a)<\frac{\sqrt{2}}{2}$, we have
    $$
    -\frac{3}{32}(1-r)^3-\frac{r^2}{4}-\frac{29}{32}r+\frac{33}{32}>\frac{33}{32}-\frac{29\sqrt{2}}{64}-\frac{1}{8}-\frac{3}{32}=\frac{52-29\sqrt{2}}{64}>0.
    $$

    Because $\frac{p_2^3}{4}-\frac{1-r}{2}p_2^2+\frac{2+2r-2a+a^2}{4}p_2-\frac{a^2}{2}+1+r-\frac{r}{1-a}$ is a continuous function w.r.t $a$, there exists $s_3<\frac{1}{8}$, and $\frac{p_2^3}{4}-\frac{1-r}{2}p_2^2+\frac{2+2r-2a+a^2}{4}p_2-\frac{a^2}{2}+1+r-\frac{r}{1-a}>0$ holds for $s>s_3$.

    The last part is to show $\frac{d p_2}{d r_s} \geq 0$ similar to the proof of Proposition \ref{prop:price_change}.

    Given $(p_1,p_2)$, if firm 2 deviate from $p_2$ to $p$, the profit is:
    $$
    \pi_2(p)=(p+r-r_s)[d_2^r(p)+k_2(p)]-(r-r_s)[1-d_1^n(p)]
    $$

    where
    $$
    \begin{aligned}
        d_1^n(p)&=1-(a+p_1-p_2)\\
        d_2^r(p)&=(a+p_1-p_2)(1-a+p_2-p)\\
        k_2(p)&=\frac{1}{2}(2p_1-r_s+a-p_2)\cdot(a-p_2+r_s)=\int_{p_2-r_s}^{a}(u_2+p_1-p_2) d u_2
    \end{aligned}
    $$

    The F.O.C. of firm 2 is:
    \begin{equation}
        p_2=1-a-(r-r_s)+\frac{k_2}{h_2}
        \label{eq:foc_2_allocation}
    \end{equation}    

    From Equation \ref{eq:foc_1_allocation}, we can get that $b_1(p_2)=\frac{1}{2}[1-a-(r-r_s)+p_2+\frac{a^2}{2}-\frac{(p_2-r_s)^2}{2}]$.

    We can get:
    $$
    \frac{d b_1(p_2)}{d r_s}|_{r_s=0}=\frac{1}{2}(2+p_2-r_s)>0.
    $$

    And the RHS of Equation \ref{eq:foc_2_allocation}:
    $$
    \begin{aligned}
        &\frac{d (1-a-(r-r_s)+\frac{k_2}{h_2})}{d r_s}\\
        &=(-1)\cdot \frac{d a}{d r_s}+1+\int_{p_2-r_s}^{a}\left[\frac{d (\frac{u_2+b_1(p_2)-p_2}{a+b_1(p_2)-p_2})}{d r_s}\right]du_2+\frac{a+b_1(p_2)-p_2}{a+b_1(p_2)-p_2}\cdot\frac{d a}{d r_s}-\frac{p_1-r_s}{a+p_1-p_2}\cdot(-1)\\
        &=1+\int_{p_2-r_s}^{a}\left[\frac{d (\frac{u_2+b_1(p_2)-p_2}{a+b_1(p_2)-p_2})}{d r_s}\right]du_2+\frac{p_1-r_s}{a+p_1-p_2}
    \end{aligned}
    $$

    Therefore, we have $\frac{d (1-a-(r-r_s)+\frac{k_2}{h_2})}{d r_s}|_{r_s=0}>0.$

    Hence, the equilibrium price $p_2$ satisfies: $\frac{d p_2}{d r_s}>0.$

    Therefore, $\exists s_0=\max\{s_2,s_3\}<\frac{1}{8}$, for $s\in[s_0,\frac{1}{8})$, we have $\frac{d(\pi_1-\pi_2)}{d r_s}>0$.

    \section{Results and Proof with Observable Price}\label{results_proof_with_observable_price}
    
    \subsection{The Existence and Uniqueness of Equilibrium.}\label{existence_and_uniqueness_of_equilibrium_op}
    
    \textbf{Proposition: The Existence and Uniqueness of Equilibrium.}

    Given $r\in[0,1-a]$, there exists a pair of unique Subgame Perfect Equilibrium price $(p_1^*,p_2^*)$, where $p_1\in [0,p_m]$ and $p_2\geq0$.
    When $0 \leq r \leq \bar{r}(a)$, we have have $p_i^* \in [0,p_m]$ for $i=1,2$. 

    Here, $\bar{r}(a) = 3 - 2\sqrt{-a^2 + 2a + 1}<1-a$, and $p_m = (1-r)/2$ represents the pricing of the monopolistic firm in the presence of return costs.

    \begin{proof}

    Recall the profit function equation \ref{eq:pi1} and equation \ref{eq:pi2}. The First Order Condition(F.O.C) and Second Order Condition(S.O.C) of maximizing $\pi_1$ and $\pi_2$ are:
    $$
    \begin{aligned}
        \frac{d \pi_1}{d p_1}&=\alpha\left[1-a+p_2-2p_1+\frac{a^2-p_2^2}{2}-r(1-a)-r(a-p_2)-rp_2\right]=0\\
        \frac{d^2 \pi_1}{d p_1^2}&=-2\alpha<0\\
        \frac{d \pi_2}{d p_2}&=\alpha[(a-p_2+p_1)(1-a)-p_2(1-a)+\frac{1}{2}(a^2-p_2^2)-p_2^2+(p_1-p_2)(a-p_2)\\
        &-p_2(a-p_2)-p_2(p_1-p_2)+rp_2-rp_1 ]\\
        &=\alpha\left[\frac{3}{2}p_2^2+(-2-2p_1+r)p_2+a-\frac{a^2}{2}+p_1(1-r)\right]=0\\
        \frac{d^2 \pi_2}{d p_2^2}&<0
    \end{aligned}
    $$
    The F.O.C of maximizing $\pi_1$ means the following \textbf{best response of seller 1}:
    \begin{equation}
        p_1(p_2)=-\frac{p_2^2}{4}+\frac{p_2}{2}+\frac{1}{2}-\frac{a}{2}+\frac{a^2}{4}-\frac{r}{2}
        \label{eq:br1_op}
    \end{equation}
    The S.O.C of maximizing $\pi_2$ means we should choose the smaller solution of F.O.C, so we can get the \textbf{best response of seller 2}:
    \begin{equation}
        p_2(p_1)=\frac{2+2p_1-r-\sqrt{\Delta}}{3}
        \label{eq:br2_op}
    \end{equation}
    Where
    $$
    \Delta = 4p_1^2+2p_1(1+r)+3a^2-6a+(r-2)^2
    $$
    In order to make sure $\Delta> 0$, we give the following sufficient condition:
    \begin{equation}
        \hat{r}(a)=2-\sqrt{6a-3a^2}>r
        \label{eq:dl0}
    \end{equation}
    The sufficient condition of guaranting $p_2(p_1)$ is the optimal price is:
    $$
    \frac{3}{2}a^2+(-2-2p_1+r)a+a-\frac{a^2}{2}+p_1(1-r)\leq 0
    $$
    The sufficient condition of the above inequality is:
    $$
    r \leq 1-a
    $$
    One can check that $1-a<\hat{r}(a), \text{ for } a\in(1/2,1)$, so the sufficient condition \ref{eq:dl0} always holds when $r\in[0,1-a]$.

    The following proof consists of four parts: 
    
    Firstly, we'll show that for equation \ref{eq:br1_op} and equation \ref{eq:br2_op}
    $\left(\frac{\partial p_2}{\partial p_1}\right)_{br1}>\left(\frac{\partial p_2}{\partial p_1}\right)_{br2}$.

    For equation \ref{eq:br1_op}, we have
    $$
    \begin{aligned}
        &\left(\frac{\partial p_1}{\partial p_2}\right)_{br1}=-\frac{1}{2}p_2+\frac{1}{2}\leq \frac{1}{2}\\
        &\left(\frac{\partial p_2}{\partial p_1}\right)_{br1}\geq 2>0
    \end{aligned}
    $$

    For equation \ref{eq:br2_op}, we have
    $$
        \left(\frac{\partial p_2}{\partial p_1}\right)_{br2}=\frac{2}{3}-\frac{1}{6}\frac{1}{\sqrt{\Delta}}\left[8p_1+2(1+r)\right]\leq \frac{2}{3}
    $$
   
    When $0\leq r\leq \Bar{r}(a)$, we have $2\sqrt{\Delta}\geq 4p_1+1+r$, so we have $\left(\frac{\partial p_2}{\partial p_1}\right)_{br2}\geq 0$, and $\left(\frac{\partial p_2}{\partial p_1}\right)_{br2}= 0$ if and only if $r=\Bar{r}(a)$. 
    When $\bar{r}(a)<r\leq 1-a$, we have $\left(\frac{\partial p_2}{\partial p_1}\right)_{br2}< 0$.
    So we have $\left(\frac{\partial p_2}{\partial p_1}\right)_{br1}>\left(\frac{\partial p_2}{\partial p_1}\right)_{br2}$ for $r\in[0,1-a]$.

    Secondly, we'll show that for $p_1\in \left[0,\frac{1-r}{2}\right]$, there is a unique equilibrium. 
    
    1) When $p_1=0$, we can show that $p_1^{-1}(0)< 0$ for equation \ref{eq:br1_op} and $p_2(0)>0$ for equation \ref{eq:br2_op}. 
    For equation \ref{eq:br1_op}, because $r\leq 1-a <\frac{1+(1-a)^2}{2}$, $\frac{1}{4}a^2-\frac{1}{2}a+\frac{1}{2}-\frac{r}{2}>0$. 
    So we have $p_1^{-1}(0)<p_1^{-1}(\frac{1}{4}a^2-\frac{1}{2}a+\frac{1}{2}-\frac{r}{2})=0$. 
    Because equation \ref{eq:br2_op} is the smaller solution of its F.O.C., $\frac{3}{2}p_2^2+(-2-2p_1+r)p_2+a-\frac{a^2}{2}+p_1(1-r)=0$, $p_2(0)>0$ if and only if the intercept of the F.O.C.'s left hand side is positive when $p_1=0$ , i.e. $a-\frac{a^2}{2}>0$, which always holds when $a\in (1/2,1)$.

    2) When $p_1=\frac{1-r}{2}$, we need to show that $p_1^{-1}(\frac{1-r}{2})\geq p_2(\frac{1-r}{2})$. For equation \ref{eq:br1_op}, $p_1^{-1}(\frac{1-r}{2})=a$. Because $a>\frac{1}{2}>\frac{1-r}{2}$, we only to prove $\frac{1-r}{2}\geq p_2(\frac{1-r}{2})$, which will be shown in the following part.

    Because $p_2(p_1)$ and $p_1^{-1}(p_1)$ are continuous, and $p_2(0)>p_1^{-1}(0)$, $p_2(\frac{1-r}{2})<p_1^{-1}(\frac{1-r}{2})$, there is a pair of $(p_1^*,p_2^*)$ satisfying $p_2^*=p_2(p_1^*)=p_1^{-1}(p_1^*),p_1^*=p_1(p_2^*)$. 
    So this pair of $(p_1^*,p_2^*)$ is the equilibrium. 
    And $\left(\frac{\partial p_2}{\partial p_1}\right)_{br1}>\left(\frac{\partial p_2}{\partial p_1}\right)_{br2}$ guarantees the uniqueness.

    Thirdly, we''ll show that the unique equilibrium satisfies $p_2^*\geq 0$:

    We can calculate that $p_2(0)=\frac{2-r-\sqrt{3a^2-6a+(r-2)^2}}{3}>0$.
    Recall the F.O.C of equation \ref{eq:br2_op}, $\frac{3}{2}p_2^2+(-2-2p_1+r)p_2+a-\frac{a^2}{2}+p_1(1-r)=0$, set $p_1=\frac{1-r}{2}$, we have $a-\frac{a^2}{2}+\frac{1-r}{2}(1-r)=a>0$.
    Therefore, we have $p_2(\frac{1-r}{2})>0$.
    Then $p_2^*\geq \min\{p_2(0),p_2(\frac{1-r}{2})\}>0.$

    Thirdly, we'll show that the unique equilibrium satisfies $p_2^*\in[0,\frac{1-r}{2}]$ when $r\in[0,\bar{r}(a)].$

    1) Because $p_2(0)>0$ and $\left(\frac{\partial p_2}{\partial p_1}\right)_{br2}\geq 0$, $p_2^*=p_2(p_1^*)\geq p_2(0)> 0$.

    2) Because $\left(\frac{\partial p_2}{\partial p_1}\right)_{br2}\geq 0$, $p_2^*= p_2(p_1^*)\leq p_2(\frac{1-r}{2})$, so we just need to show $p_2(\frac{1-r}{2})\leq\frac{1-r}{2}$. 
    Recall the F.O.C of equation \ref{eq:br2_op}, $\frac{3}{2}p_2^2+(-2-2p_1+r)p_2+a-\frac{a^2}{2}+p_1(1-r)=0$, we just need to prove for $p_1=\frac{1-r}{2}$ and $p_2=\frac{1-r}{2}$, the F.O.C.'s left hand side is non-positive. The left hand side of the F.O.C. is :
    $$
    \begin{aligned}
        &\frac{3}{2}\frac{(1-r)^2}{4}+[r-2-(1-r)]\frac{1-r}{2}+a-\frac{a^2}{2}+\frac{(1-r)^2}{2}\\
        =&\frac{1}{8}(-r^2+6r)+a-\frac{a^2}{2}-\frac{5}{8}
    \end{aligned}
    $$
    For $r\in [0,\Bar{r}(a)]$, we have $-r^2+6r\leq 5+4(a^2-2a)$, which means
    $$
        \frac{1}{8}(-r^2+6r)+a-\frac{a^2}{2}-\frac{5}{8}\leq 0
    $$
    So we have shown that for $p_1=\frac{1-r}{2}, p_2=\frac{1-r}{2}$, the F.O.C.'s left hand side is non-positive, which means $p_2(\frac{1-r}{2})\leq\frac{1-r}{2}$.
    
    The above three parts consists of the whole proof.

    As we mentioned before, the following conditions must hold: $p_2^*\leq a$ and $a-p_2^*+p_1^*\leq 1$. The first condition is obvious, because $p_2^*\leq \frac{1-r}{2}\leq \frac{1}{2}<a$. 
    We'll show the second condition after characterizing the equilibrium price $(p_1^*,p_2^*)$ in Appendix \ref{relationship_between_prices}.
\end{proof}

\subsection{Relationship between Prices}\label{relationship_between_prices}

    \textbf{Lemma: Characterization of Best Response.}

    For the prominent firm, an increase in return costs $r$ will lead to a decrease in the best response price, regardless of the price chosen by the competitor.
    
    For the non-prominent firm, if the competitor chooses a higher price ($p_1 \geq \underline{p}$), an increase in return costs $r$ will decrease the best response price. 
    Conversely, if the competitor chooses a lower price ($p_1 < \underline{p}$), an increase in return costs $r$ will increase the best response price.

    Here, $\underline{p} = -1 + \sqrt{-a^2 + 2a + 1}$.

    \begin{proof}

    For the first part, it's easy to show that $\frac{\partial p_1}{\partial r}=-\frac{1}{2}<0$ from equation \ref{eq:br1_op}.

    For the second part, from equation \ref{eq:br2_op}, when $p_1>\underline{p}=-1+\sqrt{-a^2+2a+1}$, we have $\frac{\partial p_2}{\partial r}=\frac{1}{3}\left[-1-\frac{p_1+r-2}{\sqrt{\Delta}}\right]<0$. When $p_1<-1+\sqrt{-a^2+2a+1}$, we have $\frac{\partial p_2}{\partial r}>0$.
    \end{proof}

    \textbf{Proposition: Relationship between Prices}

        1. If the return cost is relatively low, i.e., $0 \leq r < \bar{r}_p(a)$, the prominent firm will offer a higher price, i.e., $p_1^* > p_2^*$. 
        
        2. If the return cost is relatively high, i.e., $\bar{r}_p(a) \leq r \leq 1-a$, the prominent firm will offer a lower price, i.e., $p_1^* < p_2^*$. 
        At this point, the search order can be achieved solely by consumer rationality, without the need for platform-imposed sorting.
        
        Where $\bar{r}_p(a) = (1-a)^2$.   

    \begin{proof}
            Firstly, for both best responses, let $p_1=p_2=p$. 
        
            For equation \ref{eq:br1_op}, we can get $\Bar{p}=-1+\sqrt{(1-a)^2+2(1-r)}$. 
            Because $r\leq 1-a<\frac{1+(1-a)^2}{2}$, we have $\Bar{p}>0$. If $0\leq p_1<\Bar{p}$, we have $p_2<p_1$ due to $\left(\frac{\partial p_2}{\partial p_1}\right)_{br1}\geq 2>1$. If $\Bar{p}<p_1\leq \frac{1-r}{2}$, we have $p_2>p_1$.
        
            For equation \ref{eq:br2_op}, we can get $\underline{p}=-1+\sqrt{-a^2+2a+1}$. Because $a\in(1/2,1)$, we have $\underline{p}>0$. Due to $\left(\frac{\partial p_2}{\partial p_1}\right)_{br2}\leq \frac{2}{3}<1$, if $0\leq p_1<\underline{p}$, we have $p_2>p_1$. If $\underline{p}<p_1\leq \frac{1-r}{2}$, we have $p_2<p_1$.
        
            If $0\leq r <\Bar{r_p}(a)=(1-a)^2$, we have $\Bar{p}>\underline{p}$, one can check that the only one possible $(p_1^*,p_2^*)$ satisfies $\underline{p}<p_1^*<\Bar{p}$. 
            From Subsection \ref{existence_and_uniqueness_of_equilibrium_op}, $(p_1^*,p_2^*)$ is equilibrium and we have $p_1^*>p_2^*$.
        
            If $1-a\geq r >\Bar{r_p}(a)$, we have $\Bar{p}<\underline{p}$, one can check that the only one possible $(p_1^*,p_2^*)$ satisfies $\Bar{p}<p_1^*<\underline{p}$. 
            From Subsection \ref{existence_and_uniqueness_of_equilibrium_op}, $(p_1^*,p_2^*)$ is equilibrium and we have $p_1^*<p_2^*$.
        
            If $r =\Bar{r_p}(a)$, we have $\Bar{p}=\underline{p}$, one can check that the only one possible $(p_1^*,p_2^*)$ satisfies $\Bar{p}=p_1^*=\underline{p}=-1+\sqrt{-a^2+2a+1}$. 
            From Subsection \ref{existence_and_uniqueness_of_equilibrium_op}, $(p_1^*,p_2^*)$ is equilibrium and we have $p_1^*=p_2^*=-1+\sqrt{-a^2+2a+1}$.
    \end{proof}

            \textbf{The proof of $a-p_2^*+p_1^*\leq 1$}
        
             If $1-a\geq r \geq\Bar{r_p}(a)$, we have $p_1^*<p_2^*$ and it's obvious that $a-p_2^*+p_1^*\leq a\leq 1$.
        If $0\leq r <\Bar{r_p}(a)$, we have $p_1^*>p_2^*$ and $\underline{p}<p_1^*<\Bar{p}$. 
        Because $\frac{\partial p_2}{\partial p_1}\geq 0$ for both best response, we have $\underline{p}<p_2^*<\Bar{p}$.
        
        Recall best response \ref{eq:br1_op}, we have:
        $$
        \begin{aligned}
            p_1^*-p_2^*&=-\frac{p_2^{*2}}{4}-\frac{p_2^*}{2}+\frac{1}{2}-\frac{a}{2}+\frac{a^2}{4}-\frac{r}{2}\\
            &-\frac{1}{4}(1+p_2^*)^2+\frac{1}{4}(1-a)^2+\frac{1-r}{2}\\
            &\leq -\frac{1}{4}(-a^2+2a+1)+\frac{1}{4}(1-a)^2+\frac{1-r}{2}\\
            &=\frac{(1-a)^2}{2}\leq 1-a
        \end{aligned}
        $$
        
        We have $a-p_2^*+p_1^*\leq 1$.

        \subsection{The Impact of Return Costs on Prices}\label{impact_of_return_costs_on_prices}
            1. For $r \in [0,\hat{r}(a)]$, as the return cost $r$ increases, the equilibrium prices $p_i^*, i=1,2$ of the firms decrease. 
            Here $\bar{r}_p(a) < \hat{r}(a) < \bar{r}(a)$.\\
            2. For $\bar{r}(a)<r\leq 1-a$, $p_1^*$ will still decrease with $r$ increasing and $p_2^*$ will always increase with $r$ increasing.   

        \begin{proof}
            \textbf{For the first part.}  
            Firstly, Claim: If $r$ increases, $p_1^*$ will decrease.
                    
                    Otherwise, if $r$ increases for $\delta$, $p_1^*$ increases for $\delta_1$. 
                    Recall equation \ref{eq:br1_op}, $p_2^*$ must increase for $\delta_2>2\left[\delta_1+\frac{\delta}{2}\right]=2\delta_1+\delta$. 
                    Recall equation \ref{eq:br2_op}, $p_2^*$ must increase for $\delta_2' <\frac{2}{3}\delta_1+\frac{\partial p_2}{\partial r}\delta$. If we can prove $2\delta_1+\delta>\frac{2}{3}\delta_1+\frac{\partial p_2}{\partial r}\delta$, we get the contradiction. In fact, we'll show that $\frac{\partial p_2}{\partial r}=\frac{1}{3}[-1+\frac{2-r-p_1}{\sqrt{\Delta}}]<1$, which means 
                    $$
                    63p_1^2+(36-30r)p_1+48a^2-96a+15(r-2)^2\geq 0
                    $$

                    Because $r<\bar{r}(a)<1-a$, we only need to prove $48a^2-96a+15(r-2)^2=3(16a^2-32a+5(r-2)^2)\geq 0$.
                    $$
                    \begin{aligned}
                        &16a^2-32a+5(r-2)^2\geq 16a^2-32a+5(-1+2\sqrt{-a^2+2a+1})^2\\
                        &=-4a^2+8a+25-20\sqrt{-a^2+2a+1}\\
                        &=4A+25-20\sqrt{A+1}
                    \end{aligned}
                    $$
            
                    where $A=-a^2+2a\in(\frac{3}{4},1)$.
            
                    The Derivative of $4A+25-20\sqrt{A+1}$ is $4-\frac{10}{\sqrt{A+1}}\leq 4-6\sqrt{2}<0$.
            
                    So $4A+25-20\sqrt{A+1}>29-20\sqrt{2}>0$.
                    So we get the contradiction and we have proved that if $r$ increase, $p_1^*$ will decrease.

                    Secondly, prove $\frac{d p_2^*}{d r}<0$.
            
                    Recall equation \ref{eq:br1_op}, we have 
                    $$
                    \frac{d p_1^*}{d r}=\frac{1}{2}(-p_2^*+1)\frac{d p_2^*}{d r}-\frac{1}{2}
                    $$
            
                    For $p_2^*$, we have:
                    $$
                    \frac{d p_2^*}{d r}=\frac{\partial p_2^*}{\partial r}+\frac{\partial p_2^*}{\partial p_1^*}\frac{d p_1^*}{d r}
                    $$
            
                    Because $\frac{\partial p_2}{\partial r}=\frac{1}{3}[-1+\frac{2-r-p_1}{\sqrt{\Delta}}]$ and $ \left(\frac{\partial p_2}{\partial p_1}\right)_{br2}=\frac{2}{3}-\frac{1}{6}\frac{1}{\sqrt{\Delta}}\left[8p_1+2(1+r)\right]$, we can get:
                    $$
                    \frac{d p_2^*}{d r}(4\sqrt{\Delta}+2p_2^*\sqrt{\Delta}+4p_1^*+1+r-4p_1^*p_2^*-p_2^*-rp_2^*)=-4\sqrt{\Delta}+2p_1^*-r+5
                    $$
                    
                    For the left hand side of the equation above, because $1>p_2^*$, we have:
                    $$
                    4\sqrt{\Delta}+2p_2^*\sqrt{\Delta}+4p_1^*+1+r-4p_1^*p_2^*-p_2^*-rp_2^*>4p_1^*+1+r-4p_1^*p_2^*-p_2^*-rp_2^*>0
                    $$
                    So, if the right hand side, $-4\sqrt{\Delta}+2p_1^*-r+5<0$, we have $\frac{d p_2^*}{d r}<0$. If $-4\sqrt{\Delta}+2p_1^*-r+5>0$, we have $\frac{d p_2^*}{d r}>0$.
            
                    Now we'll prove that there is $\hat{r}(a)$ satisfying $\Bar{r}_p(a)<\hat{r}(a)< \Bar{r}(a)$.
                    The condition $-4\sqrt{\Delta}+2p_1^*-r+5<0$ equals to $0\leq r< \hat{r}(a)$. Simplify this inequality above, we have:
                    $$
                    15r^2+(36p_1^*-54)r+60{p_1^*}^2+12p_1^*+48a^2-96a+39>0
                    $$
                    For LHS, take the derivative of $r$, because $p_1^*\leq \frac{1-r}{2}$, we have:
                    $$
                    \begin{aligned}
                        &30r+36p_1^*-54+(36r+120p_1^*+12)\frac{d p_1^*}{d r}\\
                        <&-18+12r+(36r+120p_1^*+12)\frac{d p_1^*}{d r}<0
                    \end{aligned}
                    $$
                    When $r=\Bar{r}_p(a)=(1-a)^2$, 
                    $$
                    \begin{aligned}
                        &15r^2+(36p_1^*-54)r+60{p_1^*}^2+12p_1^*+48a^2-96a+39\\
                        &=15(1-a)^4+(36p_1^*-54)(1-a)^2+60{p_1^*}^2+12p_1+48a^2-96a+39\\
                        &=15(1-a)^4+(36\sqrt{-a^2+2a+1}-42)(1-a)^2+60(\sqrt{-a^2+2a+1}-1)^2+12(\sqrt{-a^2+2a+1}-1)-9\\
                        &>60(\frac{\sqrt{7}}{2}-1)^2+12(\frac{\sqrt{7}}{2}-1)-9>0
                    \end{aligned}
                    $$
                    
                    So, there is an upper bound $\hat{r}(a)>\Bar{r}_p(a)$, and the condition $-4\sqrt{\Delta}+2p_1^*-r+5<0$ holds if and only if $0\leq r<\hat{r}(a)$.
                    Furthermore, because $0\leq p_1^*\leq \frac{1-r}{2}$, we have $2p_1^*+\frac{1}{2}+\frac{r}{2}\leq \frac{p_1^*}{2}-\frac{r}{4}+\frac{5}{4}$. 
                    So if $0\leq r<\hat{r}(a)$, which means $\frac{p_1^*}{2}-\frac{r}{4}+\frac{5}{4}< \sqrt{\Delta}$, we always have $2p_1^*+\frac{1}{2}+\frac{r}{2}< \sqrt{\Delta}$, which means $0\leq r<\Bar{r}(a)$. 
                    So we can get $\Bar{r}_p(a)<\hat{r}(a)<\Bar{r}(a)$.

                    \textbf{For the second part.}  
                    For $p_1^*$, the proof is the same with the first part.
                    For $p_2^*$, we have:
                $$
                \begin{aligned}
                    \frac{d p_2^*}{d r}=(\frac{\partial p_2^*}{\partial r})_{br2}+(\frac{\partial p_2^*}{\partial p_1^*})_{br2}\frac{d p_1^*}{d r}
                \end{aligned}
                $$

                We've known that $(\frac{\partial p_2^*}{\partial r})_{br2}>0$ and $(\frac{\partial p_2^*}{\partial p_1^*})_{br2}<0$ when $1-a\geq r>\Bar{r}(a)>\Bar{r}_p(a)$. So it's obvious that $\frac{d p_2^*}{d r}>0$.
                \end{proof}

    \subsection{Weak Disadvantage of Prominence}\label{Weak Disadvantage of Prominence}

    \textbf{Proposition: Weak Disadvantage of Prominence}

    When search costs and return costs are within a reasonable range, i.e., \( s \in [1/50,1/8) \), \( r \in [0,{\hat{r}}(a)] \), the profit advantage of the prominent firm will decrease as the return cost \( r \) increases.

    \begin{proof}
     $$
        \begin{aligned}
            \frac{d (\pi_1^*-\pi_2^*)}{d r}&=\frac{d \pi_1^*}{d r}-\frac{d \pi_2^*}{d r}\\
            &=(p_1^*+r-p_1^*p_2^*-rp_2^*)\frac{d p_2^*}{d r}+(p_2^{*2}+rp_2^*-p_2^*)\frac{d p_1^*}{d r}-a+a^2+p_2^*-p_1^*+p_1^*p_2^*-p_2^{*2}
        \end{aligned}
        $$
        Recall equation \ref{eq:br1_op}, take the derivative of $r$ for both sides:
        $$
        \frac{d p_1^*}{d r}=\frac{1}{2}(1-p_2^*)\frac{d p_2^*}{d r}-\frac{1}{2}
        $$
    We can get
    $$
    \begin{aligned}
        \frac{d (\pi_1^*-\pi_2^*)}{d r}&= \left[-\frac{1}{2}p_2^{*3}+p_2^{*2}\left(1-\frac{r}{2}\right)-p_2^*\left(p_1^*+\frac{r}{2}+\frac{1}{2}\right)+p_1^*+r  \right]\frac{d p_2^*}{d r}\\
        &-\frac{3}{2}p_2^{*2}+\left(\frac{3}{2}-\frac{r}{2}+p_1^*\right)p_2^*-a+a^2-p_1^*
    \end{aligned}
    $$

    Recall the equation \ref{eq:br1_op}, we have:
    $$
    \begin{aligned}
        A(p_2^*)=&-\frac{1}{2}p_2^{*3}+p_2^{*2}\left(1-\frac{r}{2}\right)-p_2^*\left(p_1^*+\frac{r}{2}+\frac{1}{2}\right)+p_1^*+r\\
        &=-\frac{1}{4}p_2^{*3}+\left(\frac{1}{4}-\frac{r}{2}\right)p_2^{*2}+\left(-\frac{1}{4}a^2+\frac{a}{2}-\frac{1}{2}\right)p_2^*+\frac{a^2}{4}-\frac{a}{2}+\frac{1}{2}+\frac{r}{2}
    \end{aligned}
    $$

    Take the derivative of $p_2^*$, we have 
    $$
    \begin{aligned}
        &-\frac{3}{4}p_2^{*2}+\left(\frac{1}{2}-r\right)p_2^*-\frac{1}{4}a^2+\frac{a}{2}-\frac{1}{2}\\
        &\leq -\frac{3}{4}\frac{(1-2r)^2}{9}+\left(\frac{1}{2}-r\right)\frac{1-2r}{3}-\frac{1}{4}a^2+\frac{a}{2}-\frac{1}{2}\\
        &= \frac{1}{12}(1-2r)^2-\frac{1}{4}(a-1)^2-\frac{1}{4}\\
        &\leq\frac{1}{12}-\frac{1}{4}<0 
    \end{aligned}
    $$

    So if $p_2^*$ increase, $A(p_2^*)$ will decrease, which means
    $$
    A(p_2^*)>A(1)=0.
    $$

    Recall the equation \ref{eq:br1_op}, we have:
    $$
    \begin{aligned}
        B(p_2^*)=&-\frac{3}{2}p_2^{*2}+\left(\frac{3}{2}-\frac{r}{2}+p_1^*\right)p_2^*-a+a^2-p_1^*\\
        &=-\frac{1}{4}p_2^{*3}-\frac{3}{4}p_2^{*2}+\left(\frac{1}{4}a^2-\frac{1}{2}a-r+\frac{3}{2}\right)p_2^*+\frac{3}{4}a^2-\frac{a}{2}+\frac{r}{2}-\frac{1}{2}
    \end{aligned}
    $$
    Take the derivative of $p_2^*$, because $p_2^*\in[0,\frac{1}{2}],r\in[0,\hat{r}(a)]\subset [0,\bar{r}(a)]$, and set $A=-a^2+2a+1\in(\frac{7}{4},2)$, we have
    $$
    \begin{aligned}
        &-\frac{3}{4}p_2^{*2}-\frac{3}{2}p_2^*+\frac{1}{4}a^2-\frac{1}{2}a-r+\frac{3}{2}\\
        &\geq \frac{13}{16}-\frac{1}{4}(-a^2+2a+1)-3+2\sqrt{-a^2+2a+1}\\
        &=-\frac{1}{4}A+2\sqrt{A}-\frac{35}{16}
    \end{aligned}
    $$
    Take the derivative of $A$, we have
    $$
    \begin{aligned}
        -\frac{1}{4}+\frac{1}{\sqrt{A}}>0
    \end{aligned}
    $$
    So $-\frac{1}{4}A+2\sqrt{A}-\frac{35}{16}>-\frac{21}{8}+\sqrt{7}>0$.

    And we can get
    $$
    \begin{aligned}
        B(p_2^*)<B(1/2)=\frac{7}{8}a^2-\frac{3}{4}a+\frac{1}{32}
    \end{aligned}
    $$
    The RHS is increasing with $a$. If $a\in(\frac{1}{2},\frac{4}{5}]$, which means $s\in[\frac{1}{50},\frac{1}{8})$, we have $\frac{7}{8}a^2-\frac{3}{4}a+\frac{1}{32}<0$.

    So, $\frac{d (\pi_1^*-\pi_2^*)}{d r}=A(p_2^*)\frac{d p_2^*}{d r}+B(p_2^*)<0, r\in[0,\hat{r}(a)].$ 
\end{proof}

    \subsection{Firms' Profit}\label{firms_profit}
    \textbf{Propostion: Firms' Profit}

        For \( r \in [0, \hat{r}(a)] \), as the return cost \( r \) increases, both firms' profit will decrease.

    \begin{proof}
        Recall that:
        $$
        \begin{aligned}
            \pi_1&=p_1(1-a+p_2-p_1)+p_1\cdot r_1-r(a-p_2+p_1)(1-a)-r\cdot r_2-rp_1p_2 \\
            \pi_2&=p_2(a-p_2+p_1)(1-a)+p_2\cdot r_2-r\cdot r_1-r\cdot p_1\cdot p_2 
        \end{aligned}
        $$
        
        We have:
        $$
        \begin{aligned}
            \frac{d \pi_1^*}{d r}=  \frac{\partial \pi_1^*}{\partial p_2^*}\cdot \frac{d p_2^*}{d r}+\left[-(a-p_2+p_1)(1-a)-\frac{1}{2}(a^2-p_2^2)-(p_1-p_2)(a-p_2)-p_1p_2 \right]
        \end{aligned}
        $$
        Because
        $$
        \begin{aligned}
            &\frac{\partial \pi_1}{\partial p_2}=p_1-p_1p_2+r(1-a)+rp_2+r(a-p_2)+r(p_1-p_2)-rp_1\\
            &=p_1-p_1p_2+r(1-p_2)\\
            &=(p_1+r)(1-p_2)>0
        \end{aligned}
        $$
        and $-(a-p_2+p_1)(1-a)<0, -p_1p_2<0,-\frac{1}{2}(a^2-p_2^2)-(p_1-p_2)(a-p_2)=-(\frac{1}{2}a+\frac{1}{2}p_2+p_1-p_2)(a-p_2)<0$. 
        And if $0\leq r\leq \hat{r}(a)$, $\frac{d p_2^*}{d r}<0$. 
        So we have
        $\frac{d \pi_1^*}{d r}<0$ if $0\leq r\leq \hat{r}(a)$.
        
        We can also have:
        $$
        \begin{aligned}
            \frac{d \pi_2^*}{d r}=\frac{\partial \pi_2^*}{\partial p_1^*}\cdot \frac{d p_1^*}{d r}+\left[-\frac{1}{2}(a^2-p_2^{*2})-p_1^*p_2^*\right]
        \end{aligned}
        $$
        Because $\frac{\partial \pi_2^*}{\partial p_1^*}= p_2^*\cdot (1-p_2^*-r)$ and $p_2^*<\frac{1-r}{2}<a$, we have $\frac{\partial \pi_2^*}{\partial p_1^*}>0$ and $-\frac{1}{2}(a^2-p_2^{*2})<0$. 
        So we have $\frac{d \pi_2^*}{d r}<0$ if $0\leq r\leq \hat{r}(a)< \Bar{r}(a)$. 
        
    \end{proof}

\end{document}